\documentclass[pdflatex,sn-mathphys-ay,iicol]{sn-jnl}


\usepackage{graphicx}%
\usepackage{multirow}%
\usepackage{amsmath,amssymb,amsfonts}%
\usepackage{amsthm}%
\usepackage{mathrsfs}%
\usepackage[title]{appendix}%
\usepackage{xcolor}%
\usepackage{textcomp}%
\usepackage{manyfoot}%
\usepackage{booktabs}%
\usepackage{algorithm}%
\usepackage{algorithmicx}%
\usepackage{algpseudocode}%
\usepackage{listings}%
\usepackage{dsfont}
\usepackage{mathtools}
\usepackage{float}
\usepackage{url}
\usepackage{mathabx}
\usepackage{subcaption}
\usepackage{kantlipsum,cuted}
\usepackage{tikz}
\usetikzlibrary{positioning, arrows.meta}
\usetikzlibrary{calc}
\usepackage{accents}
\usepackage{caption} 
\usepackage{siunitx} 
\usepackage{rotating}
\usepackage{soul}

\newcommand{\rme}{\mathrm{e}}
\newcommand{\rms}{\mathrm{s}}
\newcommand{\BF}[1]{\mathbf{#1}}
\newcommand{\BS}[1]{\boldsymbol{#1}}
\newcommand{\bX}{\BF{X}}
\newcommand{\bZ}{\BF{Z}}

\newcommand{\bSigma}{\BS{\Sigma}}
\newcommand{\bOmega}{\BS{\Omega}}
\newcommand{\balpha}{\BS{\alpha}}

\newcommand{\bE}{\BF{E}}
\newcommand{\bm}{\BF{m}}
\newcommand{\bS}{\BF{S}}

\newcommand{\bC}{\BF{C}}

\newcommand{\bV}{\BF{V}}

\newcommand{\bP}{\BF{P}}

\newcommand{\bU}{\BF{U}}

\newcommand{\offset}{\mathrm{O}}

\newcommand{\rmm}{\mathrm{m}}
\newcommand{\bphi}{\BS{\varphi}}

\newcommand{\btheta}{\BS{\theta}}

\newcommand{\bmu}{\BS{\mu}}

\newcommand{\softmax}[1]{\sigma (#1)}
\newcommand{\rmX}{\mathrm{X}}
\newcommand{\rmZ}{\mathrm{Z}}
\newcommand{\rmV}{\mathrm{V}}
\newcommand{\rmY}{\mathrm{Y}}
\newcommand{\mgf}{\mathrm{M}}
\newcommand{\vect}[1]{\mathrm{Vect}\left(#1\right)}

\newcommand{\cholesky}{\BF{L}}

\newcommand{\RR}{\mathds{R}}
\newcommand{\NN}{\mathds{N}}
\newcommand{\PP}{\mathds{P}}
\newcommand{\KL}[2]{\mathrm{D}_{\mathrm{KL}}\left[#1 \Vert #2\right]}

\newcommand{\indicator}[1]{\mathds{1}_{#1}}

\newcommand{\entropy}{\mathrm{H}}

\newcommand{\multinomial}[2]{\mathcal{M}\left(#1, #2\right)}

\newcommand{\gaussian}[2]{\mathcal{N}\left(#1, #2\right)}
\newcommand{\gaussianat}[3]{\mathcal{N}\left(#1;#2, #3\right)}

\newcommand{\poisson}[1]{\mathcal{P}(#1)}

\newcommand{\eqsp}{\;}

\newcommand{\diag}{\mathrm{diag}}
\newcommand{\identity}{\mathbf{I}}
\newcommand{\EE}[2]{\mathds{E}_{#1}\left[#2\right]}

\newcommand{\trace}{\mathrm{tr}}

\newcommand{\Sigmahat}{\widehat{\bSigma}}
\newcommand{\simplex}[1]{\mathcal{S}^{#1}}

\newcommand{\taxonomy}{\mathcal{T}}
\newcommand{\child}[3]{\widecheck{\BF{#1}}{}^{#3}_{#2}}
\newcommand{\childindex}[2]{\mathcal{C}_{#2}^{#1}}
\newcommand{\parent}[3]{\widehat{#1}^{#3}_{#2}}

\newcommand{\ELBO}{\mathcal{L}}
\newcommand{\kk}[1]{k}
\newcommand{\pkk}[1]{k}
\newcommand{\jj}[1]{j}
\newcommand{\vv}[1]{v}
\newcommand{\distrib}[1]{{p_{\btheta,{#1}}}}
\newcommand{\posterior}[1]{{p_{\btheta,{#1}}}}
\newcommand{\prior}[1]{{p_{\btheta,{#1}}}}
\newcommand{\prioroffset}{{\distrib{\offset}{}}}
\newcommand{\varapprox}[1]{{q_{\bphi,{#1}}}}
\newcommand{\varapproxZ}[1]{{q_{\bphi,{#1}}^{\bZ}}}

\newcommand{\varapproxoffset}{{q_{\bphi}^{\offset}}}

\newcounter{hypH}
\newenvironment{hypH}{%
    \refstepcounter{hypH} 
    \noindent 
    (\textbf{H\arabic{hypH}}) 
}{%
    \par 
}

\newtheorem{theorem}{Theorem}
\newtheorem{proposition}{Proposition}
\newtheorem{corollary}[proposition]{Corollary}
\newtheorem{lemma}[proposition]{Lemma}

\raggedbottom

\begin{document}

\title[Article Title]{Tree-based variational inference for Poisson log-normal
models}


\author*[1]{\fnm{Alexandre} \sur{Chaussard}}\email{alexandre.chaussard@sorbonne-universite.fr}

\author[1]{\fnm{Anna} \sur{Bonnet}}\email{anna.bonnet@sorbonne-universite.fr}

\author[2]{\fnm{Elisabeth} \sur{Gassiat}}\email{elisabeth.gassiat@universite-paris-saclay.fr}

\author[1]{\fnm{Sylvain} \sur{Le Corff}}\email{sylvain.le\_corff@sorbonne-universite.fr}

\affil*[1]{\orgdiv{CNRS, Laboratoire de Probabilit\'es, Statistique et Mod\'elisation, LPSM}, \orgname{Sorbonne Universit\'e, F-75005 Paris, France}}

\affil[2]{\orgdiv{CNRS, Laboratoire de Math\'ematiques d'Orsay, LMO}, \orgname{Universit\'e Paris-Saclay, Orsay, France}}

\abstract{When studying ecosystems, hierarchical trees are often used to organize entities based on proximity criteria, such as the taxonomy in microbiology, social classes in geography, or product types in retail businesses, offering valuable insights into entity relationships. Despite their significance, current count-data models do not leverage this structured information. In particular, the widely used Poisson log-normal (PLN) model, known for its ability to model interactions between entities from count data, lacks the possibility to incorporate such hierarchical tree structures, limiting its applicability in domains characterized by such complexities. To address this matter, we introduce the PLN-Tree model as an extension of the PLN model, specifically designed for modeling hierarchical count data. By integrating structured variational inference techniques, we propose an adapted training procedure and establish identifiability results, enhancing both theoretical foundations and practical interpretability. Experiments on synthetic datasets and human gut microbiome data highlight generative improvements when using PLN-Tree, demonstrating the practical interest of knowledge graphs like 
the taxonomy in microbiome modeling. Additionally, we present a proof-of-concept implication of the identifiability results by illustrating the practical benefits of using identifiable features for classification tasks, showcasing the versatility of the framework.}

\keywords{Hierarchical count data, Poisson log-normal, Structured variational inference, Deep generative models, Identifiability, Microbiome}



\maketitle

\section{Introduction}
\label{sec:intro}
Count data appear in various domains, such as ecology, metagenomics, retail, actuarial sciences, and social sciences. Their discrete nature and potential for overdispersion present challenges that standard Gaussian-based methods are not adequately designed to address. While some approaches rely on ill-defined log-transformations to mitigate these issues before applying standard Gaussian models \citep{sparcc,spiec}, specialized modeling approaches are generally preferred for their robust statistical groundings \citep{o2010not}. Consequently, a range of probabilistic models has been developed to capture the intrinsic properties of count data \citep{hilbe2014modeling,inouye2017review}. Among these, the Poisson Log-Normal (PLN) model, originally introduced by \cite{aitchison1989multivariate} and developed by \cite{chiquet_pln}, has proven particularly effective. By embedding Gaussian latent variables within a Poisson framework, PLN models naturally account for overdispersion while providing an interaction network between entities, thereby offering a statistically sound and interpretable approach for count data analysis.

Besides, count data often exhibit hierarchical structures where observations are organized in a tree graph reflecting compositional relationships between entities at different levels of the hierarchy, like the taxonomy in ecology, the social classes in geography, or product types in marketing. In cases where no natural hierarchical structure is established in the domain, or when alternative clustering insights are desired, practitioners often employ tree-inference approaches \citep{come2021hierarchical,blei2003latent,teh2004sharing,momal2020} to organize and describe entities in a comprehensible graph that incorporates domain-specific knowledge. In various applications, incorporating relevant hierarchical structures has been considered to enhance statistical models, resulting in improved performances in most cases \citep{silverman2017phylogenetic,crawford2020incorporating,oliver2023taxahfe,jiang2025phylomix}. However, adhering strictly to predefined hierarchical structures can sometimes hinder model performance, as shown by \cite{bichat2020incorporating} in the context of controlling the false discovery rate for the detection of differentially abundant microbial bacteria. This suggests the need for flexible modeling approaches that can exploit underlying tree graphs without being overly dependent on their structure. Yet, despite the potential interest of such hierarchical structures for multivariate counts modeling, existing models like PLN do not explicitly account for them, limiting their applicability in scenarios where hierarchical dependencies play a crucial role. 

To address this limitation, we introduce the PLN-Tree model, an extension of the PLN framework tailored to handle hierarchical count data represented by tree graphs. The PLN-Tree model leverages a top-down hidden Markov tree structure to capture hierarchical dependencies among counts, enabling more accurate and interpretable modeling of count data in hierarchical settings. While the observed counts are controlled by the underlying hierarchical structure in the PLN-Tree framework, the model maintains flexibility through a latent Markov chain to parameterize the counts which is not confined to the tree structure. Like its PLN parent, learning PLN-Tree models via maximum likelihood estimation is intractable, but this challenge can be circumvented using variational inference techniques \citep{blei2017variational}. Hence, leveraging the true form of the posterior distribution, we propose a structured variational inference method based on backward Markov chains. This approach is motivated by recent work in reinforcement learning \citep{campbell2021online} and theoretical guarantees from \cite{chagneux2024additive}, which underscore the potential efficiency of these approximations. To ensure modeling flexibility and scalability, we opt for deep learning architectures by parameterizing the distributions with neural networks, allowing for efficient inference of the variational approximation using amortized backward inference. Moreover, we introduce a residual amortized backward recurrent neural network architecture to parameterize the variational approximation, which outperforms the traditional Gaussian mean-field approximation in our experiments.

To ensure the interpretability of the latent variables in practical applications, we investigate the identifiability of the proposed model. Previous works on structured models, such as \cite{gassiat2020identifiability, halva2021disentangling}, have demonstrated the ability to uniquely identify latent data models in the presence of Markov dependency structures. Thus, we establish a class of identifiability within the PLN-Tree structured framework, ensuring its applicability in demanding contexts where accurate and interpretable modeling of count data is crucial. Additionally, by leveraging features derived from our identifiability result, we suggest novel preprocessing transformations of count data, which are evaluated on a disease classification problem using human gut microbiome samples \citep{pasolli2016machine}, showing performance improvements over traditional preprocessing methods.

Finally, we perform a thorough exploration of the generative capacities of the PLN-Tree models, comparing it against state-of-the-art approaches such as PLN and SPiEC-Easi. Although numerous heuristics have been proposed in various fields \citep{sajjadi2018assessing,yang2020evaluation,betzalel2024evaluation}, assessing the performance of generative models remains inherently challenging as no universal standard exists. Thus, in our ecology-oriented context, we propose to leverage the widely used $\alpha$-diversity and $\beta$-diversity statistics—commonly employed to characterize ecosystems \citep{gotelli2001quantifying,thukral2017review}—to evaluate the generative performance of the models. Benchmarks on both synthetic data and human gut microbiome datasets from \cite{pasolli2016machine} demonstrates significant improvements in ecosystem generation with PLN-Tree when compared to traditional competitors that do not incorporate the underlying tree graph. These findings highlight the effectiveness of our model in capturing hierarchical dependencies and underscore the inherent value of taxonomic information in microbiome modeling.

The main contributions of this paper are summarized below.
\begin{itemize}
    \item We develop an extension of PLN models tailored to hierarchical count data called PLN-Tree. We introduce a variational approximation that leverages the structure of the posterior distribution, and propose a residual amortized deep neural network implementation that outperforms the traditional Gaussian mean-field. Additionally, we show that this framework can be adapted to incorporate covariates and counts offsets.
    \item We establish a class of identifiability for PLN-Tree models. The practical benefits of this result are illustrated empirically through an application to a disease classification problem using microbiome compositions. In particular, we show that identifiable latent features lead to improved performance over traditional preprocessing methods such as the centered log-ratio (CLR) transform or raw PLN features.
    \item Upon conducting extensive evaluation of the generative capabilities of our model on both synthetic and human gut microbiome datasets, we show that PLN-Tree outperforms the standard PLN model and other alternatives in reproducing the diversity of ecosystems, thereby highlighting the advantages of exploiting underlying hierarchical structures to model complex ecological systems.
\end{itemize}

This paper is organized as follows. Section~\ref{sec:background} provides background on the PLN framework and structured variational inference techniques motivating our model. Section~\ref{sec:plntree} introduces the proposed PLN-Tree models and variational training procedures. Then, Section~\ref{sec:identifiability} displays the identifiability results for tree-based PLN models, and introduces the identifiable features used in our experiments. Finally, Section~\ref{sec:exp} provides synthetic and real-world applications, comparing the proposed backward variational approximation with the mean-field variant and other state-of-art interaction-based count data models like SPiEC-Easi \citep{spiec} and PLN. We namely demonstrate the practical utility of PLN-Tree models through a generative benchmark on human gut microbiome data from \cite{pasolli2016machine}.
As a proof-of-concept, we illustrate the interest of identifiable features obtained from our theoretical results in a one-vs-all disease classification problem in Section \ref{sec:data_preprocessing_benchmark}. Our implementation and experiments are freely available on our GitHub\footnote{\urlstyle{tt}\url{https://github.com/AlexandreChaussard/PLN-Tree}}.

\section{Background}
\label{sec:background}
\subsection{Notations}
\label{sec:notations}
Let $\taxonomy$ be a finite rooted tree with $L$ layers, where each layer $\ell \leq L$ comprises $K_{\ell}$ nodes. A branch contains at least one node in each layer, so that every branch has a depth equal to $L$. At layer $\ell \leq L$, the random variable associated with node $k \leq K_\ell$ is denoted by $\rmV_k^\ell$. For layer $\ell \leq L-1$ and node $k \leq K_\ell$, the vector of children of the random variable $\mathrm{V}_k^\ell$ is indexed by $\childindex{\ell}{k}$ and represented as $\child{\bV}{k}{\ell}$ $= (\rmV_j^{\ell+1})_{j \in \childindex{\ell}{k}}$. We generally denote the hierarchical counts by $\bX$ and the associated latent variables by $\bZ$. In our model, the hierarchy can be partially used to exclude all layers above a given one. Thus, the first considered level of the hierarchy is not necessarily the root of the tree and will be denoted in bold $(\bX^1, \bZ^1)$. A graphical representation is provided in Figure~ \ref{fig:hierarchical_count_data_example}.

If the distribution of a random variable $\rmV$ has a density parameterized by $\btheta$ with respect to a reference measure, it is denoted by $\distrib{\rmV}$. When there is no possible confusion, we may express the density as $p_{\btheta}(\rmV)$. If $\btheta$ is a vector, its $k$-th coordinate is denoted by $\btheta_k$, while for a diagonal matrix $\btheta$, the $k$-th diagonal term is denoted as $\btheta_k$. For a function $f_{\btheta}$ parameterized by $\btheta$ and taking values in  $\mathbb{R}^d$, $d > 0$, the $k$-th coordinate of any of its outputs is denoted by $f_{\btheta,k}$. The sequence of random variables $(\rmV^1, \dots, \rmV^L)$ is represented as $\bV^{1:L}$. For $\bV \in \mathbb{R}^d$, the exponential of $\bV$ is defined as $\exp(\bV) = (\exp(\rmV_j))_{1\leq j \leq d}$, and the multivariate Poisson distribution with parameters $\bV \in \mathbb{R}^d_{> 0}$ is denoted by $\poisson{\bV} = \otimes_{j=1}^d \poisson{\rmV_j}$. We denote by $\simplex{d}$ the simplex of dimension $d$, then if $\bV \in \simplex{d}$, we denote the multinomial distribution with total count $n$ and probabilities $\bV$ by $\multinomial{n}{\bV}$. Finally, for $\bV \in \RR^d$ we denote its projection on the simplex through the softmax transform by $\softmax{\bV} = (\rme^{\rmV_i} / \sum_{j=1}^d \rme^{\rmV_j})_{1 \leq i \leq d}$. In this paper, the proposed model is parameterized by $\btheta$, the probability density function of the latent variable is $p_{\btheta}(\bZ)$ and referred to as the prior, the conditional probability density function of the observation given the latent variable is $p_{\btheta}(\bZ \mid \bX)$ and referred to as the posterior, and the variational approximation of this posterior density is parameterized by $\bphi$ and written $q_{\bphi}(\bZ \mid \bX)$.
\begin{figure}
\centering
\scalebox{0.7}{%
    \begin{tikzpicture}[level distance=1.5cm,
      level 1/.style={sibling distance=5cm},
      level 2/.style={sibling distance=2cm},
      level 3/.style={sibling distance=1cm},
      ]
      \node[circle,draw, fill=red!15] (root) {147}
        child {node[circle,draw,fill=orange!15] (L1K1) {72}
          child {node[circle,draw,fill=yellow!15] (L2K1) {12}
            child {node[circle,draw,fill=white!15] (L3K1){3}}
            child {node[circle,draw,fill=white!15] (L3K2){9}}
          }
          child {node[circle,draw,fill=yellow!15] (L2K2) {60}
            child {node[circle,draw,fill=cyan!15] (L3K3) {60}}
            child {node[circle,draw,fill=cyan!15] (L3K4) {0}}
          }
        }
        child {node[circle,draw,fill=orange!15] (L1K2) {75}
            child {node[circle,draw,fill=blue!15] (L2K3) {42}
                child {node[circle,draw,fill=green!15] (L3K5) {12}}
                child {node[circle,draw,fill=green!15] (L3K6) {30}}
            }
            child {node[circle,draw,fill=blue!15] (L2K4) {13}
                child {node[circle,draw,fill=gray!15] (L3K7) {13}}
            }
            child {node[circle,draw,fill=blue!15] (L2K5) {20}
                child {node[circle,draw,fill=purple!15] (L3K8) {8}}
                child {node[circle,draw,fill=purple!15] (L3K9) {0}}
                child {node[circle,draw,fill=purple!15] (L3K10) {12}}
            }
        };

        \draw[->] (root) -- (L1K1);
        \draw[->] (root) -- (L1K2);
        \draw[->] (L1K1) -- (L2K1);
        \draw[->] (L1K1) -- (L2K2);
        \draw[->] (L1K2) -- (L2K3);
        \draw[->] (L1K2) -- (L2K4);
        \draw[->] (L1K2) -- (L2K5);
        \draw[->] (L2K1) -- (L3K1);
        \draw[->] (L2K1) -- (L3K2);
        \draw[->] (L2K2) -- (L3K3);
        \draw[->] (L2K2) -- (L3K4);
        \draw[->] (L2K3) -- (L3K5);
        \draw[->] (L2K3) -- (L3K6);
        \draw[->] (L2K4) -- (L3K7);
        \draw[->] (L2K5) -- (L3K8);
        \draw[->] (L2K5) -- (L3K9);
        \draw[->] (L2K5) -- (L3K10);
        
        \draw[purple,thick,dotted] ($(L1K1.north west)+(-0.3,0.3)$)  rectangle ($(L1K2.south east)+(0.3,-0.3)$);
        \draw[purple] ($(L1K2.east)+(1.5, 0)$) node {$\bX^1 = (\rmX^1_1, \rmX^1_2)$};

        \draw[blue,thick,dotted] ($(L2K3.north west)+(-0.3,0.3)$)  rectangle ($(L2K5.south east)+(0.3,-0.3)$);
        \draw[blue] ($(L2K5.east)+(0.7, 0)$) node {$\child{\bX}{2}{1}$};
        \draw[blue,thick,dotted] ($(L2K1.north west)+(-0.3,0.3)$)  rectangle ($(L2K2.south east)+(0.3,-0.3)$);
        \draw[blue] ($(L2K1.east)+(-1.4, 0)$) node {$\child{\bX}{1}{1}$};

    \end{tikzpicture}
    }
\caption{Example of a hierarchical count data with $L = 4$. Nodes of the same color are independent of the other nodes conditionally to their parent node and their respective latent variables.}
\label{fig:hierarchical_count_data_example}
\end{figure}

\subsection{Poisson log-normal models}
The Poisson-Log Normal model, introduced by \cite{aitchison1989multivariate} and thoroughly extended by \cite{chiquet_pln}, is a standard network inference model that has become popular due to its ability to handle over-dispersed count data and capture complex dependencies among variables.
In its simplest form, for a sample $i$, the PLN approach models the interactions through a Gaussian latent variable $\bZ_i \in \RR^d$, with mean $\bmu \in \RR^d$ and precision matrix $\bOmega \in \RR^{d \times d}$. The observed counts $\bX_i \in \RR^{d}$ are modeled by a Poisson distribution such that $(\bZ_i,\bX_i)_{1\leq i\leq n}$ are independent and, for $1\leq i \leq n$,  conditionally on $\bZ_i$ and $\rmX_{ik}$, $1\leq k \neq j\leq d$, $\rmX_{ij}$ depends on $\rmZ_{ij}$ only:
\begin{align*}
    \text{latent space}& \quad \bZ_i \sim \gaussian{\bmu}{\bOmega^{-1}}\eqsp, \\
    \text{counts space}& \quad \bX_i \mid \bZ_i \sim \poisson{\exp(\bZ_i)} \eqsp.
\end{align*}
In the PLN model, the precision matrix $\bOmega$ yields the interaction network, as entailed by the faithful correlation property provided in \cite{chiquet_pln}. On the other hand, the mean parameter $\bmu$ models the fixed effects in the environment, such as the natural disproportion of species in an ecosystem. Individual-related environmental effects can also be accounted for in $\bmu$ by making it a function of covariates, or by adding sampling effort information through an offset, which can have a significant impact on the faithfulness of the reconstructed network, as shown numerically in \cite{chiquet_pln_network}.

Performing maximum likelihood estimation in such latent data models is challenging as the conditional distribution of the latent variables given the observations is not tractable.
Variational estimation \citep{blei2017variational} is an appealing alternative to computationally intensive Monte Carlo methods by approximating the posterior using a family of variational distributions, yielding the Evidence Lower Bound (ELBO) as a suboptimal optimization objective \citep{kingma2019_vae}. Consequently, \cite{chiquet_pln} proposed an inference method for PLN models based on variational inference called variational Expectation Maximization (VEM), which consists in maximizing the ELBO in an alternate optimization resembling the Expectation-Maximization (EM) algorithm \citep{dempster1977maximum}, except that the posterior is replaced by its variational counterpart. In \cite{chiquet_pln_network}, the variational approximation corresponds to the Gaussian mean-field approximation, where each sample is parameterized by a unique mean and diagonal covariance matrix, unlike usual neural network parameterizations \citep{kingma2019_vae}. This specific form enables fast inference, as it yields exact maximization steps of the true parameters given the variational parameters, making the inference process highly stable, efficient, and computationally expedient. However, it affects the model scalability to larger datasets as the number of parameters increases linearly with the number of samples.

PLN models have been extended to various applications. For instance, \cite{chiquet_pln_network} tackles network inference via a sparsity-informed penalty based on the graphical LASSO \citep{graphical_lasso}. Other notable variants involve PLN-PCA (Principal Components Analysis), PLN mixtures, and PLN-LDA (Linear Discriminant Analysis), as thoroughly explored in \cite{chiquet_pln}. Although these variants can be adapted to our PLN-Tree framework, they are not explored in this paper.

\subsection{Variational inference for structured data}
\label{sec:variational_inference_structured}
As underscored in the previous section, addressing the parameter inference problem for PLN models can be achieved by leveraging variational inference techniques, which requires choosing a variational family.

In scenarios devoid of specific structural constraints, the Gaussian mean-field approximation emerges as the prevalent choice for variational families. This approach entails modeling each latent coordinate with independent Gaussian densities, offering the advantage of explicit ELBO computation when the latent prior is Gaussian. The mean-field approximation has demonstrated efficacy across various applications, such as the Poisson Log-Normal network inference model \citep{chiquet_pln_network} and in Variational Auto-Encoders (VAE) \citep{kingma2019_vae}. However, its inherent lack of expressivity and dependency modeling has encouraged the development of alternative variational families including full-covariance Gaussian models \citep{kingma2019_vae}, Gaussian mixture models with variational boosting \citep{millerfa17}, and normalizing flows within the latent space to enhance posterior expressiveness \citep{kobyzev2020normalizing}.

In this context, we set the focus to another class of variational approximations that explicitly incorporate data structures. These structured variational approximations can be formulated based on prior assumptions, as seen in approaches like NVAE \citep{vahdat2020nvae}, or by deriving insights from the posterior distribution, like auto-regressive models \citep{marino2018general} or hidden Markov models \citep{campbell2021online}. While prior-based assumptions are pertinent to methodological advancements, structuring the variational approximation based on the posterior aligns more closely with statistical principles while encouraging model interpretability \citep{arrieta2020explainable}. Notably, when the latent process follows a hidden Markov model, an enhanced variational approximation beyond the mean-field approach can be derived, as demonstrated by \cite{johnson2016composing}, further illustrated and extended in \cite{lin2018variational,halva2021disentangling,schneider2023learnable}. Our work is closely related to advancements in this area, particularly in the context of hidden Markov models, where recent studies like \cite{campbell2021online,chagneux2024additive} have highlighted the utility of backward variational inference, showcasing both empirical improvements and theoretical guarantees. In particular, \cite{chagneux2024additive} employ recurrent neural networks to amortize the backward process in their Gaussian processes illustrations, suggesting that similar architectures could be adapted for more complex data structures. Moreover, the theoretical underpinnings laid out in \cite{chagneux2024additive,gassiatlecorff2024} and \cite{campbell2021online} regarding backward variational inference in Markov chains offer compelling motivations for its application in our specific context.

\section{PLN-Tree models and inference}
\label{sec:plntree}

\paragraph{Tree compositionality constraint} Hierarchical count data are generated through the repeated aggregation of counts from the deepest-level entities in the hierarchy, moving from the bottom to the top layer of the tree. Formally, this process involves placing the observed counts at the deepest level of the tree, then summing these counts with their respective siblings to compute the counts at their parent node, continuing this process recursively up to the root layer. This construction induces the following tree compositionality constraint
\begin{equation}
    \label{eq:tree_compositionality}
    \forall \ell < L, \forall k \leq K_\ell, \quad \rmX_k^\ell = \sum_{j \in \childindex{\ell}{k}} \rmX_j^{\ell+1} \eqsp,
\end{equation}
which needs to be accounted for in the modeling, thereby preventing an independent modeling of the layers. Furthermore, this constraint motivates a top-down propagation dynamic of the counts in the observed space, as a bottom-up approach would rely solely on the final layer to determine the entire hierarchical count data, thus failing to incorporate the tree structure in the modeling. 

\paragraph{PLN-Tree model} The PLN framework models tabular count data, which only applies to one layer of the tree at a time. Therefore, learning one PLN model at each layer does not satisfy the tree compositionality constraint \eqref{eq:tree_compositionality} since it models independent layers. Consequently, we propose a new model tailored to hierarchical structures named PLN-Tree.
\begin{itemize}
    \item The variables $(\bZ_i, \bX_i)_{1 \leq i \leq n}$ are independent, and for $1\leq \ell\leq L-1$, conditionally on $\bZ_i, (\bX_i^v)_{1 \leq v \leq \ell}$, the random variables $(\child{X}{ik}{\ell})_{1\leq k \leq K_{\ell}}$ are independent and the conditional law of $\child{X}{ik}{\ell}$ depends only on $\child{Z}{ik}{\ell}$ and $\rmX_{i\kk{}}^\ell$.
    
    \item The latent process $(\bZ^\ell)_{1\leq \ell \leq L}$ is a Markov chain with initial distribution $\bZ^1 \sim \gaussian{\bmu_1}{\bSigma_1}$ and such that for all $1\leq \ell \leq L-1$, the conditional distribution of $\bZ^{\ell + 1}$ given $\bZ^{\ell}$ is Gaussian with mean $\bmu_{\btheta_{\ell+1}}(\bZ^{\ell})$ and variance $\bSigma_{\btheta_{\ell+1}}(\bZ^{\ell})$, both arbitrary functions parameterized by $\btheta_{\ell+1}$. Formally, the latent process up to $\ell < L$ writes
    \begin{align*}
        &\bZ^1 \sim \gaussian{\bmu_1}{\bSigma_1} \eqsp, \\
        &\bZ^{\ell+1} \mid \bZ^{\ell} \sim \gaussian{\bmu_{\btheta_{\ell+1}}(\bZ^{\ell})}{\bSigma_{\btheta_{\ell+1}}(\bZ^{\ell})} \eqsp.
    \end{align*}
    The probability density function of $\bZ^1$ is denoted by $\prior{1}$ and for all $1\leq \ell \leq L-1$, the conditional probability density function of $\bZ^{\ell+1}$ given $\bZ^{\ell}$ is denoted by $\prior{\ell+1\mid\ell}(\cdot\mid \bZ^{\ell})$.
    \item Conditionally on $\bZ^1$, $\bX^1 \sim \poisson{\rme^{\bZ^1}}$ and for all $1\leq \ell \leq L-1$, $1\leq k \leq K_\ell$, conditionally on $\rmX_{\kk{}}^\ell$ and $\child{Z}{k}{\ell}$, $\child{X}{k}{\ell}$ has a multinomial distribution with parameters $\softmax{\child{Z}{k}{\ell}}$ and $\rmX_{k}^\ell$, where $\softmax{\cdot}$ is the softmax transform (see Section~\ref{sec:notations}). Formally, the observed counts process up to $\ell < L$ writes
    \begin{align*}
        &\bX^1 \mid \bZ^1 \sim \poisson{\rme^{\bZ^1}} \eqsp, \\
        \forall k \leq K_\ell, \quad &\child{\bX}{k}{\ell} \mid \rmX_k^\ell, \child{\bZ}{k}{\ell} \sim \multinomial{\rmX_{k}^\ell}{\softmax{\child{Z}{k}{\ell}}} \eqsp.
    \end{align*}  
    The conditional probability density function of $\bX^1$ given $\bZ^1$ is denoted by $\posterior{1}(\cdot \mid \bZ^1)$ and for all $1\leq \ell \leq L-1$ and all $1\leq k \leq K_{\ell}$, the conditional probability density function of $\child{\bX}{k}{\ell}$ given $(\rmX_k^\ell, \child{\bZ}{k}{\ell})$ is denoted by $\posterior{k,\ell}(\cdot \mid \rmX_k^\ell, \child{\bZ}{k}{\ell})$.
\end{itemize}
The joint density of the PLN-Tree model is then given by:
\begin{equation}
    \begin{split}
        p_{\btheta}(\bX, \bZ) &= \prior{1}(\bZ^1) \prod_{\ell=1}^{L-1} \prior{\ell+1\mid\ell}(\bZ^{\ell+1} \mid \bZ^{\ell}) \\
        &\hspace{-.6cm}\times \posterior{1}(\bX^1 \mid \bZ^1) \prod_{\ell=1}^{L-1} \prod_{k=1}^{K_{\ell}} \posterior{k,\ell}(\child{\bX}{k}{\ell} \mid \rmX_k^\ell, \child{\bZ}{k}{\ell}) \eqsp.
    \end{split}
    \label{eq:plntree_joint_density}
\end{equation}
The latent Markovian dynamics incorporates the top-down structure and enables the modeling of interactions between all nodes of a given layer, not just siblings. Conversely, the observed counts are constrained to satisfy the tree compositionality constraint (\ref{eq:tree_compositionality}), effectively integrating subgroup structures. In particular, the multinomial conditional distribution of the observations $\child{X}{k}{\ell}$ for $1 \leq \ell < L$ is the conditional distribution of independent Poisson random variables with parameters $\exp({\child{Z}{k}{\ell}})$ conditioned on the event \{$\sum_{\jj{+1} \in \childindex{\ell}{\kk{}}}\rmX_{\jj{+1}}^{\ell+1} = \rmX_{\kk{}}^\ell$\}, effectively extending PLN models to satisfy hierarchical constraints.

\paragraph{Variational inference} Under PLN-Tree models, the posterior distribution $p_{\btheta}(\bZ \mid \bX)$ is a backward Markov chain. Since we approximate this distribution using a variational approximation, we suggest using a variational backward Gaussian Markov chain \citep{campbell2021online,chagneux2024additive}, effectively extending the mean-field approximation to incorporate the Markov structure of the true posterior:
\begin{align}
    \label{eq:top_down_markov_variational_approx}
    \begin{split}
        \varapprox{1:L}(\bZ|\bX) &= \eqsp \varapprox{L}(\bZ^L | \bX^{1:L}) \\ &\times \prod_{\ell=1}^{L-1} \varapprox{\ell \mid \ell+1}(\bZ^\ell | \bZ^{\ell+1}, \bX^{1:\ell}) \eqsp,
    \end{split}
\end{align}
where $\varapprox{L}(\cdot | \bX^{1:L})$ is the Gaussian density with mean $\BF{m}_{\bphi^{L}}(\bX^{1:L})$ and diagonal variance $\bS_{\bphi^{L}}(\bX^{1:L})$ and $q_{\bphi,\ell \mid \ell+1}(\cdot | \bZ^{\ell+1}, \bX^{1:\ell})$ is the Gaussian density with mean $\BF{m}_{\bphi^{\ell}}(\bZ^{\ell+1}, \bX^{1:\ell})$ and diagonal variance $\bS_{\bphi^{\ell}}(\bZ^{\ell+1}, \bX^{1:\ell})$ both arbitrary functions parameterized by $\bphi^\ell$.

Using the backward variational approximation \eqref{eq:top_down_markov_variational_approx}, we can compute the surrogate objective given by the evidence lower bound (ELBO) of PLN-Tree models, for which the complete derivation is provided in Appendix \ref{proposition:elbo_pln_tree}. Interestingly, the PLN-Tree ELBO shares similarities with a per-layer PLN ELBO, where the latent variables $(\bZ^{\ell})_{1 \leq \ell \leq L}$ would be treated as independent across layers. However, PLN-Tree relaxes this independence assumption, incorporating Markov dependencies between layers. These dependencies are reflected in the ELBO, which is expressed only up to an expectation rather than in closed form. Additionally, the propagation of multinomial distributions across children groups introduces distinctive terms between the root layer ($\ell = 1$) and deeper layers, setting PLN-Tree apart from traditional PLN models. As a result, the PLN-Tree optimization objective exhibits a greater complexity than a layer-wise PLN.

\paragraph{Residual amortized architecture} Numerically, handling the inputs of the neural networks parameterizing the variational distributions is a challenging task due to the increasing dimension of the chains $(\bX^{1:\ell})_{1 \leq \ell \leq L}$, and the value it takes relatively to the latent variables. To address this scalability issue, \cite{chagneux2024additive} suggests performing amortized inference by encoding the chain of counts using a recurrent neural network. This architecture enables to control the number of parameters while neutralizing the increasing dimension of the input. Moreover, considering the current observation's pivotal influence on the latent variable distribution at layer $\ell$, we introduce a residual connection yielding $\bX^\ell$ as input of the current variational parameters. Combined with the amortized setting, this approach yields the residual amortized backward architecture illustrated in figure \ref{fig:res_backward_architecture}. Problem-specific networks must then be tuned, as thoroughly explored in our experiments in Section~\ref{sec:exp}. While we focus on the residual amortized backward for its superior empirical performances in our experiment, other noteworthy methods could be employed for the variational parameters in certain cases, like the regular amortized backward, or a strongly amortized variant taking only the current level observation as input and the next latent.
\begin{figure*}
\centering
    \begin{tikzpicture}[
        >=Stealth, 
        node distance=2cm,
        rnn/.style={draw, rectangle, rounded corners, fill=blue!20, minimum width=2cm, minimum height=1cm, align=center},
        ffnn/.style={draw, rectangle, rounded corners, fill=red!20, minimum width=2cm, minimum height=1cm, align=center},
        concat/.style={draw, circle, inner sep=0pt, minimum size=0.8cm, fill=gray!20}
    ]

    \node (X) {$\bX^{1:\ell}$};
    
    \node[rnn, right of=X] (RNN) {RNN};
    \node[right=0.75cm of RNN] (E) {$\bE^{\ell}$};
    
    \node[concat, right of=E] (C) {$\bigcup$};
    
    \node[ffnn, right=1.0cm of C] (FFNN) {Neural network};
    \node[right=0.75cm of FFNN] (Z) {$\BF{m}_{\bphi^\ell}(\bZ^{\ell+1}, \bE^\ell, \bX^\ell)$};

    \node[below=0.5cm of C] (Res) {$\bX^{\ell}$};
    \node[above=0.5cm of C] (Zp1) {$\bZ^{\ell+1}$};
    
    \draw[->] (X) -- (RNN);
    \draw[->] (RNN) -- (E);
    \draw[->] (E) -- (C);
    \draw[->] (Zp1) -- (C);
    \draw[->] (Res) -- (C);
    \draw[->] (C) -- (FFNN);
    \draw[->] (FFNN) -- (Z);
    
    \end{tikzpicture}
\caption{Residual amortized backward architecture for the variational mean at layer $\ell \leq L$. The amortizing Recurrent Neural Network is denoted by RNN, while the symbol "$\cup$" indicates a concatenation of entries. The variable $\bE^\ell$ is the last output of the recurrent network after inputting the sequence $\bX^{1:\ell}$.}
\label{fig:res_backward_architecture}
\end{figure*}
\paragraph{Partial closed-form optimization} Learning PLN-Tree models can be accelerated by exploiting the variational EM algorithm from \cite{chiquet_pln} applied at the first layer, which holds an explicit optimum in $\btheta^1$ when $\bphi^1$ is known, so that at iteration $h+1$,
\begin{equation}
    \begin{split}
        &\bmu_1^{(h+1)} = \frac{1}{n}\sum_{i=1}^n \EE{q_{\bphi}}{\BF{m}_{\bphi_1^{(h)}}(\bZ^{2}, \bX_i^{1})} \eqsp, \\
        &\bSigma_1^{(h+1)} = \frac{1}{n} \sum_{i=1}^n \mathds{E}_{q_{\bphi}}\bigg[\left(\bmu_1^{(h+1)} - \BF{m}_{\bphi_1^{(h)}}(\bZ^{2}, \bX_i^{1})\right) \\ &\eqsp \times \left(\bmu_1^{(h+1)} - \BF{m}_{\bphi_1^{(h)}}(\bZ^{2}, \bX_i^{1})\right)^{\top} + \bS_{\bphi_1^{(h)}}(\bX_i^{1:L})\bigg]. 
    \end{split}
    \label{eq:optimal_update_first_layer}
\end{equation}
The availability of these closed-form expressions is critical for practical model training, as they significantly accelerate the optimization of the ELBO.

\paragraph{Offset modeling} 
Collecting count data within multiple ecosystems usually comes with a variable sampling effort in practice. This offset in the average total count often originates from the counting protocol or the difficulty of exploring an environment. In genomics for instance, the total count relates to the sequencing depth of the genome, which correlates with the counts of rarer species, introducing a bias in the data with higher total count \citep{lee2014rare,xu2017low}. As a result, the offset often reflects sampling protocols rather than the ecological properties of the environments being studied, making them unreliable as direct features.

To mitigate these effects, preprocessing techniques such as resampling (rarefaction) can be applied to reduce the influence of variable sampling efforts, albeit with some loss of data \citep{mcmurdie2014waste,weinroth2022considerations,schloss2024rarefaction}. An alternative approach is to model the offset directly within the statistical framework to avoid introducing spurious correlations \citep{chiquet_pln_network}. In PLN models \citep{chiquet_pln}, the offset is handled via a plug-in estimator that shifts the latent variable means based on the log of the total count in each sample. Extending this idea, we propose modeling the offset as a latent variable following a Gaussian mixture in the PLN-Tree framework. This formulation captures variability in sampling efforts both across different groups of samples and within groups, resolving the need for domain-specific assumptions. The flexibility of this approach comes with the introduction of an hyperparameter (the number of mixture components), which allows users to tailor the model to different offset scenarios but increases the complexity of parameter estimation during training. Interestingly, since the softmax is invariant by constant translation, adding the offset in the lower layers of the observed dynamics has no impact on the modeling, restricting its usage to the root layer. Details on the suggested variational approximation and the associated ELBO for PLN-Tree models with offset modeling can be found in Appendix~\ref{sec:plntree_elbo_offset}.

\paragraph{Conditional PLN-Tree with covariates}
In many practical settings, the distribution of the counts might be partially determined by environmental factors, often referred to as exogenous factors or covariates. For example, in microbiome analyses, covariates such as a patient’s diet have been shown to significantly influence the structure of the gut microbiome \citep{ross2024interplay}. Therefore, incorporating this exogenous information in PLN-Tree would allow to model how external factors affect the ecosystem's dynamics while possibly enhancing the generative capabilities of our model.
In the PLN framework, \cite{chiquet_pln} propose to model the impact of the covariates as linear effects on the mean of the latent variables, while keeping the interaction network independent of these covariates. Despite its simplicity, this approach effectively enhances the interaction network by eliminating spurious edges \citep{chiquet_pln_network}, while maintaining closed-form estimators to ensure efficient training. Extending their framework, we suggest to incorporate covariates into the PLN-Tree model at the initial layer using the same methodology, while covariates would also be integrated in each transition density with non-linear effects, as made possible by the arbitrary parameterization of the latent process. Formally, for \( n \) samples, let \( \bC = [{\bC_i}]_{1 \leq i \leq n} \in \RR^{n \times p} \) represent the covariates and \( \BF{B} \in \RR^{p \times K_1} \) denote the regression coefficients to be learned for the first layer, then the conditional latent dynamic is defined as follows for each sample \( i \leq n \) and for each level \( \ell < L \):
\begin{align*}
    &\bZ^{1}_i \sim \gaussian{\bC_i\BF{B}}{\bSigma_1} \eqsp, \\ &\bZ^{\ell+1}_i \mid \bZ^{\ell}_i \sim  \gaussian{\bmu_{\btheta_{\ell+1}}(\bZ^{\ell}_i, \bC_i)}{\bSigma_{\btheta_{\ell+1}}(\bZ^{\ell}_i, \bC_i)}.
\end{align*}
Similarly, covariates are incorporated with non-linear effects into the parameters of the variational approximations to estimate the conditional posterior distribution \( p_{\btheta}(\bZ \mid \bX, \bC) \). This conditional PLN-Tree model is then optimized using a conditional ELBO, derived analogously to Proposition \ref{proposition:elbo_pln_tree}, with all parameters now conditioned on \( \bC \), and a closed-form optimum for the regression coefficients at iteration $h+1$ derived from \eqref{eq:optimal_update_first_layer} as the ordinary least square estimator:
\[
\BF{B}^{(h+1)} = (\bC^\top \bC)^{-1} \bC^\top \mathbf{M}_{\bphi}(\bZ^2, \bX^1) \eqsp,
\]
with \(\mathbf{M}_{\bphi}(\bZ^2, \bX^1) = [\BF{m}_{\bphi_1^{(h)}}(\bZ^{2}_i, \bX^{1}_i)^\top]_{1 \leq i \leq n}\).
While the integration of covariates is straightforward within our framework, developing an efficient deep architecture for both the latent process and the variational approximation presents significant practical efforts. This complexity arises from the need to combine covariates with count data and latent variables as inputs to neural networks despite their heterogeneous nature. Among the extensive literature that covers specific combinations of data sources, attention-based models \citep{niu2021review} would appear as compelling options in our context (see \cite{gong2023multi}).

\section{Identifiability of PLN-Tree}
\label{sec:identifiability}
In a nutshell, identifiability ensures we can uniquely determine a model given the data, and thus infer the law of the latent variables solely from the law of the observations. In real-world applications, it was shown that the lack of identifiability can severely undermine performances  \citep{d2022underspecification}, and precludes the interpretability of the inferred networks. Fortunately, in many applications such as in \cite{halva2021disentangling,gassiat2020identifiability}, the dependency structure of the data can disentangle parameters using inductive biases. This section presents two identifiability results related to the PLN model and the PLN-Tree extension.

\paragraph{Latent Poisson model identifiability}
Lemma~\ref{lemma:poisson_markov_identifiability} shows the identifiability of models with  positive latent variables combined with a Poisson emission law. Corollaries involves the identifiability of PLN models and that of the first layer of PLN-Tree models.
\begin{lemma}
\label{lemma:poisson_markov_identifiability}
    Let $K > 0, \bZ = (\rmZ_k)_{1 \leq k \leq K}$  be a random variable supported on $(\RR^*_+)^K$. Assume that the observations $\bX = (\rmX_k)_{1 \leq k \leq K}$ are such that for all $1\leq k \leq K$, conditionally on $\bZ$ and $ (\rmX_v)_{1 \leq v \neq k \leq K}$, $\rmX_k$ follows a Poisson distribution with parameter $\rmZ_k$ and is independent of $(\rmZ_v, \rmX_v)_{1 \leq v \neq k \leq K}$. Then, the law of $\bZ$ is identifiable from the law of $\bX$.
\end{lemma}
\begin{proof}
    Proof is postponed to Appendix~~\ref{app:poisson_markov_identifiability}
\end{proof}

\paragraph{PLN-Tree identifiability} The previous result does not cover the whole scope of the PLN-Tree framework, as it yields the identifiability of independent layers conditionally to their respective latent variables at most. Instead, Theorem \ref{theorem:plntree_identifiability} establishes the identifiability of PLN-Tree models up to a softmax transform for the first three layers. 
\begin{theorem}
\label{theorem:plntree_identifiability}
    Let $\taxonomy$ a given tree, $\bZ = (\rmZ^1, \bZ^2, \bZ^3)$  be random variables such that $\rmZ^1> 0$, $\bZ^2 \in \simplex{K_2}$, for all $k \leq K_2, \child{Z}{k}{2} \in \simplex{\# \childindex{2}{k}}$. Suppose the observations $\bX = (\rmX^1, \bX^2, \bX^3)$ are such that: 
    \begin{itemize}
    \item conditionally on $\rmZ^1$, $\rmX^1$ has a Poisson distribution with parameter $\rmZ^1$;
    \item conditionally on $(\rmX^1,\bZ^2)$, $\bX^2 \sim \multinomial{\rmX^1}{\bZ^2}$;
    \item conditionally on $(\bX^2,\bZ^3)$, for all $1\leq k \leq K_2$, $\child{X}{k}{2} \sim \multinomial{\rmX^2_k}{\child{Z}{k}{2}}$, and $\child{X}{k}{2}$ is independent of $(\child{X}{j}{2})_{j\neq k}$.
    \end{itemize}
    Then, the law of $(\rmZ^1, \bZ^2, \bZ^3)$ is identifiable from the law of $(\rmX^1,\bX^2, \bX^3)$.
\end{theorem}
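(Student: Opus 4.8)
The plan is to reduce the whole statement to a single application of Lemma~\ref{lemma:poisson_markov_identifiability} at the finest layer, after recasting the cascade of Poisson and multinomial draws as a vector of independent Poisson variables. First I would observe that, by the multinomial total-count constraint, $\rmX^1 = \sum_{k\le K_2}\rmX^2_k$ and $\rmX^2_k = \sum_{j}(\child{X}{k}{2})_j$, so that $(\rmX^1,\bX^2)$ is a deterministic (aggregation) function of $\bX^3$. Hence the law of $\bX=(\rmX^1,\bX^2,\bX^3)$ is the pushforward of the law of $\bX^3$ under a fixed measurable map, and conversely $\bX^3$ is a marginal of $\bX$; it therefore suffices to prove that the law of $(\rmZ^1,\bZ^2,\bZ^3)$ is identifiable from the law of $\bX^3$ alone.

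The central step is a Poissonization argument applied twice. Conditionally on $(\rmZ^1,\bZ^2)$ we have $\rmX^1\sim\poisson{\rmZ^1}$ and $\bX^2\mid\rmX^1\sim\multinomial{\rmX^1}{\bZ^2}$; the classical Poisson--multinomial superposition property then gives that the coordinates $(\rmX^2_k)_{k\le K_2}$ are conditionally independent with $\rmX^2_k\sim\poisson{\rmZ^1\bZ^2_k}$. Conditioning further on $\bZ^3$ and repeating the argument inside each child group, the coordinates of $\bX^3$ become conditionally independent Poisson variables whose rate, at the layer-$3$ node with parent $k$ and local index $j$, equals $W_{k,j}:=\rmZ^1\,\bZ^2_k\,(\child{Z}{k}{2})_j$. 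Collecting these rates into a single vector $\bW=(W_{k,j})$, I would check that the conditional law of $\bX^3$ given $\bZ=(\rmZ^1,\bZ^2,\bZ^3)$ depends on $\bZ$ only through $\bW$, so that $\bX^3\mid\bW$ is exactly a product of Poisson laws with parameters $\bW$, and that $\bW$ takes values in $(\RR^*_+)^{K_3}$ because $\rmZ^1>0$ and the simplex coordinates are strictly positive under the softmax parameterization.

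With this reformulation, Lemma~\ref{lemma:poisson_markov_identifiability} applies verbatim to the pair $(\bW,\bX^3)$ and yields that the law of $\bW$ is identifiable from the law of $\bX^3$. It then remains to invert the map $\bZ\mapsto\bW$. Using the simplex constraints $\sum_j(\child{Z}{k}{2})_j=1$ and $\sum_k\bZ^2_k=1$, the factors are recovered by successive marginal sums and normalizations: $\rmZ^1\bZ^2_k=\sum_j W_{k,j}$, then $\rmZ^1=\sum_{k,j}W_{k,j}$, then $\bZ^2_k=(\sum_j W_{k,j})/\rmZ^1$ and $(\child{Z}{k}{2})_j=W_{k,j}/(\sum_{j'}W_{k,j'})$. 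This exhibits $(\rmZ^1,\bZ^2,\bZ^3)$ as a deterministic bijection of $\bW$ on the relevant support, so the law of $\bW$ determines the joint law of $(\rmZ^1,\bZ^2,\bZ^3)$, completing the proof.

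I expect the main obstacle to be the careful justification of the two-fold Poissonization at the level of laws rather than pointwise: one must verify that, after marginalizing the realized counts $\rmX^1$ and $\bX^2$, the conditional law of $\bX^3$ given $\bZ$ is genuinely a product of Poisson laws, and that conditioning on the aggregate rate vector $\bW$ (a coarser $\sigma$-field than $\bZ$) preserves this product-Poisson form, which is precisely the hypothesis Lemma~\ref{lemma:poisson_markov_identifiability} requires. The inversion step is then routine modulo the positivity of the simplex coordinates; this positivity is also what explains the phrase \emph{up to a softmax transform}, since the pre-softmax Gaussian latents are recoverable only through their images under $\softmax{\cdot}$, the softmax being invariant under constant translations.
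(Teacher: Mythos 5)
Your proof is correct, and it reaches the same destination by a slightly more direct route than the paper. The paper proceeds in two stages: it first invokes Lemma~\ref{lemma:identifiability_anylaw_simplex} to identify the parent--children laws at the top of the tree, then algebraically collapses the composition of the two multinomial layers into a single multinomial on $\bX^3$ with probabilities $\rmZ^2_k(\child{Z}{k}{2})_j$, and applies Lemma~\ref{lemma:identifiability_anylaw_simplex} a second time before disentangling the factors via the simplex constraints. You instead Poissonize both multinomial layers at once, so that conditionally on the rate vector $\bW=(\rmZ^1\,\rmZ^2_k\,(\child{Z}{k}{2})_j)_{k,j}$ the deepest layer is a product of independent Poissons, and you apply Lemma~\ref{lemma:poisson_markov_identifiability} once; the recovery of $(\rmZ^1,\bZ^2,\bZ^3)$ from $\bW$ by marginal sums and normalization is then exactly the paper's final disentanglement step. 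The two arguments rest on the same two ingredients (Poisson thinning of multinomial splits, and the Poisson moment-generating-function identifiability of Lemma~\ref{lemma:poisson_markov_identifiability}); what yours buys is economy --- a single reduction rather than a layered one --- together with the useful explicit observation that $(\rmX^1,\bX^2)$ is a deterministic aggregate of $\bX^3$, which the paper only uses implicitly through the indicator $\indicator{\bX^2=\parent{\bX}{}{3}}$. You correctly flag the one delicate point, namely that the product-Poisson conditional law survives coarsening the conditioning from $\sigma(\bZ)$ to $\sigma(\bW)$. The only caveat, shared equally by the paper's own proof, is that the inversion $\bW\mapsto(\rmZ^1,\bZ^2,\bZ^3)$ requires the simplex coordinates to be strictly positive (otherwise $\child{Z}{k}{2}$ is not recoverable when $\rmZ^2_k=0$), a restriction the theorem statement does not impose but which holds under the softmax parameterization actually used in the model.
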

\begin{proof}
    Proof is postponed to Appendix~~\ref{proof:plntree_identifiability}.
\end{proof}
This identifiability result recursively extends to models with multiple conditionally independent Poisson roots and deeper multinomial dynamics within the hierarchy thanks to the Markov tree structure.

However, since the softmax function is constant along diagonals, obtaining the identifiability of $(\bZ^1, \dots, \bZ^L)$ is not a given if we do not set a constraint on the parameters space. Combining the previous result with Lemma \ref{lemma:identifiability_projected_softmax} in Appendix shows we can identify the law of the latent variables up to a linear projection.
Assuming the distribution of the latent variables is a Gaussian Markov chain, a direct application of the previous result yields the identifiability of every parent-children distribution of the PLN-Tree framework providing the parameters belong to a defined projection space.
\begin{corollary}
\label{corollary:identifiability_plntree_projected}
    Let $(\bZ^1, \bZ^2)$ and $(\tilde{\bZ}^1, \tilde{\bZ}^2)$ in $\RR^m \times \RR^d$ be such that  conditionally on $\bZ^1$ (resp. $\tilde \bZ^1$), $\bZ^2$ is Gaussian with mean $\bmu(\bZ^1)$ (resp. $\tilde \bmu(\tilde\bZ^1)$) and covariance $\bSigma(\bZ^1)$ (resp. $\tilde \bSigma(\tilde \bZ^1)$). Define $\bP = \identity_d - d^{-1} \BS{1}_{d\times d}$ the projector on $\vect{\BS{1}_d}^\perp$. Assume $(\bZ^1, \softmax{\bZ^2})$ has the same law as $(\tilde\bZ^1, \softmax{\tilde\bZ^2})$, then
    $$ 
    \bP \bmu(\BF{z}) = \bP \tilde \bmu(\BF{z}) \quad \mathrm{and}\quad  \bP \bSigma(\BF{z}) \bP = \bP \tilde\bSigma(\BF{z}) \bP\eqsp, 
    $$ 
    $\mathbb{P}_{\bZ^1}-a.s. \eqsp,$ where $\mathbb{P}_{\bZ^1}$ is the law of $\bZ^1$.
\end{corollary}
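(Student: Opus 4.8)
The plan is to peel off the first coordinate by conditioning on $\bZ^1$, thereby reducing the statement to an unconditional comparison of two Gaussians that share the same softmax law, and then to transfer this equality onto the projected parameters $\bP\bmu$ and $\bP\bSigma\bP$. First I would note that the hypothesis $(\bZ^1, \softmax{\bZ^2}) \equallaw (\tilde\bZ^1, \softmax{\tilde\bZ^2})$ forces the marginals to coincide, $\bZ^1 \equallaw \tilde\bZ^1$, so that both disintegrations are taken with respect to the same law $\PP_{\bZ^1}$ and the ``$\PP_{\bZ^1}$-a.s.'' conclusion is meaningful. Testing the joint law against product functions $f(\bZ^1)g(\softmax{\bZ^2})$ and disintegrating, I obtain that for $\PP_{\bZ^1}$-almost every $\BF{z}$ the conditional law of $\softmax{\bZ^2}$ given $\bZ^1 = \BF{z}$ equals that of $\softmax{\tilde\bZ^2}$ given $\tilde\bZ^1 = \BF{z}$. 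Under the model these conditional laws are the images under $\sigma$ of $\gaussian{\bmu(\BF{z})}{\bSigma(\BF{z})}$ and $\gaussian{\tilde\bmu(\BF{z})}{\tilde\bSigma(\BF{z})}$ respectively.

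Next I would invoke Lemma~\ref{lemma:identifiability_projected_softmax}. The key structural fact it rests on is that the fibers of the softmax are exactly the lines directed by $\BS{1}_d$, since $\sigma$ is invariant under the translation $\BF{z} \mapsto \BF{z} + c\BS{1}_d$, while $\bP$ is precisely the orthogonal projector onto $\vect{\BS{1}_d}^\perp$, on which $\sigma$ becomes injective onto the open simplex. The lemma then transfers the conditional equality in law of the softmaxes into the conditional equality in law of the $\bP$-projections, so that $\bP\bZ^2 \equallaw \bP\tilde\bZ^2$ given $\bZ^1 = \BF{z}$ for $\PP_{\bZ^1}$-almost every $\BF{z}$.

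It then remains to match Gaussian parameters. Since $\bP$ is symmetric and idempotent ($\bP^2 = \bP$, using $\BS{1}_{d\times d}\BS{1}_{d\times d} = d\,\BS{1}_{d\times d}$), the conditional law of $\bP\bZ^2$ given $\bZ^1 = \BF{z}$ is the (degenerate) Gaussian $\gaussian{\bP\bmu(\BF{z})}{\bP\bSigma(\BF{z})\bP}$, and likewise for the tilde version. Two Gaussians supported on the same subspace $\vect{\BS{1}_d}^\perp$ coincide in law if and only if their means and covariances agree, which yields $\bP\bmu(\BF{z}) = \bP\tilde\bmu(\BF{z})$ and $\bP\bSigma(\BF{z})\bP = \bP\tilde\bSigma(\BF{z})\bP$ for $\PP_{\bZ^1}$-almost every $\BF{z}$, i.e. the claim.

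I expect the main obstacle to be the measure-theoretic bookkeeping of the disintegration: one must argue that the regular conditional distributions exist and that the almost-sure equality of conditional laws can be upgraded from the integrated identity against test functions, which requires a countable separating family of bounded continuous functions on the simplex together with the coincidence of the two conditioning measures $\PP_{\bZ^1} = \PP_{\tilde\bZ^1}$. The remaining ingredient, that equal-in-law possibly degenerate Gaussians have equal mean and covariance, is routine via characteristic functions restricted to $\vect{\BS{1}_d}^\perp$.
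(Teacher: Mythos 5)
Your proposal is correct and follows essentially the same route as the paper: condition on $\bZ^1$, apply Lemma~\ref{lemma:identifiability_projected_softmax} to pass from equality in law of the softmaxes to equality in law of the $\bP$-projections, and then match the parameters of the resulting (degenerate) Gaussians $\gaussian{\bP\bmu(\BF{z})}{\bP\bSigma(\BF{z})\bP}$. The paper's proof is a one-line version of this argument; your write-up merely makes explicit the disintegration and the equality of the conditioning marginals, which the paper leaves implicit.
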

\begin{proof}
    Proof is postponed to Appendix~~\ref{proof:identifiability_plntree_projected}.
\end{proof}
For all $\ell \geq 2$, denoting by $\bP^\ell = \diag(\{\bP_k^\ell\}_{1 \leq k \leq K_{\ell-1}})$ with 
$$
\bP_k^\ell = \identity_{\# \childindex{\ell-1}{k}} - \frac{1}{\# \childindex{\ell-1}{k}} \BS{1}_{\# \childindex{\ell-1}{k} \times \# \childindex{\ell-1}{k}}\eqsp,
$$
we obtain from Theorem \ref{theorem:plntree_identifiability} and Corollary \ref{corollary:identifiability_plntree_projected} that all PLN-Tree model parameterized by the latent variables $(\bZ^1, \bP^2 \bZ^2, \dots, \bP^L \bZ^L)$ are identifiable.

\paragraph{Latent interactions modeling}
While the latent Gaussian process in PLN-Tree encodes layer-specific interactions through the covariance matrices of its Markov chain, interaction networks become interpretable only once projected into the identifiable space. For instance, conditionally on the previous latent variable, a diagonal latent covariance maps to a block-diagonal identifiable covariance whose blocks align exactly with the hierarchy’s subgroups. More generally, any latent block-diagonal covariance whose blocks follow this partition preserves its block structure after projection, thus confining latent interactions to the clusters defined by the hierarchy. Additional interaction patterns can be explored by imposing structural constraints on the identifiable parameters, namely by exploring sparse precision matrices or low-rank covariances as in \cite{chiquet_pln,chiquet_pln_network}.

\paragraph{Using identifiable features as counts preprocessing}
Using latent variables as inputs for machine learning tasks is a standard practice that can significantly improve performance. In the case of PLN-Tree, Theorem~\ref{theorem:plntree_identifiability} suggests that the identifiable latent variables $\bZ^1, (\bP^\ell\bZ^\ell)_{2\leq \ell \leq L}$ may provide  meaningful representations. This encoding process moves the data from a constrained and discrete space to a real-valued hyperplane satisfying the scale invariance principle formalized in \cite{aitchison1994principles}, thus making the latent features 
potentially strong candidates for tasks such as classification, PCA, or regression in the context of compositional data. However, it is difficult to directly associate a latent variable with a specific entity in the tree, rendering comparisons with the regular PLN impractical.

Based on this remark, we introduce a latent feature, referred to as the latent proportions (LP), which maps hierarchical count data to their latent representation $\bV$ such that: 
\begin{align} 
    \label{eq:latent_tree_counts}
    \begin{split}
        &\bV^1 = \softmax{\bZ^1} \eqsp, \\
        \forall \ell < L, k \leq K_\ell, \quad &\child{V}{k}{\ell} = \softmax{\child{Z}{k}{\ell}} \times \mathrm{V}_k^\ell\eqsp. 
    \end{split}
\end{align}
Since the latent proportions are compositional in nature, they can be further transformed using standard log transforms commonly employed in compositional data analysis \citep{ibrahimi2023overview}, such as the centered log-ratio (CLR) transform. By combining the LP with the CLR transform (LP-CLR), we can map the observed counts from their constrained compositional space into an unconstrained latent space, which can improve the performance of machine learning models. It can also serve as a foundation for estimating covariance matrices at different layers and for conducting network inference. Similarly, since Lemma \ref{lemma:pln_identifiability} establishes the identifiability of the latent variables in PLN models, an LP-CLR identifiable feature can be derived for PLN models by applying the CLR transform to the softmax of its latent variables, resulting in projecting the latent variables onto $\vect{\mathbf{1}_d}^\perp$. We refer to this LP-CLR feature derived from the PLN models as Proj-PLN.

The proposed identifiable LP-CLR transform of PLN and PLN-Tree latent features are benchmarked against the raw PLN features, true proportions, and their CLR transform of counts in Section \ref{sec:data_preprocessing_benchmark}.

\section{Experiments}
\label{sec:exp}
Across three generative benchmarks, we demonstrate the advantages of incorporating the underlying tree graph structure in count data modeling over unstructured approaches:
\begin{itemize}
    \item Section~\ref{sec:synthetic_data} involves a synthetic dataset generated along a PLN-Tree model to illustrate the performances of the backward variational approximation against the regular mean-field approximation, and its limits in an ideal inference framework. 
    \item Section~\ref{sec:markov_dirichlet} employs hierarchical count data generated from a Markovian Dirichlet procedure extended from the simulation protocol proposed in \cite{chiquet_pln_network}. This second experiment enables to evaluate PLN-Tree against non-hierarchical competitors in a fair yet controlled hierarchical setup.
    \item In Section~\ref{sec:microbiome_generation}, we evaluate the generative performances of the models on real metagenomics data from gut microbiome samples of disease-affected patients \citep{pasolli2016machine}, and demonstrate the practical benefits of exploiting taxonomic information to model metagenomics count data.
\end{itemize}
Secondly, Section~\ref{sec:data_preprocessing_benchmark} offers a practical perspective on the identifiability results presented in Section~\ref{sec:identifiability}. By leveraging identifiable features from PLN-based models, we demonstrate that such transform can serve as an alternative to standard compositional preprocessing methods, leading to improved performance in a one-versus-all disease classification task using metagenomics data.

\paragraph{Benchmarked models}  To assess the performance of PLN-Tree as a generative model, we benchmark it against other interaction-based count data models. However, state-of-art models like PLN \citep{chiquet_pln}, SparCC \citep{sparcc} or SPiEC-Easi \citep{spiec} are restricted to tabular data, allowing the modeling of only one layer of hierarchical count data at a time. Thankfully, by leveraging the hierarchical compositional constraint  \eqref{eq:tree_compositionality}, tabular count data models can generate valid hierarchical count data by modeling only the last layer of the tree, which is usually the one at stake for practitioners. This generative procedure involves sampling the abundances of the last layer under a given model and then exploiting the compositional constraints to derive the values of the parent nodes, allowing us to obtain hierarchical count data that satisfies  \eqref{eq:tree_compositionality}. 

In our experiments, PLN baselines are computed using the \textit{pyPLNmodels}\footnote{\urlstyle{tt}\url{https://github.com/PLN-team/pyPLNmodels}} Python implementation from \cite{batardiere2024pyplnmodels}. Conversely, SparCC and SPiEC-Easi were implemented within our package as generative models, as both methods usually only estimate the covariance and precision matrices of the log-centered ratio (CLR) transformation of compositional data. After estimating the mean of the normalized and CLR-transformed count data, we sample from the inferred Gaussian distribution and invert the CLR transformation using the softmax function, obtaining proportion data that can be used to generate count data via a multinomial distribution. Additionally, since our model does not involve sparsity, we set the sparsity parameter of the estimated matrices to $0$ in both SparCC and SPiEC-Easi, making both models equivalent. Consequently, we only compare PLN-Tree to PLN and SPiEC-Easi. Finally, in this benchmark, we compare the efficiency of the proposed backward approximation \eqref{eq:top_down_markov_variational_approx} against the regular Gaussian mean-field \citep{blei2017variational}, denoted as PLN-Tree (MF), with variational density 
$$
q_{\bphi}^{\mathrm{MF}}(\bZ \mid \bX) = \prod_{\ell=1}^{L} \gaussianat{\bZ^{\ell}}{\bm_{\bphi,\ell}^{\mathrm{MF}}(\bX^{\ell})}{\bS^{\mathrm{MF}}_{\bphi,\ell}(\bX^{\ell})} \eqsp,
$$
where for all $\ell \leq L$, $\bS^{\mathrm{MF}}_{\bphi}(\bX^{\ell})$ is diagonal positive definite. The PLN-Tree tag is retained for the residual backward variational approximation modeling. All  generated datasets and parameters are stored on the GitHub repository for reproducibility.

\paragraph{Metrics for model evaluation} 
In the context of variational deep generative models, comparing the quality of estimated parameters is often impractical due to variations in model architectures, which adds up to identifiability concerns in neural networks. Instead, we assess the generative performance of trained models by their ability to replicate the distribution of the original dataset faithfully. Yet, the evaluation of generative models remains inherently difficult, and while several metrics have been introduced in specific applications \citep{sajjadi2018assessing,yang2020evaluation,betzalel2024evaluation}, there is ultimately no consortium to the best of our knowledge. Therefore, to assess the generative performances in our ecology-oriented context, we use $\alpha$-diversity and $\beta$-diversity metrics that are commonly employed in ecosystem studies, as well as agnostic metrics such as the empirical Wasserstein on normalized counts (proportion hierarchical data) and correlation measures. 

First, $\alpha$-diversity metrics provide insights into species richness and evenness, thereby partially characterizing the diversity within an ecosystem (see Appendix \ref{sec:alpha_diversity} and \cite{gotelli2001quantifying}). Among these, the Shannon entropy and the Simpson index are widely employed. 
The Shannon index quantifies the uncertainty in predicting the entities in the ecosystem, while the Simpson index represents the probability that two entities chosen at random represent the same entity. Both estimators are qualified as robust and quantify complementary aspects of the ecosystems \citep{nagendra2002opposite}.
Our first objective is to ensure that the generated data closely approximates the $\alpha$-diversity distribution of the original dataset, as measured by the Wasserstein distance. Other distances or divergences are considered in the appendix for each experiment, such as the Kullback-Leibler divergence, Kolmogorov-Smirnov statistic, and total variation distance.

While $\alpha$-diversity metrics evaluate the intrinsic statistics of one ecosystem, $\beta$-diversity metrics enable the quantitative comparison of the composition of two ecosystems (see Appendix \ref{sec:beta_diversity}). These metrics are often referred to as dissimilarity measures, taking values between $0$ and $1$ to indicate the degree of dissimilarity between pairs of samples. Among the $\beta$-diversity metrics, the UniFrac \citep{lozupone2005unifrac} and Jaccard diversities can account for the hierarchical nature of the data, while the Bray-Curtis dissimilarity \citep{beals1984bray}, commonly applied in microbiological studies \citep{kleine2021data}, operates at a single level of the hierarchy. To ensure that the benchmark remains independent of the underlying tree structure, we restrict our assessment to the Bray-Curtis dissimilarity to evaluate the quality of the generations at each layer of the tree. To compare the $\beta$-diversity, we draw $n = 100$ samples from the true dataset and from the trained model, and compute the dissimilarity between each pair of samples. Repeating that sampling process $m = 50$ times, we obtain $m$ symmetric dissimilarity matrices of shape $n \times n$. For each matrix, we perform PERMANOVA \citep{anderson2014permutational} and PERMDISP \citep{anderson2006distance} to test respectively whether the centroids and the dispersions of the two groups are the same. Both tests are performed $m$ times on $1000$ permutations, providing finally $m$ associated p-values for each test, the distribution of which will assess the dissimilarity between original and generated data. PERMANOVA and PERMDISP tests are detailed in Appendix~\ref{sec:testing_beta_diversity} and implemented in the \texttt{scikit-bio}\footnote{\url{https://github.com/scikit-bio/scikit-bio}} package.

Finally, to compare the distribution of the generated data with the original data, we evaluate the empirical Wasserstein distance between generated samples and the initial dataset in normalized forms (proportion hierarchical data) at each layer using the \texttt{emd2} function from \textit{POT} \citep{flamary2021pot}. Additionally, we employ correlation measures between the original data and their reconstructions to assess the quality of the variational approximations at the reconstruction task.  Computational efficiency between implementations is discussed in Appendix \ref{app:experiments_setup}.

\paragraph{Selection of the variational architectures} To provide a comprehensive and equitable evaluation of the PLN-Tree variants, we determine efficient architectures for the variational approximations tailored to each experimental scenario. To that end, we propose several network architectures and assess their generative capabilities, leveraging the above evaluation metrics. The model demonstrating superior overall performance is identified by averaging its rank across all computed metrics. The considered architectures and numerical considerations are detailed in Appendix~\ref{app:experiments_setup}. Since the models are trained using variational approximations, convergence may result in different model parameters depending on the initialization. Specifically, the analysis of training variability in Appendix \ref{app:performance_benchmark_plntree_synthetic} reveals that the mean-field approximation is less stable compared to the proposed residual backward approach, but this does not affect the performance ranking of the two methods. Consequently, training is conducted once for each model, and performance variability is assessed based on the generations.

\subsection{Synthetic data}
\label{sec:synthetic_data}
\subsubsection{PLN-Tree retrieval}
To evaluate the efficiency of the proposed backward variational approximation \eqref{eq:top_down_markov_variational_approx}, we conduct an initial study on data generated from a PLN-Tree model. We begin by defining a tree illustrated on Figure \ref{fig:synthetic_tree_plntree}, a reference PLN-Tree model with parameters $\btheta^*$, and a synthetic dataset $(\bX, \bZ)$ generated using the PLN-Tree dynamic specified in Section \ref{sec:plntree} with $\btheta = \btheta^*$ (see Figure \ref{fig:abundance_samples_synthetic_plntree}), consisting of $n = 2000$ samples. In our experiments, we ensure that the latent dynamic is parameterized by identifiable parameters as detailed in Section~\ref{sec:identifiability}. Upon selecting candidate architectures (see Appendix~\ref{app:experiments_setup_synthetic_plntree}), we conduct the training procedure for each model until convergence. Then, we generate data by sampling $M = 25$ times $2000$ samples from the trained models and aggregate the results to address sampling variability. The considered tree of Figure \ref{fig:synthetic_tree_plntree} has a small depth and not many leaf nodes for computational speed reasons, but it is sufficient to explore scenarios of interest in this benchmark. A comparison between the artificial data and real-word microbiome dataset from \cite{pasolli2016machine} in terms of $\alpha$-diversity is provided in Appendix \ref{fig:comparison_alpha_diversity_datasets}.

\paragraph{PLN-Tree successfully outperforms others under its model}
We start our evaluation by analyzing the performance on the synthetic dataset using $\alpha$-diversity metrics, summarized in Table \ref{tab:synthetic_plntree_benchmark_alpha_diversityj_wasserstein} using Wasserstein distance (see other distances in Table \ref{tab:synthetic_plntree_benchmark_alpha_diversity} in Appendix~\ref{app:performance_benchmark_plntree_synthetic}). As anticipated, PLN-Tree models exhibit superior performance compared to the other method, with the backward variational approximation outperforming the mean-field variant despite being in an amortized setting. Upon delving into the layers of the tree, we observe a gradual decrease in performance across all criteria in the PLN and SPiEC-Easi models, attributable to the Markov tree propagation of the counts, a factor not accounted for by these approaches.
\begin{table*}
    \centering
    \begin{tabular}{@{}lcccc@{}}
    \toprule
    {\textbf{$\alpha$-diversity}} & \textbf{PLN-Tree} & \textbf{PLN-Tree (MF)} & \textbf{PLN} & \textbf{SPiEC-Easi}\\
    \midrule
    \multicolumn{5}{c}{\textbf{Wasserstein Distance} ($\times 10^{2}$)} \\ \midrule
    Shannon $\ell=1$ & \textbf{1.57} (0.50) & 11.23 (0.73) & 14.64 (1.15) & 46.72 (1.63) \\
    Shannon $\ell=2$ & \textbf{3.67} (1.33) & 5.14 (1.20) & 32.04 (1.62) & 89.62 (2.31) \\
    Shannon $\ell=3$ & \textbf{5.82} (1.51) & 7.86 (1.47) & 35.03 (1.68) & 98.49 (2.31) \\
    Simpson $\ell=1$ & \textbf{0.62} (0.21) & 2.69 (0.27) & 4.91 (0.41) & 15.91 (0.65) \\
    Simpson $\ell=2$ & \textbf{0.71} (0.24) & 1.40 (0.31) & 7.35 (0.41) & 22.13 (0.72) \\
    Simpson $\ell=3$ & \textbf{0.85} (0.24) & 1.55 (0.34) & 7.21 (0.41) & 22.05 (0.70) \\
    \bottomrule
    \end{tabular}
\caption{Wasserstein distance between $\alpha$-diversity distributions from synthetic data sampled under the original PLN-Tree model and simulated data under each model trained, averaged over the samplings, with standard deviation.}
\label{tab:synthetic_plntree_benchmark_alpha_diversityj_wasserstein}
\end{table*}

\begin{table*}
    \centering
    \begin{tabular}{@{}lcccc@{}}
    \toprule
    & \textbf{PLN-Tree} & \textbf{PLN-Tree (MF)} & \textbf{PLN} & \textbf{SPiEC-Easi}\\
    \midrule
    \multicolumn{5}{c}{\textbf{Wasserstein Distance} ($\times 10^{2}$)} \\
    \midrule
    $\ell=1$ & \textbf{5.20} (0.62) & 8.61 (0.11) & 10.70 (0.34) & 24.21 (0.73) \\
    $\ell=2$ & \textbf{13.01} (0.14) & 16.37 (0.29) & 17.59 (0.28) & 31.35 (0.67) \\
    $\ell=3$ & \textbf{14.08} (0.13) & 18.13 (0.32) & 20.04 (0.03) & 37.36 (0.87) \\
    \bottomrule
    \end{tabular}
    \caption{Empirical Wasserstein distance between normalized synthetic data sampled under the original PLN-Tree model and normalized simulated data under each modeled trained, for each layer, averaged over the trainings, with standard deviation.}
    \label{tab:synthetic_plntree_benchmark_samples}
\end{table*}

Analyzing $\beta$-diversity through PERMANOVA and PERMDISP tests (see Figure \ref{fig:synthetic_experiments_pvalues}) reveals that, at the deepest layer ($\ell = L$), the centroids and dispersions of PLN and SPiEC-Easi significantly deviate from the original data. Specifically, the rejection rates at $5\%$ significance level are $82\%$ and $96\%$ for PLN, and $100\%$ for both tests applied to SPiEC-Easi. In contrast, PLN-Tree models with backward approximation exhibits rejection rates of only $8\%$ for PERMANOVA and $6\%$ for PERMDISP, suggesting that our model preserves the $\beta$-diversity patterns of the original data compared to the competing methods. Interestingly, the mean-field approximation of PLN-Tree displays a considerably higher rejection rate of around $90\%$ for both tests. At upper layers ($\ell < L$), the backward PLN-Tree model continues to be accepted, demonstrating its robustness across the hierarchy. In comparison, the acceptance rate of PLN  improves from $18\%$ at $\ell = L$ to $80\%$ at $\ell = 1$, while SPiEC-Easi remains consistently rejected across all layers at the 95\% confidence level. These results highlight the consistency and improved performance of our method in modeling hierarchical $\beta$-diversity and the specific interest of the backward approximation over the mean-field approach.
\begin{figure}
    \centering
    \begin{subfigure}[b]{0.98\linewidth}
        \centering
        \includegraphics[width=\linewidth]{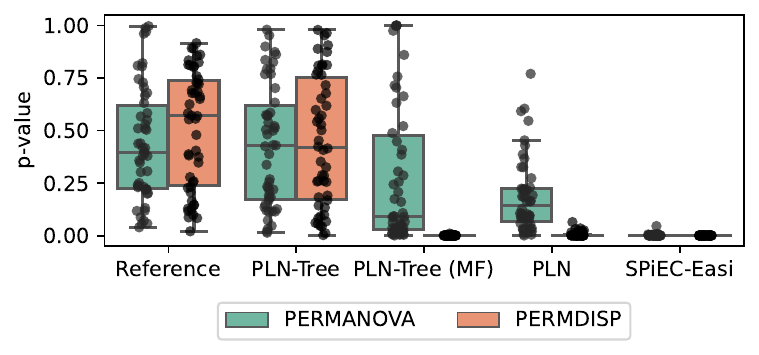} 
        \caption{$\ell = 1$}
    \end{subfigure}
    \hfill
    \begin{subfigure}[b]{0.98\linewidth}
        \centering
        \includegraphics[width=\linewidth]{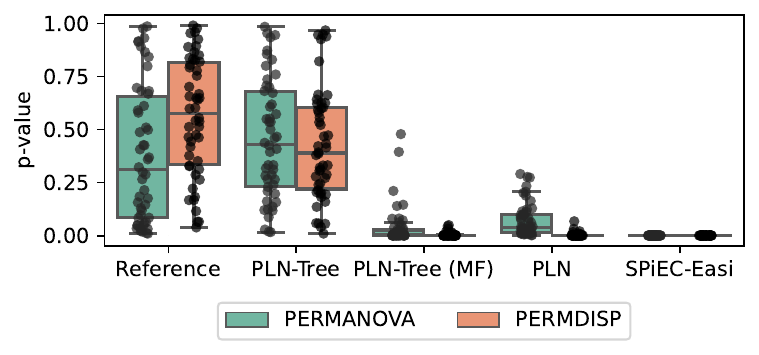} 
        \caption{$\ell = 2$}
    \end{subfigure}
    
    \begin{subfigure}[b]{0.98\linewidth}
        \centering
        \includegraphics[width=\linewidth]{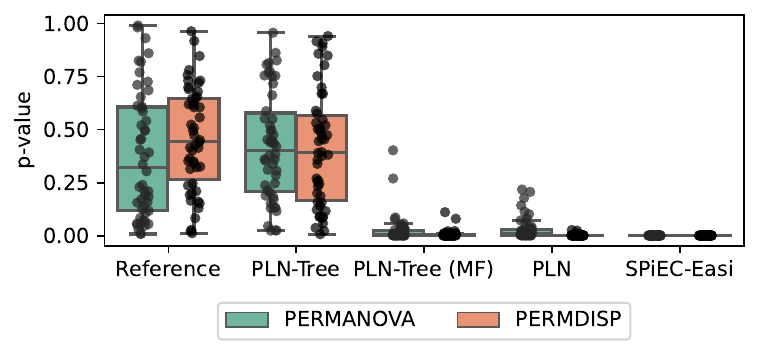} 
        \caption{$\ell = 3$}
    \end{subfigure}
    \hfill
    \caption{$p$-values for PERMANOVA and PERMDISP tests applied on Bray Curtis dissimilarities (layer-wise) computed between $100$ generated data with each model and $100$ sampled PLN-Tree generated data from the training dataset, repeated $50$ times. Reference model corresponds to generated data from the original model to assess the bootstrap variability.}
    \label{fig:synthetic_experiments_pvalues}
\end{figure}

Additionally, Table \ref{tab:synthetic_plntree_benchmark_samples} demonstrates that PLN-Tree-based approaches consistently approximate the distribution of the proportions of the entities at each depth of the tree, contrasting with the other approaches, which exhibits a noticeable performance decline as we descend the tree, matching with the $\alpha$-diversity observations. Looking at the encoders performance in Table \ref{tab:synthetic_plntree_benchmark_encoding}, it appears the backward approximation conserves more information than the mean-field approach in an ideal PLN-Tree framework on unseen samples, illustrating the upside of considering the backward Markov structure of the true posterior for model inference.
\begin{table}
    \centering
    \begin{tabular}{@{}lcccc@{}}
    \toprule
    & \textbf{PLN-Tree} & \textbf{PLN-Tree (MF)} \\
    \midrule
    $\ell=1$ & \textbf{0.999} (0.002) & 0.901 (0.209) \\
    $\ell=2$ & \textbf{0.993} (0.050) & 0.910 (0.137) \\
    $\ell=3$ & \textbf{0.996} (0.020) & 0.990 (0.028) \\
    \bottomrule
    \end{tabular}
\caption{Correlation between reconstructed counts and the test dataset ($1000$ samples) from the original PLN-Tree model, averaged over the samples, with standard deviation.}
\label{tab:synthetic_plntree_benchmark_encoding}
\end{table}

\subsubsection{Artificial data from Markovian Dirichlet}
\label{sec:markov_dirichlet}
\paragraph{Markov-Dirichlet model and benchmark parameters}
In order to provide fair comparisons of the performances of each model in a controlled setup, we simulate hierarchical count data from a process unrelated to PLN framework, extended from the synthetic experiments protocol of \cite{chiquet_pln_network}. First, we define a hierarchical tree fixing the dataset structure, illustrated on Appendix \ref{fig:synthetic_tree_markov_dirichlet}. Then, the steps of the generative process are defined as follows.
\begin{itemize}
    \item \textbf{Base network generation.} Sample an adjacency matrix $\BS{G} \in \mathcal{M}_{K_1 \times K_1}$ using a random graph model like Erdös-R\'enyi (no particular structure), preferential attachment (scale-free property) or affiliation models (community structure). Choose $u,v > 0$ to control the partial correlation and conditioning of the network at the first layer, and deduce a precision matrix $\bOmega = v\BS{G} + \mathrm{diag}(|\mathrm{min}(\mathrm{eig}(v\BS{G}))| + u)$. In our experiments, $v = 0.3$ and $u = 0.1$.
    \item \textbf{First counts generation.} Draw counts $\BF{a} \in \NN^{K_1}$ such that $\log(\BF{a}) \sim \gaussian{\bmu}{\bOmega}$. Compute a probability vector $\BS{\pi} = \softmax{\BF{a}}$ and draw a sampling effort $\mathrm{N} = \exp(\offset)$ from a negative binomial distribution. We obtain the counts of the first layer using a multinomial distribution $\bX^1 \sim \multinomial{\mathrm{N}}{\BS{\pi}}$.
    \item \textbf{Counts propagation.} For each $k \leq K_1$, compute $\balpha_k^1(\bX^1) \in \RR^{\# \childindex{1}{k}}_{>0}$, where $\balpha_k^1(.)$ is an arbitrary function, like a neural network with softplus output in our experiments. Sample weights $\omega_k^1 \in \simplex{\#\childindex{1}{k}}$ from a Dirichlet of parameters $\balpha_k^1(\bX^1)$. Draw the counts of the children of the node $k$ using a multinomial with total count $\rmX_k^1$ and probabilities $\omega_k^1$. Repeat that procedure for the next layers using the counts of the previous layer.
\end{itemize}

To derive the covariance matrix of the first layer, we generate a random adjacency matrix using the Erdös-Rényi graph model. In our architecture, for all layers $\ell$ up to $L$ and nodes $k$ up to $K_{\ell}$, $\alpha^\ell_k$ is structured as a one-layer network with softplus output and a random weight matrix. We set the sampling effort to $N = 20000$ and sample $n = 2000$ hierarchical count data points constituting our synthetic dataset. Following the selection of candidate architectures (detailed in Appendix~\ref{app:experiments_setup_synthetic_markovdirichlet}), we conduct a single training procedure for each model. Subsequently, we sample data from the trained models $M = 10$ times and aggregate the results to address sampling variability. As for the previous synthetic experiment, a comparison between the artificial data and real-word microbiome samples from \cite{pasolli2016machine} in terms of $\alpha$-diversity is performed in Appendix \ref{fig:comparison_alpha_diversity_datasets}.

\paragraph{PLN-Tree outperforms others in hierarchical scenarios} We provide a summary of the model performances in Table~\ref{tab:synthetic_markovdirichlet_benchmark_alpha_diversity_wasserstein},  Table~\ref{tab:synthetic_markovdirichlet_benchmark_samples},  (see Table \ref{tab:synthetic_markovdirichlet_benchmark_alpha_diversity} for other distances), and Table~\ref{tab:synthetic_markovdirichlet_benchmark_encoding}. Notably, PLN-Tree models exhibit superior performance compared to PLN and SPiEC-Easi approaches, which do not account for the underlying Markovian tree structure of the data. Similar to our previous synthetic experiment, we observe that as we delve deeper into the tree structure, the performance of PLN and SPiEC-Easi deteriorates significantly. When looking at the $\alpha$-diversity in Table \ref{tab:synthetic_markovdirichlet_benchmark_alpha_diversity}, the backward variational approach demonstrates superior performance compared to the mean-field approach, which is supported by its higher efficiency at the reconstruction task on unseen samples summarized in Table \ref{tab:synthetic_markovdirichlet_benchmark_encoding}. 
\begin{table*}
    \centering
    \begin{tabular}{@{}lcccc@{}}
    \toprule
{\textbf{$\alpha$-diversity}} & \textbf{PLN-Tree} & \textbf{PLN-Tree (MF)} & \textbf{PLN} & \textbf{SPiEC-Easi} \\
\midrule
\multicolumn{5}{c}{\textbf{Wasserstein Distance} ($\times 10^{2}$)} \\ \midrule
Shannon $\ell=1$ & \textbf{17.70} (0.47) & 21.42 (0.59) & 72.27 (1.70) & 125.10 (1.25) \\
Shannon $\ell=2$ & \textbf{22.23} (0.94) & 29.10 (1.06) & 111.53 (1.81) & 177.18 (1.50) \\
Shannon $\ell=3$ & \textbf{24.32} (0.83) & 37.72 (1.14) & 142.28 (1.99) & 224.07 (1.62) \\
Simpson $\ell=1$ & \textbf{5.69} (0.16) & 5.84 (0.16) & 21.74 (0.60) & 39.01 (0.46) \\
Simpson $\ell=2$ & \textbf{5.21} (0.17) & 5.90 (0.19) & 26.70 (0.59) & 46.26 (0.54) \\
Simpson $\ell=3$ & \textbf{3.91} (0.11) & 5.16 (0.16) & 28.55 (0.59) & 50.12 (0.54) \\
    \bottomrule
    \end{tabular}
\caption{Wasserstein distance on the distribution of $\alpha$-diversity at each layer computed between synthetic data sampled under the Markov Dirichlet model and simulated data under each modeled trained, averaged over the trainings, with standard deviation.}
\label{tab:synthetic_markovdirichlet_benchmark_alpha_diversity_wasserstein}
\end{table*}
\begin{table*}
    \centering
    \begin{tabular}{@{}lcccc@{}}
    \toprule
    & \textbf{PLN-Tree} & \textbf{PLN-Tree (MF)} & \textbf{PLN} & \textbf{SPiEC-Easi} \\
    \midrule
    \multicolumn{5}{c}{\textbf{Wasserstein distance} ($\times 10^{2}$)} \\ \midrule
    $\ell=1$ & \textbf{11.51} (0.25) & 12.47 (0.30) & 25.50 (0.59) & 41.84 (0.50) \\
    $\ell=2$ & \textbf{19.68} (0.25) & 22.02 (0.36) & 43.26 (0.61) & 59.09 (0.55) \\
    $\ell=3$ & \textbf{24.33} (0.24) & 27.15 (0.30) & 51.84 (0.57) & 68.21 (0.52) \\
    \bottomrule
    \end{tabular}
\caption{Empirical Wasserstein distance between normalized synthetic data sampled under the  Markov Dirichlet model and normalized simulated data under each modeled trained, for each layer, averaged over the trainings, with standard deviation.}
\label{tab:synthetic_markovdirichlet_benchmark_samples}
\end{table*}
\begin{table}
    \centering
    \begin{tabular}{@{}lcccc@{}}
    \toprule
    & \textbf{PLN-Tree} & \textbf{PLN-Tree (MF)} \\
    \midrule
    $\ell=1$ & \textbf{0.995} (0.062) & 0.967 (0.103) \\
    $\ell=2$ & \textbf{0.989} (0.065) & 0.967 (0.078) \\
    $\ell=3$ & \textbf{0.987} (0.075) & 0.973 (0.087) \\
    \bottomrule
    \end{tabular}
\caption{Correlation between reconstructed abundances and the test dataset from the Markov Dirichlet model ($1000$ samples), averaged over the samples, with standard deviation.}
\label{tab:synthetic_markovdirichlet_benchmark_encoding}
\end{table}
The results of the $\beta$-diversity tests, presented in Figure \ref{fig:markovdirichlet_experiments_pvalues}, reveal a 100\% rejection rate for non-tree-based methods at the 5\% significance level for both PERMANOVA and PERMDISP tests, confirming their inability to capture $\beta$-diversity patterns in this hierarchical context. Among PLN-Tree methods, the backward approximation shows a notably lower rejection rate ($4\%$ to $58\%$ for PERMANOVA) compared to the mean-field approach ($2\%$ to $84\%$ for PERMANOVA), highlighting the residual backward approximation superiority over the mean-field in learning PLN-Tree models. However, PERMDISP tests at the deepest layer ($\ell = L$) reveal a 100\% rejection rate for all models, indicating that even PLN-Tree methods still struggle to fully capture $\beta$-diversity patterns at the deepest levels in this particular hierarchical dynamic defined by the Markov Dirichlet framework.
\begin{figure}
    \centering
    \begin{subfigure}[b]{0.98\linewidth}
        \centering
        \includegraphics[width=\linewidth]{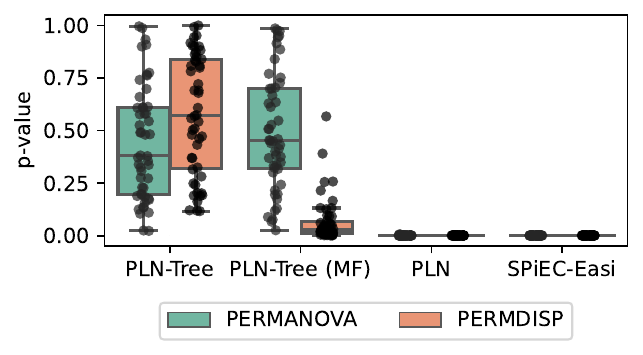} 
        \caption{$\ell = 1$}
    \end{subfigure}
    \hfill
    \begin{subfigure}[b]{0.98\linewidth}
        \centering
        \includegraphics[width=\linewidth]{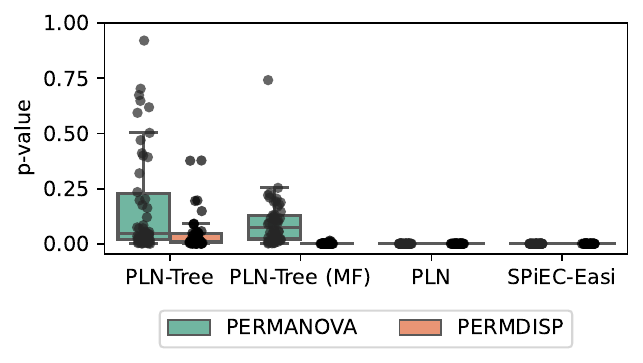} 
        \caption{$\ell = 2$}
    \end{subfigure}
    
    \begin{subfigure}[b]{0.98\linewidth}
        \centering
        \includegraphics[width=\linewidth]{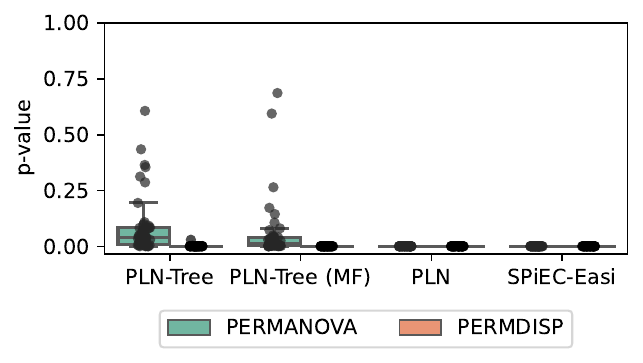} 
        \caption{$\ell = 3$}
    \end{subfigure}
    \hfill
    \caption{$p$-values for PERMANOVA and PERMDISP tests applied on Bray Curtis dissimilarities (layer-wise) computed between $100$ generated data with each model and $100$ sampled Markov Dirichlet generated data from the training dataset, repeated $50$ times.}
    \label{fig:markovdirichlet_experiments_pvalues}
\end{figure}

Thus, this experiment demonstrates the inability for non-tree-based method to capture count data distributions in hierarchical context, as well as the interest of considering the backward structure of the true posterior when doing variational inference to learn PLN-Tree. However, progress is still to be made for PLN-Tree methods to fully capture counts distributions in generalized hierarchical context.

\subsection{Metagenomics dataset: application to the gut microbiome}
\label{sec:metagenomics_exp}
\paragraph{Description of the dataset and preprocessing} We assess the efficacy of the PLN-Tree model using a metagenomics dataset introduced in \cite{pasolli2016machine}. This dataset comprises microbial compositions from both control individuals and patients with various diseases, totaling $3610$ samples. Our analysis focuses exclusively on the gut microbial compositions of disease-associated patients, as recapitulated in Table \ref{tab:metagenomics_dataset_desc}. Each sample is characterized by hierarchical proportion data, with the base tree representing the taxonomy of Archaea, Eukaryota, and Bacteria. Sequencing was conducted using MetaPhlAn2, optimized for bacterial sequencing \citep{truong2015metaphlan2}, thus restricting our study to bacteria. Besides, for computational speed reasons, we limit our analysis to the layers of the taxonomy comprised between the second and fifth layers, which respectively correspond to the \textit{class} and the \textit{family}, as these levels yield sufficient performance disparities between the considered models of this benchmark. To convert the proportions of taxa within each patient's gut into count data, we sample counts from a multinomial distribution with a total count of $\exp(12)$ and gut sample compositions as probabilities, allowing for fair comparison of $\alpha$-diversity and $\beta$-diversity by fixing the total count \citep{schloss2024rarefaction}. Additionally, we implement prevalence filtering using a threshold of $1\times\rme^{-12}$ to filter very rare Operational Taxonomic Units (OTUs). The considered taxonomy after prevalence filtering is illustrated in Figure \ref{fig:taxonomy_metagenomics_and_samples}.
\begin{table*}
    \centering
    \begin{tabular}{lcc|c}
    \toprule
    {\textbf{Label}} & \textbf{Nb of training samples} & \textbf{Nb of test samples} & \textbf{Total}\\
    \midrule 
    IBD (Crohn) & 20 & 5 & 25 \\
    Colorectal Cancer & 38 & 10 & 48\\
    Leaness & 71 & 18 & 89\\
    Liver Cirrhosis & 94 & 24 & 118 \\
    IBD (UC) & 118 & 30 & 148 \\
    Obesity & 131 & 33 & 164 \\
    Type 2 Diabetes & 178 & 45 & 223 \\
    \midrule
    \textbf{Total} & 650 & 165 & 815 \\
    \bottomrule
    \end{tabular}
    \caption{Metagenomics dataset considered in our experiments, extracted from \cite{pasolli2016machine}. The samples are drawn randomly for each label to satisfy these counts.}
    \label{tab:metagenomics_dataset_desc}
\end{table*}
\begin{figure}
    \centering
    \begin{subfigure}[b]{1\linewidth}
        \centering
        \includegraphics[width=1\linewidth]{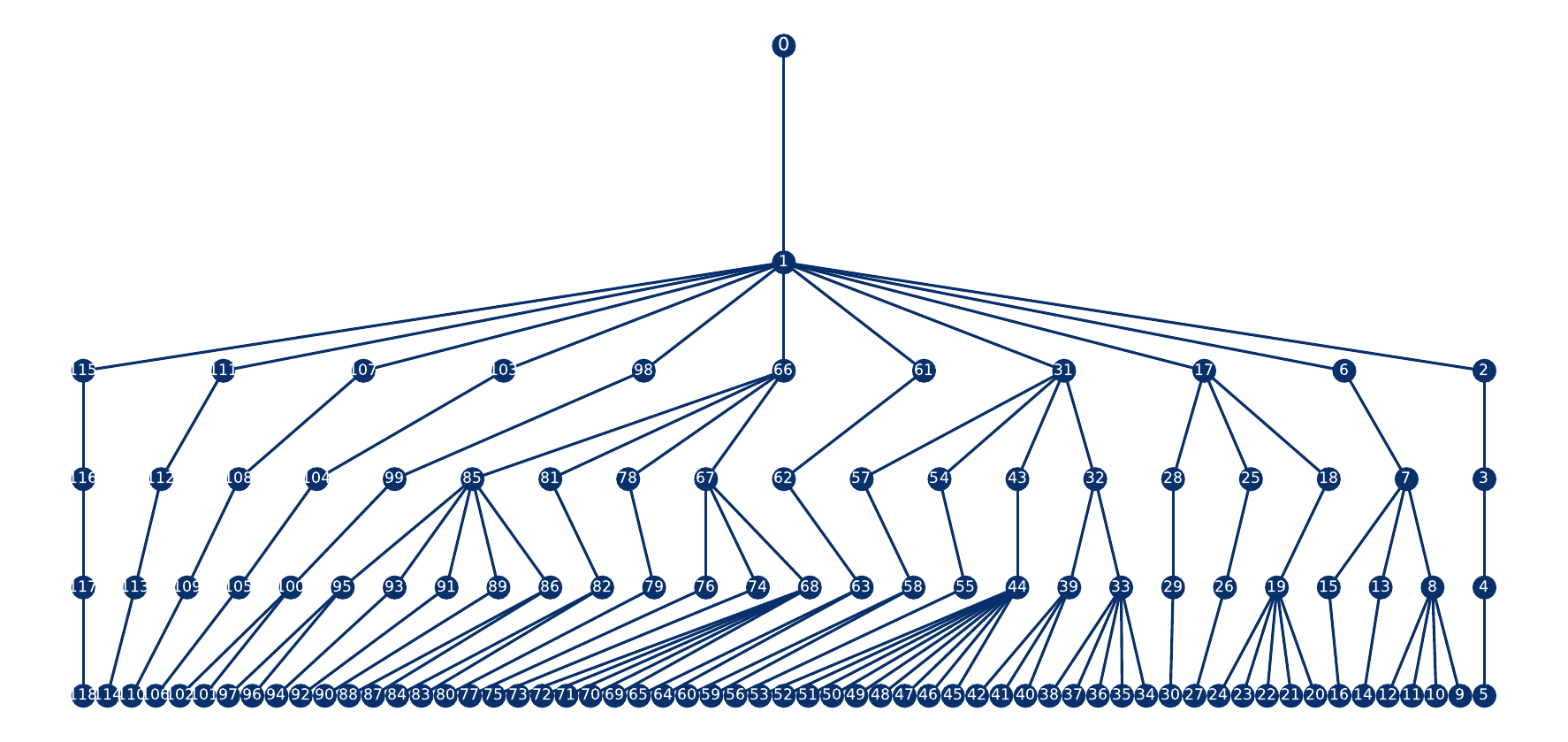}
        \label{fig:metagenomics_taxonomy}
    \end{subfigure}
    \\
    \begin{subfigure}[b]{1\linewidth}
        \centering
        \includegraphics[width=1\linewidth]{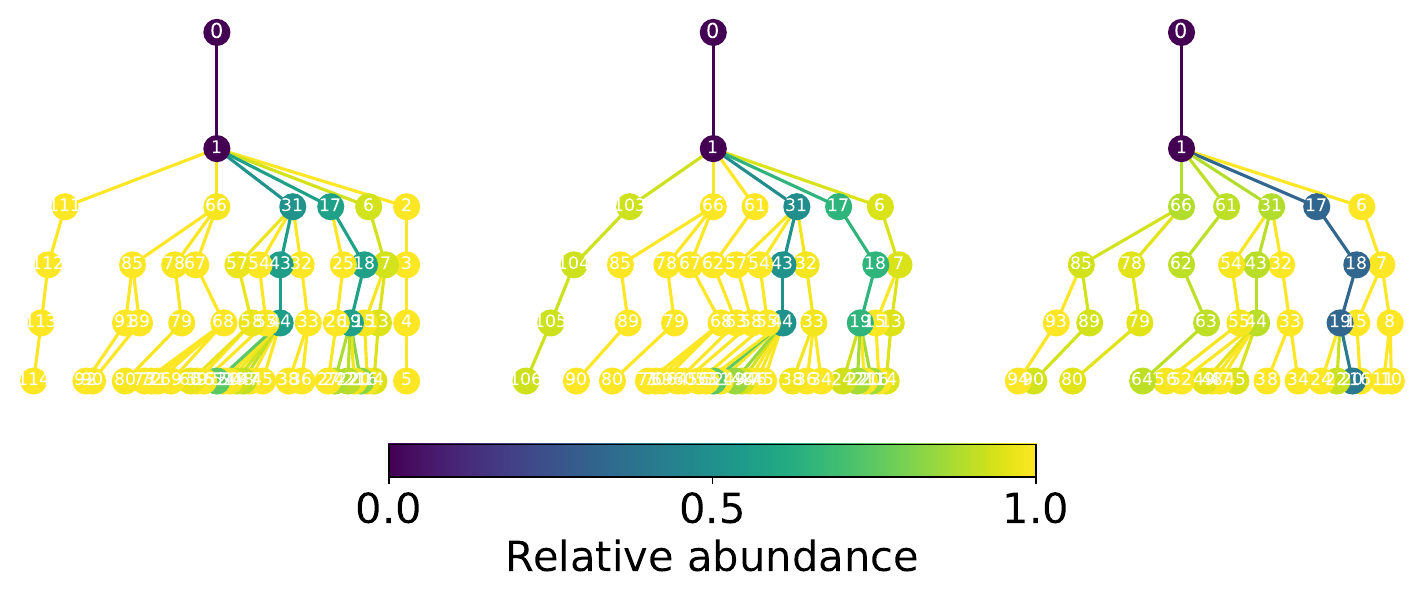}
        \label{fig:abundance_samples_metagenomics}
    \end{subfigure}
    \caption{Graph of the taxonomy considered in the metagenomics experiments (top), and four samples from the dataset (bottom). Indices on the nodes identify taxa and matches with the boxplots indexing in Appendix \ref{fig:metagenomics_abundances_boxplot}.}
    \label{fig:taxonomy_metagenomics_and_samples}
\end{figure}

\subsubsection{Generating microbiome compositions with PLN-Tree}
\label{sec:microbiome_generation}
We provide a summary of the tested and selected architectures for the PLN-Tree models, in Appendix \ref{app:experiments_setup_metagenomics}. Each compared model is trained once, while sampling is repeated $M = 25$ with $2000$ samples to account for sampling variability in the model evaluation. 

\paragraph{Exploiting taxonomic relationships improves synthetic microbiome realism} We provide a summary of the model performances in Tables \ref{tab:metagenomics_benchmark_samples} and \ref{tab:metagenomics_benchmark_alpha_diversity_wasserstein}, while Figure \ref{fig:metagenomics_abundances_boxplot} illustrates the variability of the generations for each model. Notably, the tree-based models exhibit superior performance for $\alpha$-diversity and distribution of proportions modeling compared to state-of-art approaches, which do not account for the taxonomy. Specifically, as we delve deeper into the tree structure, the performance of PLN models declines, while PLN-Tree models maintain consistency with depth. At the deepest layer ($\ell = L$) in Figure \ref{fig:metagenomics_pvalues}, rejection rates obtained from the PERMANOVA test at the 5\% significance level show that the PLN-Tree model with backward approximation is rejected in only 36\% of the tests, compared to 48\% with the mean-field approximation, while PLN and SPiEC-Easi are always rejected. Similarly, for the PERMDISP test, PLN-Tree with backward approximation is rejected in only 8\% of tests, compared to 46\% with the mean-field approach, while the other methods are consistently rejected. These results suggest that PLN-Tree models provide a significantly better approximation of the original $\beta$-diversity than PLN and SPiEC-Easi. For layers above the deepest ($\ell < L$), the acceptance rate for PLN-Tree residual backward model continues to rise over $80\%$ on average, whereas the benchmark models remain largely rejected for both tests, showing only marginal improvements. This highlights the robustness and consistency of PLN-Tree models across different layers of the taxonomy.
\begin{table*}
\centering
\begin{tabular}{@{}lcccc@{}}
\toprule
& \textbf{PLN-Tree} & \textbf{PLN-Tree (MF)} & \textbf{PLN} & \textbf{SPiEC-Easi} \\
\midrule
\multicolumn{5}{c}{\textbf{Wasserstein distance} ($\times 10^{2}$)} \\ \midrule
$\ell=1$ & 5.89 (0.29) & \textbf{4.67} (0.25) & 15.57 (0.59) & 36.16 (1.02) \\
$\ell=2$ & 8.83 (0.28) & \textbf{7.55} (0.14) & 20.65 (0.71) & 42.52 (1.17) \\
$\ell=3$ & 9.27 (0.27) & \textbf{7.76} (0.12) & 20.86 (0.70) & 42.75 (1.16) \\
$\ell=4$ & 17.00 (0.22) & \textbf{15.59} (0.13) & 29.29 (0.72) & 56.19 (0.88) \\
\bottomrule
\end{tabular}
\caption{Empirical Wasserstein distance between normalized metagenomics data and normalized simulated data under each modeled trained, for each layer, averaged over the trainings, with standard deviation.}
\label{tab:metagenomics_benchmark_samples}
\end{table*}
\begin{table*}
    \centering
    \begin{tabular}{@{}lcccc@{}}
    \toprule
{\textbf{$\alpha$-diversity}} & \textbf{PLN-Tree} & \textbf{PLN-Tree (MF)} & \textbf{PLN} & \textbf{SPiEC-Easi} \\
\midrule
\multicolumn{5}{c}{\textbf{Wasserstein distance} ($\times 10^{2}$)} \\ \midrule
Shannon $\ell=1$ & \textbf{1.73} (0.44) & 3.00 (0.44) & 16.49 (1.14) & 43.12 (1.57) \\
Shannon $\ell=2$ & \textbf{2.22} (0.73) & 5.70 (0.97) & 23.21 (1.64) & 57.73 (2.02) \\
Shannon $\ell=3$ & \textbf{2.29} (0.63) & 6.58 (1.02) & 23.96 (1.67) & 59.16 (2.00) \\
Shannon $\ell=4$ & \textbf{2.08} (0.62) & 20.39 (1.08) & 55.32 (2.38) & 127.11 (3.03) \\
Simpson $\ell=1$ & 0.84 (0.14) & \textbf{0.71} (0.12) & 7.18 (0.48) & 17.99 (0.71) \\
Simpson $\ell=2$ & 0.92 (0.24) & \textbf{0.73} (0.19) & 7.49 (0.57) & 19.59 (0.81) \\
Simpson $\ell=3$ & 0.91 (0.23) & \textbf{0.72} (0.19) & 7.46 (0.57) & 19.50 (0.80) \\
Simpson $\ell=4$ & \textbf{0.53} (0.13) & 2.41 (0.21) & 12.91 (0.67) & 31.62 (0.99) \\
    \bottomrule
    \end{tabular}
\caption{Wasserstein distance on $\alpha$-diversity distributions computed between metagenomics data and simulated data under each modeled trained, averaged over the trainings, with standard deviation. The best-performing model in each row is indicated in bold.}
\label{tab:metagenomics_benchmark_alpha_diversity_wasserstein}
\end{table*}
\begin{figure*}
    \centering
    \begin{subfigure}[b]{0.48\linewidth}
        \centering
        \includegraphics[width=\linewidth]{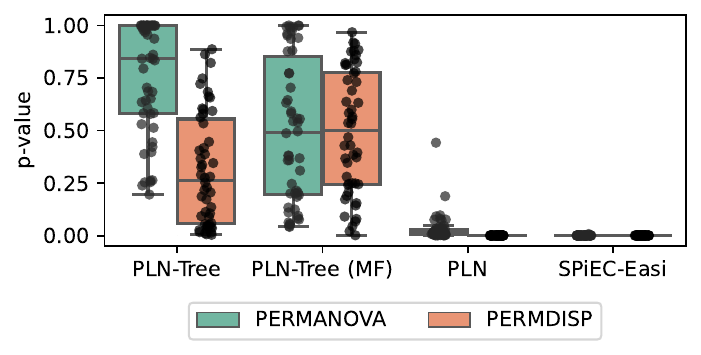} 
        \caption{$\ell = 2$}
    \end{subfigure}
    \hfill
    \begin{subfigure}[b]{0.48\linewidth}
        \centering
        \includegraphics[width=\linewidth]{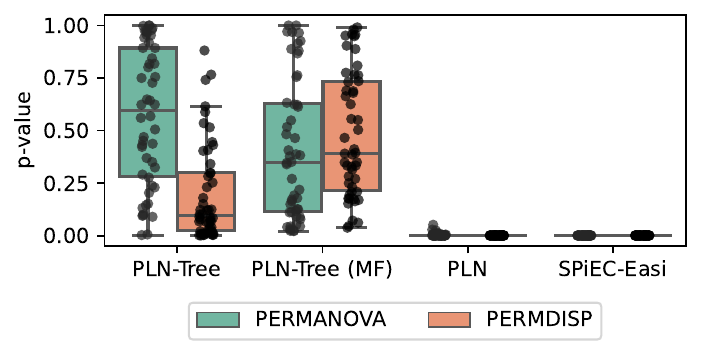} 
        \caption{$\ell = 3$}
    \end{subfigure}
    
    \begin{subfigure}[b]{0.48\linewidth}
        \centering
        \includegraphics[width=\linewidth]{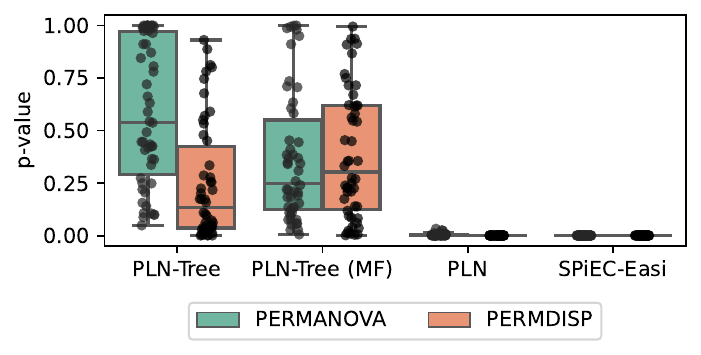} 
        \caption{$\ell = 4$}
    \end{subfigure}
    \hfill
    \begin{subfigure}[b]{0.48\linewidth}
        \centering
        \includegraphics[width=\linewidth]{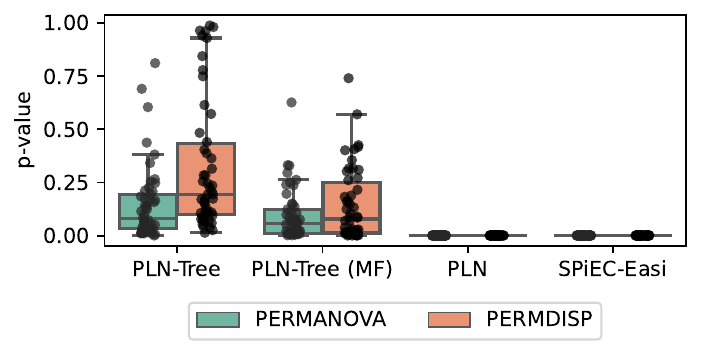} 
        \caption{$\ell = 5$}
    \end{subfigure}
    \caption{$p$-values for PERMANOVA and PERMDISP tests applied on Bray Curtis dissimilarities (layer-wise) computed between $100$ generated data with each model and $100$ sampled microbiome data from the metagenomics dataset, repeated $50$ times.}
    \label{fig:metagenomics_pvalues}
\end{figure*}

These findings suggest that the taxonomy provides pertinent insights into the distribution of bacteria and their interactions within the host's ecosystem, bearing significant biological implications. However, as shown in Appendix \ref{fig:metagenomics_abundances_boxplot}, PLN-Tree approaches struggle with modeling zero-valued abundances (see Bacteria 2, 61, 107 for instance), particularly when using the mean-field approximation. This issue, which accumulates across layers due to the top-down dynamic of the model, could be addressed by using zero-inflation techniques, similar to the approach taken for PLN in \cite{batardiere2024zero}.

\paragraph{Variational approximation performances}
The analysis of the $\alpha$-diversity (see Appendix \ref{tab:metagenomics_benchmark_alpha_diversity}) and the $\beta$-diversity underscores the consistently superior performance of the residual amortized backward approximation compared to the mean-field approach. This observation is further supported by the reconstruction task results summarized in Table \ref{tab:metagenomics_benchmark_encoding}, where structured variational inference exhibits a distinct advantage over the conventional mean-field method in this practical context. Even when the mean-field approximation outperforms the backward approach, as evidenced by the sample distributions in Table \ref{tab:metagenomics_benchmark_samples}, the backward approach remains competitive, indicating its overall effectiveness as the preferred variational approximation method on the metagenomics dataset.
\begin{table}
    \centering
    \begin{tabular}{@{}lcccc@{}}
    \toprule
    & \textbf{PLN-Tree} & \textbf{PLN-Tree (MF)} \\
    \midrule
    $\ell=1$ & \textbf{0.971} (0.113) & 0.850 (0.184) \\
    $\ell=2$ & \textbf{0.971} (0.084) & 0.843 (0.185) \\
    $\ell=3$ & \textbf{0.826} (0.243) & 0.804 (0.258) \\
    $\ell=4$ & \textbf{0.917} (0.165) & 0.736 (0.212) \\
    \bottomrule
    \end{tabular}
\caption{Correlation between reconstructed abundances and the test samples from the metagenomics dataset (see Table \ref{tab:metagenomics_dataset_desc}), averaged over the samples, with standard deviation.}
\label{tab:metagenomics_benchmark_encoding}
\end{table}

\subsubsection{Data preprocessing using PLN-Tree for classification tasks}
\label{sec:data_preprocessing_benchmark}
The metagenomics dataset from \cite{pasolli2016machine} involves a one-vs-all disease classification problem using microbiome proportion data, which are highly sparse and compositional, presenting challenges for direct use in machine learning algorithms \citep{rodriguez2022advances}. To mitigate these constraints, several preprocessing techniques have been proposed, including the additive, centered, and isometric log-ratio transforms, which are commonly used for standard preprocessing \citep{greenacre2021compositional} even though they struggle in highly sparse context and lack theoretical groundings \citep{o2010not}. More recently, \cite{chiquet_plnpca} introduced the use of PLN models to perform PCA in the latent space, demonstrating that latent variables can facilitate machine learning tasks. Therefore, PLN-based approaches can serve as preprocessing pipelines by encoding observations into a latent space, using the identifiable latent variables as input data for machine learning models instead of the raw observations (see Section \ref{sec:identifiability}).

In this context, we explore the use of LP-CLR identifiable features \eqref{eq:latent_tree_counts} derived from our identifiability results in Section~\ref{sec:identifiability} against using regular PLN latent variables or conventional compositional preprocessings such as CLR. Specifically, we benchmark two preprocessing pipelines: \begin{itemize}
    \item The LP-CLR features derived from PLN-Tree restricted at the leaf level of hierarchy, which are computed using the variational encoder learned on the metagenomics dataset using the taxonomy as the underlying hierarchy. This preprocessing is denoted by Latent Tree Proportions - CLR (LTP-CLR).
    \item The LP-CLR features derived from PLN learned at the leaf level of the hierarchy on the metagenomics data, which are computed by projecting the latent variables on $\vect{\mathbf{1}_{K_L}}^\perp$. We denote this preprocessing by Proj-PLN.
\end{itemize}
To illustrate the benefits of these approaches, we focus on the T2D-vs-all classification problem, with the IBD-vs-all scenario detailed in Appendix \ref{app:classif_appendix_preprocessing}. The dataset description is provided in Table \ref{tab:metagenomics_dataset_desc}, and the taxonomic levels considered remain consistent with those in our generative experiments.

\paragraph{PLN limitations and benchmark procedure} We seek to compare the influence of the preprocessing techniques using the conventional PLN latent features, the CLR transform used in SPiEC-Easi, and the LP-CLR features, LTP-CLR and Proj-PLN, against the raw normalized data employed in the study of \cite{pasolli2016machine}. However, the PLN approach proposed in \cite{chiquet_pln} relies on a per-sample variational parameterization, which precludes the development of a generalizable encoder for mapping counts into the latent space. As a result, we train the PLN model on the entire dataset to preprocess the counts—even though, in practical applications, preprocessing models typically involves learning an encoder on a training set and then applying it to unseen data, thereby raising concerns about generalization. In contrast, PLN-Tree employs a flexible neural network parameterization to learn a continuous mapping from counts to latent representations, thus enabling the encoding of unseen data, as explored in Table \ref{tab:metagenomics_benchmark_encoding}. As such, to maintain a fair comparison with PLN, which is trained on the complete dataset, we also train PLN-Tree on all available data; this choice, however, does not demonstrate PLN-Tree's full potential for encoding generalization in real-world applications where Proj-PLN preprocessing is not applicable. 

Upon training both models, we then obtain a latent representation of the counts data by encoding them as averaged samples of each model's variational approximation. Subsequently, the latents obtained from PLN-Tree are mapped using the LP-CLR identifiable transform \eqref{eq:latent_tree_counts} to obtain the LTP-CLR as the leaf level output of LP-CLR, while the LP-CLR transformed PLN features (Proj-PLN) are obtained by projecting the latents on $\vect{\mathbf{1}}^\perp$, yielding two offset-independent microbiome data representations derived from our identifiability results. Then, we select several tabular classifiers and proceed to a $50$ stratified K-Fold training for each model, which allows to account for the training variability on the performances, using $80\%$ of the most precise taxa-abundance data to train the models. To enhance the robustness in each fold, we perform $50$ random grid search iteration to sample the classifiers' hyperparameters, and select the best performing model using $20\%$ of the training set for cross-validation. The hyperparameters' exploration grid is provided in Appendix \ref{app:classif_appendix_preprocessing}.

\paragraph{T2D-vs-all experiment} We consider the classification task of patients with type 2 diabetes against patients with other diseases. In Table \ref{tab:t2d_classif_performances}, we present the performance of various classifiers using the raw proportion data, count data preprocessed with the CLR transform and PLN latents, as well as Proj-PLN and LTP-CLR, employing either the residual backward amortized variational approximation or the mean-field. Overall, our results indicate that both our proposed Proj-PLN and LTP-CLR preprocessings enhance the performances of the classifiers, except for random forests, which, as noted in previous work \citep{yerke2024proportion}, do not benefit well from compositional preprocessing with microbiome data. Nonetheless, the regular CLR and PLN preprocessings are consistently outperformed by the identifiable features derived from the theoretical results across all classifiers, thereby underscoring the practical benefits of identifiability in real-world applications. We also observe that the mean-field approximation is generally outperformed by the proposed residual backward amortized variational approximation, illustrating again the superiority of this approach. The IBD-vs-all experiment conducted in Appendix \ref{app:classif_appendix_preprocessing} highlights similar findings.
\begin{table*}
    \centering
    \resizebox{0.99\textwidth}{!}{
    \begin{tabular}{@{}lc|cc|ccc@{}}
    \toprule
    & \textbf{Proportions} & \textbf{CLR} & \textbf{PLN} & \textbf{Proj-PLN} & \textbf{LTP-CLR (MF)} & \textbf{LTP-CLR} \\
    \midrule
    \multicolumn{7}{c}{\textbf{Logistic Regression}} \\ \midrule
    \textbf{Balanced accuracy} & 0.505 (0.008) & 0.686 (0.035) & 0.665 (0.044) & \textbf{0.689} (0.044) & 0.661 (0.065) & 0.687 (0.046) \\
    \textbf{Precision} & 0.613 (0.119) & 0.76 (0.027) & 0.744 (0.033) & \textbf{0.768} (0.031) & 0.744 (0.036) & 0.767 (0.028) \\
    \textbf{Recall} & 0.726 (0.005) & 0.762 (0.029) & 0.749 (0.034) & \textbf{0.771} (0.027) & 0.741 (0.029) & 0.765 (0.028) \\
    \textbf{F1} & 0.615 (0.01) & 0.756 (0.025) & 0.741 (0.033) & \textbf{0.762} (0.025) & 0.729 (0.033) & 0.756 (0.025) \\
    \textbf{ROC AUC} & 0.646 (0.049) & 0.798 (0.032) & 0.776 (0.041) & \textbf{0.812} (0.036) & 0.762 (0.044) & 0.792 (0.036) \\
    \textbf{PR AUC} & 0.414 (0.064) & 0.596 (0.058) & 0.58 (0.064) & \textbf{0.629} (0.064) & 0.521 (0.058) & 0.581 (0.062) \\
    \midrule
    \multicolumn{7}{c}{\textbf{Linear SVM}} \\ \midrule
    \textbf{Balanced accuracy} & 0.505 (0.02) & 0.688 (0.043) & 0.677 (0.046) & \textbf{0.701} (0.047) & 0.61 (0.075) & 0.674 (0.045) \\
    \textbf{Precision} & 0.554 (0.08) & 0.755 (0.032) & 0.748 (0.034) & 0.764 (0.031) & 0.723 (0.049) & \textbf{0.765} (0.03) \\
    \textbf{Recall} & 0.72 (0.019) & 0.751 (0.032) & 0.74 (0.038) & 0.755 (0.029) & 0.732 (0.023) & \textbf{0.772} (0.03) \\
    \textbf{F1} & 0.611 (0.01) & 0.749 (0.03) & 0.737 (0.033) & 0.755 (0.028) & 0.698 (0.04) & \textbf{0.758} (0.031) \\
    \textbf{ROC AUC} & 0.612 (0.119) & 0.781 (0.032) & 0.766 (0.038) & \textbf{0.796} (0.032) & 0.761 (0.036) & 0.786 (0.037) \\
    \textbf{PR AUC} & 0.409 (0.1) & 0.575 (0.06) & 0.564 (0.064) & \textbf{0.61} (0.063) & 0.52 (0.064) & 0.581 (0.064) \\
    \midrule
    \multicolumn{7}{c}{\textbf{Neural Network}} \\ \midrule
    \textbf{Balanced accuracy} & 0.676 (0.047) & 0.732 (0.04) & 0.709 (0.03) & 0.741 (0.033) & 0.694 (0.042) & \textbf{0.75} (0.042) \\
    \textbf{Precision} & 0.76 (0.037) & 0.797 (0.03) & 0.782 (0.028) & 0.802 (0.027) & 0.775 (0.031) & \textbf{0.808} (0.031) \\
    \textbf{Recall} & 0.77 (0.032) & 0.802 (0.029) & 0.789 (0.027) & 0.805 (0.026) & 0.784 (0.027) & \textbf{0.812} (0.029) \\
    \textbf{F1} & 0.759 (0.034) & 0.797 (0.03) & 0.782 (0.026) & 0.802 (0.026) & 0.773 (0.03) & \textbf{0.808} (0.03) \\
    \textbf{ROC AUC} & 0.787 (0.039) & 0.847 (0.032) & 0.832 (0.033) & \textbf{0.86} (0.028) & 0.783 (0.035) & 0.842 (0.031) \\
    \textbf{PR AUC} & 0.607 (0.064) & 0.691 (0.058) & 0.653 (0.059) & 0.697 (0.052) & 0.615 (0.069) & \textbf{0.704} (0.063) \\
    \midrule
    \multicolumn{7}{c}{\textbf{Random Forest}} \\ \midrule
    \textbf{Balanced accuracy} & \textbf{0.652} (0.072) & 0.619 (0.055) & 0.604 (0.064) & 0.598 (0.051) & 0.637 (0.058) & 0.648 (0.063) \\
    \textbf{Precision} & \textbf{0.809} (0.054) & 0.79 (0.041) & 0.79 (0.055) & 0.772 (0.046) & 0.779 (0.04) & 0.8 (0.037) \\
    \textbf{Recall} & \textbf{0.799} (0.038) & 0.779 (0.027) & 0.774 (0.032) & 0.768 (0.027) & 0.78 (0.029) & 0.792 (0.032) \\
    \textbf{F1} & \textbf{0.758} (0.064) & 0.731 (0.049) & 0.717 (0.061) & 0.713 (0.049) & 0.743 (0.046) & 0.753 (0.055) \\
    \textbf{ROC AUC} & \textbf{0.882} (0.037) & 0.837 (0.044) & 0.858 (0.043) & 0.839 (0.037) & 0.811 (0.042) & 0.844 (0.034) \\
    \textbf{PR AUC} & \textbf{0.751} (0.069) & 0.68 (0.062) & 0.695 (0.084) & 0.666 (0.072) & 0.644 (0.078) & 0.695 (0.067) \\
    \bottomrule
    \end{tabular}
    }
\caption{Classification T2D-vs-all performances for several classifiers on the metagenomics dataset using different preprocessing strategies, averaged over training, with standard deviation. We perform $50$ stratified K-folds using $80\%$ of the dataset with $50$ nested random grid search loops for hyperparameters tuning in each fold using $20\%$ of the training set for cross-validation. The dataset is restricted to the \textit{family} level of the taxonomy.}
\label{tab:t2d_classif_performances}
\end{table*}
Overall, these results demonstrate that identifiable PLN-based features can improve classification performances. In particular, using the latent features rather than the true proportions generally enhances the preprocessing quality of the CLR transform, significantly outperforming the results obtained with the true proportions. 

Further improvements could potentially be attained by using PLN-Tree's identifiable hierarchical LP-CLR features, rather than their leaf level restriction LTP-CLR, by exploring specific recurrent or convolutional deep architectures, but exploring such preprocessing methods is out of the scope of this paper. In particular, hierarchy-aware architectures such as MIOSTONE \citep{jiang2025modeling} could potentially benefit from hierarchical LP-CLR representations instead of using a layer-wise CLR.

\section{Discussion}
\label{sec:conclusion}
In this paper, we introduced the PLN-Tree model as an extension of the Poisson log-normal framework, designed to accommodate hierarchical count data. To learn the parameters of PLN-Tree models, we proposed a structured variational inference approximation to effectively learn the model's parameters by exploiting the true form of the posterior distribution using deep learning parameterizations, showing highly competitive performances against the regular mean-field approximation. Additionally, we established the identifiability properties of PLN-Tree models, providing insights into its theoretical foundations and validating its practical reliability.

To assess the performance of PLN-Tree models, we conducted comprehensive experiments on both synthetic and real-world datasets, benchmarking it against established interaction-based count data models on generative and classification tasks. By using the underlying tree structure, our results underscored the efficacy and consistency of PLN-Tree models in capturing the diversity of the data at all depths, contrary to the regular PLN and SPiEC-Easi approaches. These findings highlight the relevance of hierarchical structures organizing entities, such as the taxonomy, in modeling complex biological systems like the microbiome. Additionally, we demonstrated on a disease classification problem that identifiable features derived from our theoretical results yield performance improvements over traditional compositional preprocessing transforms, such as CLR or PLN. Overall, our contribution offers valuable insights into the practical utility of considering knowledge graphs in modeling approaches, particularly in domains characterized by intricate data structures such as ecology or microbiology, while highlighting the practical benefits of identifiability in generative modeling.

However, PLN-Tree models have certain limitations. While it precisely models proportion-based $\alpha$-diversities, it does not account for sparse structures effectively due to its propagation dynamics. Inspired by the ZI-PLN model \citep{batardiere2024zero}, a zero-inflated PLN-Tree variant could address this limitation and represents a promising direction for future research.
Furthermore, the identification of meaningful interaction networks using PLN-Tree remains an open question. In the general framework, the interaction networks are conditioned on the preceding latent variable, preventing a general interpretation. Although our framework includes a variant where the interaction networks are independent of the preceding latent variable, it would be valuable to investigate this parameterization in specific applications, and explore statistical guarantees analogous to that provided by the faithful correlation property of PLN \citep{chiquet_pln}.

Additionally, while our experiments provide valuable insights into the capabilities of the PLN-Tree framework, further applications to real datasets would be beneficial to illustrate its potential. Specifically, given the significant generative improvements of PLN-Tree over the traditional PLN model in capturing $\alpha$-diversity and $\beta$-diversity characteristics of complex ecosystems, a promising research lead would be to explore its utility in data augmentation tasks. In such instances, the proposed conditional PLN-Tree could be a compelling tool for practitioners aiming to balance datasets or artificially augment small cohorts while investigating the impact of covariates on the quality of the generated data. However, achieving this integration requires substantial deep neural network design to effectively incorporate the covariates. In the same wake, exploring deep recursive or convolutional architectures that could fully leverage the Markov structure of the identifiable latent variables, such as MIOSTONE \citep{jiang2025modeling}, represents another interesting lead for improving the classification performance of PLN-Tree-based preprocessing. Finally, while the PLN-Tree framework offers means to evaluate specific hierarchical clusterings of count data by comparing the generative performances, it requires prior knowledge of the tree structure. In particular in microbiology, although taxonomic clustering is typically available, there is no guarantee that the resulting hierarchy is optimal for a given task—for example, in reconstructing diversity metrics. Therefore, it would be of great practical interest to explore how to infer tree structures from count data. While this extension represents a natural perspective of our work, it poses several theoretical and practical challenges, and remains a widely open topic as illustrated in \citep{momal2020, mao2022dirichlet}.

\backmatter

\bmhead{Acknowledgements}
We would like to gratefully thank Harry Sokol, co-director of Alexandre Chaussard's PhD program and medical advisor for this work, as well as the reviewers who thoroughly contributed to improve the clarity of the manuscript. The final publication is available at Springer via \url{https://doi.org/10.1007/s11222-025-10668-w}.
\section*{Declarations}
\bmhead{Funding} PhD thesis public funds provided by Institute of Computing and Data Sciences (ISCD) from Sorbonne Universit\'e.
\bmhead{Author contributions} All authors conceived the ideas, contributed to the design methodology and investigated the formal analysis. Alexandre Chaussard developed the code, analyzed the data and wrote the first draft of the manuscript. Reviewing and editing has been performed by all authors.
\bmhead{Data availability} The microbiome dataset from \cite{pasolli2016machine} is available online on open access. The generated datasets are reproducible from the provided GitHub of PLN-Tree.
\bmhead{Conflict of interest} The authors declare no conflict of interest.

\bibliography{24-plntree.bib}

\onecolumn
\begin{appendices}
\section{Diversity metrics}
\subsection{Alpha diversity}
\label{sec:alpha_diversity}
Alpha diversities are a set of metrics used in ecology and biology to quantify the variety and distribution of species within a particular ecosystem \citep{gotelli2001quantifying,thukral2017review}. These measures consider the diversity within a single sample (a given ecosystem) without considering interactions with other samples. There exist numerous indices to compute $\alpha$-diversity, which evaluate species richness and/or evenness. Species richness refers to the total number of different species present in the sample, while evenness measures how evenly the entities are distributed among the species. High $\alpha$-diversity often indicates a healthy ecosystem with a wide variety of species, while low $\alpha$-diversity suggests a less diverse or possibly disturbed ecosystem.

\paragraph{Shannon entropy}
Originally introduced for information theory, the Shannon entropy is a widely used $\alpha$-diversity metric in ecology to measure species diversity within a given community \citep{thukral2017review}. It evaluates both species richness and evenness by considering the relative abundance of each species. The Shannon entropy calculates the uncertainty or randomness in species composition, reflecting the information content of the community. Higher values of Shannon entropy indicate greater diversity, where species are more evenly distributed, while lower values suggest lower diversity or dominance by a few species. Denoting by $p_s$ the empirical proportion of the species $s$ in the ecosystem, the Shannon entropy is computed as
\[
H = - \sum_{s=1}^S p_s \log p_s \eqsp.
\]
The interpretation of the Shannon entropy as an $\alpha$-diversity is described for instance in \cite{jost2006entropy}.

\paragraph{Simpson index}
The Simpson $\alpha$-diversity metric assesses species diversity within a specific habitat \citep{thukral2017review}. It focuses on the probability that two individuals randomly selected from the community belong to different species. Letting $p_s$ the empirical proportion of species $s$ in the ecosystem, the Simpson index is computed as
\[
S = \sum_{s=1}^S p_s^2 \eqsp.
\]
This metric emphasizes the importance of species evenness in a community, giving more weight to rare species. The interpretation of the Simpson index as an $\alpha$-diversity is given by its reciprocal as the Inverse Simpson index \citep{jost2006entropy}.

\subsection{Beta diversity}
\label{sec:beta_diversity}
Beta diversity measures the variation in species composition between different communities, providing insight into how ecosystems differ from one another, and are thus often referred to as dissimilarity metrics. Unlike $\alpha$-diversity, which quantifies species richness and evenness within a single community (sample), $\beta$-diversity assesses differences in species composition across multiple ecosystems (pairwise dissimilarity). This measure is crucial in ecology studies, where understanding community structure, biogeography, or the effects of environmental changes is essential. Common $\beta$-diversity metrics include Bray-Curtis dissimilarity \cite{beals1984bray}, which evaluates compositional differences based on species abundances, UniFrac (both unweighted and weighted) \cite{lozupone2005unifrac}, which incorporates phylogenetic distances between communities, and the Jaccard index, which compares species presence and absence. These metrics enable biologist to unveil patterns in communities, going further in the environment's characteristics than agglomerated statistics like $\alpha$-diversity.

\paragraph{Bray Curtis dissimilarity}
The Bray-Curtis $\beta$-diversity is used to quantify the compositional dissimilarity between two communities based on species abundances. It ranges from 0 (completely identical) to 1 (completely dissimilar). The metric emphasizes species abundances, making it sensitive to both shared species and their relative quantities, and is widely used in ecological and microbiome studies for comparing community compositions. Given two samples $i,j$, let $C_{ij}$ the amount of entities shared in both samples, $S_i$ the total count in site $i$ and $S_j$ the total count in site $j$, then the Bray Curtis dissimilarity between $i$ and $j$ is given by
\[
    \mathrm{BC}_{ij} = 1 - \frac{2C_{ij}}{S_i + S_j} \eqsp.
\]

\subsubsection{Comparing Beta diversities}
\label{sec:testing_beta_diversity}

Computing the pairwise $\beta$-diversity between multiple ecosystems results in a matrix which captures the dissimilarity in species composition across samples. To quantify and assess the overall similarity between these ecosystems, this matrix can be further used in statistical analyses such as PERMANOVA and PERMDISP, thus providing a statistical framework for comparing ecosystem differences based on $\beta$-diversity metrics.

\paragraph{PERMANOVA} Permutational Multivariate Analysis of Variance (PERMANOVA) \citep{anderson2014permutational} is a non-parametric multivariate statistical test based on permutations. In our context, it is used to compare $\beta$-diversity between two ecosystems by testing the null hypothesis that the centroids and dispersions of these groups are the same, as defined in the measured space given by the dissimilarity matrix. A rejection of the null hypothesis indicates significant differences between groups regarding their centroids, their dispersion, or both.

\paragraph{PERMDISP} Permutational Analysis of Multivariate Dispersions (PERMDISP) \citep{anderson2006distance} is a non-parametric multivariate test that assesses the homogeneity of group dispersions. It tests whether the spread of $\beta$-diversity within each ecosystem differs significantly, regardless of group centroids, according to the dissimilarity matrix provided by the $\beta$-diversity. The test is commonly used in conjunction with PERMANOVA to distinguish whether differences between groups arise from variability in dispersion rather than differences in central tendency. A rejection of the null hypothesis in PERMDISP suggests that the groups exhibit different degrees of variability, making it particularly valuable for interpreting $\beta$-diversity in ecological studies.

\section{ELBO derivation for PLN-Tree}
\label{app:elbo}
Consider a joint probability density $p_{\btheta}(\bZ, \bX) = p_{\btheta,\bZ}(\bZ)p_{\btheta}(\bX \mid \bZ)$. Then, given a variational approximation $q_{\bphi}(. \mid \bX)$, the ELBO is defined as 
$$
\mathcal{L}(\btheta, \bphi) = \EE{q_{\bphi}(. \mid \bX)}{\log p_{\btheta}(\bX \mid \bZ)} - \KL{q_{\bphi}(. \mid \bX)}{p_{\btheta,\bZ}}\eqsp.
$$
The dependency of $\mathcal{L}$ on $\bX$ is dropped for better readability. In addition, when there is no possible confusion, the dependency of the variational distributions on the observations is dropped.
\begin{proposition}
\label{proposition:elbo_pln_tree}
    Consider the PLN-Tree model of Section \ref{sec:plntree}.  Then, when using the backward variational approximation \eqref{eq:top_down_markov_variational_approx}, the ELBO of PLN-Tree models writes 
    \begin{align*}
        \mathcal{L}(\btheta, \bphi) = \sum_{\ell=1}^L & \frac{1}{2}\EE{\varapprox{1:L}}{\log |\bOmega_{\btheta^\ell}(\bZ^{\ell-1})| - \trace(\Sigmahat_{\ell} \bOmega_{\theta^{\ell}}(\bZ^{\ell-1})) + \log |\bS_{\bphi^\ell}(\bZ^{\ell+1}, \bX^{1:\ell})|} \\
        & + \sum_{\kk{} = 1}^{K_{\ell}} \left( \rmX_{\kk{}}^{\ell} \EE{\varapprox{1:L}}{\BF{m}_{\bphi^\ell,\kk{}}(\bZ^{\ell+1}, \bX^{1:\ell})} - \indicator{\ell=1} \EE{\varapprox{1:L}}{\mathrm{M}_{\ell \mid \ell+1}^k(\bZ^{\ell+1})} \right) \\
        & - \indicator{\ell > 1} \sum_{\kk{-1} = 1}^{K_{\ell-1}} \rmX_{\kk{-1}}^{\ell-1} \EE{\varapprox{1:L}}{\log \sum_{\jj{}\in \childindex{\ell-1}{\kk{-1}}} \rme^{Z_{\jj{}}^\ell}}
        - \indicator{\ell=L} \sum_{\kk{}=1}^{K_\ell} \log \rmX_{\kk{}}^\ell ! - \frac{1}{2} K_{\ell} \eqsp,
    \end{align*}
    such that $\bOmega_{\btheta^1}(\bZ^{0}) = \bOmega_1$, $\bmu_{\btheta^1}(\bZ^{0}) = \bmu_1$, $\bS_{\bphi^L}(\bZ^{L+1}, \bX^{1:L}) = \bS_{\bphi^L}(\bX^{1:L})$, $\BF{m}_{\bphi^L} (\bZ^{L+1}, \bX^{1:L}) = \BF{m}_{\bphi^L}(\bX^{1:L})$,
    and for all $1\leq \ell \leq L-1, 1 \leq k \leq K_\ell \eqsp$,
    \begin{equation}
        \begin{aligned}
            \Sigmahat_{\ell} = &\left(\bmu_{\btheta_\ell}(\bZ^{\ell-1}) - \BF{m}_{\bphi^\ell}(\bZ^{\ell+1}, \bX^{1:\ell})\right) \left(\bmu_{\btheta_\ell}(\bZ^{\ell-1}) - \BF{m}_{\bphi^\ell}(\bZ^{\ell+1}, \bX^{1:\ell})\right)^\top + \bS_{\bphi^\ell}(\bZ^{\ell+1}, \bX^{1:\ell}) \eqsp,
        \end{aligned}
    \label{eq:def:sigmahat}
    \end{equation}
    \begin{equation*}
        \mathrm{M}_{\ell \mid \ell+1}^k(\bZ^{\ell+1}) = \exp\left(\frac{\bS_{\bphi^\ell, \kk{}}(\bZ^{\ell+1}, \bX^{1:\ell})}{2} + \BF{m}_{\bphi^\ell, \kk{}}(\bZ^{\ell+1}, \bX^{1:\ell})\right) \eqsp.
    \end{equation*}
\end{proposition}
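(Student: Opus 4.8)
The plan is to expand the bound directly from its definition $\ELBO(\btheta,\bphi)=\EE{\varapprox{1:L}}{\log p_{\btheta}(\bX,\bZ)}-\EE{\varapprox{1:L}}{\log \varapprox{1:L}(\bZ\mid\bX)}$ and to exploit the two available factorizations: the forward factorization of the joint law as the Gaussian Markov chain $p_{\btheta}(\bZ^1)\prod_{\ell=1}^{L-1}p_{\btheta}(\bZ^{\ell+1}\mid\bZ^\ell)$ times the emissions (a Poisson product at the root and one multinomial per group of children below it), and the backward factorization \eqref{eq:top_down_markov_variational_approx} of $\varapprox{1:L}$ into the Gaussian kernels $\varapprox{\ell\mid\ell+1}$. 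I would split the bound into a latent Gaussian part, an emission part, and the differential entropy of $\varapprox{1:L}$, simplify each with the linearity of expectation, and only regroup everything by layer index at the very end.

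For the latent Gaussian part, each $\log p_{\btheta}(\bZ^\ell\mid\bZ^{\ell-1})$ is a Gaussian log-density with precision $\bOmega_{\btheta^\ell}(\bZ^{\ell-1})$ and mean $\bmu_{\btheta_\ell}(\bZ^{\ell-1})$; writing its quadratic form as a trace and using $\EE{}{(\bZ^\ell-\bmu)(\bZ^\ell-\bmu)^\top}=(\bmu-\BF{m})(\bmu-\BF{m})^\top+\bS$ produces the terms $\tfrac12\log|\bOmega_{\btheta^\ell}(\bZ^{\ell-1})|-\tfrac12\trace(\Sigmahat_\ell\,\bOmega_{\btheta^\ell}(\bZ^{\ell-1}))$ with $\Sigmahat_\ell$ as in \eqref{eq:def:sigmahat}. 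The differential entropy of the backward chain is the sum of the conditional entropies of the kernels $\varapprox{\ell\mid\ell+1}$, which yields the $\tfrac12\EE{}{\log|\bS_{\bphi^\ell}(\bZ^{\ell+1},\bX^{1:\ell})|}$ contributions, and the $\log 2\pi$ constants of the prior normalizations cancel against those of the entropy, leaving the additive $K_\ell$ constant of the statement. The \emph{delicate point}, which I expect to be the main obstacle, is that the prior mean and precision at layer $\ell$ are functions of $\bZ^{\ell-1}$ while the conditional variational moments $\BF{m}_{\bphi^\ell}$ and $\bS_{\bphi^\ell}$ describe the law of $\bZ^\ell$ given $\bZ^{\ell+1}$: since $\varapprox{1:L}$ is a backward Markov chain, $\bZ^{\ell-1}$ is a descendant of $\bZ^\ell$, so the two are not conditionally independent given $\bZ^{\ell+1}$. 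Getting $\Sigmahat_\ell$ to emerge with exactly the variational moments conditioned on $\bZ^{\ell+1}$ while the prior parameters remain evaluated at $\bZ^{\ell-1}$ requires a careful choice of the order of integration in the nested expectation and a precise use of the Markov structure of \eqref{eq:top_down_markov_variational_approx}.

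For the emission part, the root Poisson log-density contributes $\sum_k(\rmX_k^1\rmZ_k^1-\rme^{\rmZ_k^1}-\log\rmX_k^1!)$; taking expectations turns the linear term into $\rmX_k^1\,\EE{}{\BF{m}_{\bphi^1,k}(\bZ^2,\bX^1)}$ and, using the log-normal moment $\EE{}{\rme^{\rmZ_k^1}\mid\bZ^2}=\exp(\tfrac12\bS_{\bphi^1,k}+\BF{m}_{\bphi^1,k})$, turns the exponential into $\mathrm{M}^k_{1\mid2}(\bZ^2)$, both tagged by $\indicator{\ell=1}$. Each multinomial emission at a node $k$ of layer $\ell$ contributes $\log\rmX_k^\ell!-\sum_{j\in\childindex{\ell}{k}}\log\rmX_j^{\ell+1}!+\sum_{j\in\childindex{\ell}{k}}\rmX_j^{\ell+1}\rmZ_j^{\ell+1}-\rmX_k^\ell\log\sum_{j\in\childindex{\ell}{k}}\rme^{\rmZ_j^{\ell+1}}$, where I would invoke the compositionality constraint \eqref{eq:tree_compositionality}, $\sum_{j\in\childindex{\ell}{k}}\rmX_j^{\ell+1}=\rmX_k^\ell$, to pull the normalizer out of the softmax and obtain the clean log-sum-exp term. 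Summing over children groups and reindexing the transition $\ell-1\to\ell$ onto layer $\ell$ then produces, for $\ell\ge2$, the linear terms $\rmX_k^\ell\,\EE{}{\BF{m}_{\bphi^\ell,k}}$ and the penalties $\indicator{\ell>1}\sum_{k'}\rmX_{k'}^{\ell-1}\EE{}{\log\sum_{j\in\childindex{\ell-1}{k'}}\rme^{\rmZ_j^\ell}}$.

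The remaining nontrivial bookkeeping is the treatment of the factorial terms over the tree. Each internal node leaves a parent factorial $+\log\rmX_k^\ell!$ and children factorials $-\log\rmX_j^{\ell+1}!$; since every node of layer $\ell+1$ is the child of exactly one node of layer $\ell$, summing over all nodes makes the parent factorials of layer $\ell$ cancel the children factorials created at layer $\ell+1$, so the collection telescopes. Adding the root Poisson factorial $-\log\rmX_k^1!$ removes the layer-$1$ contribution as well, and only the deepest layer survives, giving the single term $-\indicator{\ell=L}\sum_k\log\rmX_k^L!$. Once the telescoping is done, I would regroup all surviving terms by their layer index to recover the stated expression for $\ELBO(\btheta,\bphi)$.
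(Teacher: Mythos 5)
Your proposal is correct and follows essentially the same route as the paper's proof: the paper groups the bound as $\EE{\varapprox{1:L}}{\log \posterior{1:L}(\bX\mid\bZ)} - \KL{\varapprox{1:L}}{\prior{1:L}}$ rather than joint-minus-entropy, but once the divergence is split into per-layer Gaussian terms the two computations coincide step for step — the same second-moment identity producing $\Sigmahat_{\ell}$, the same log-normal moment $\mathrm{M}^{k}_{1\mid 2}$ for the Poisson root, the same use of the compositionality constraint \eqref{eq:tree_compositionality} to pull the normalizer out of the multinomial softmax, and the same telescoping of factorials leaving only $-\sum_{k}\log \rmX_k^L!$. On the one step you flag as delicate, be aware that the paper does nothing more careful than the substitution you were hesitant to make: it writes $\KL{\varapprox{1:L}}{\prior{1:L}}$ as $\sum_{\ell}\EE{\varapprox{1:L}}{\KL{\varapprox{\ell\mid\ell+1}}{\prior{\ell\mid\ell-1}}}$ with both $\bZ^{\ell+1}$ and $\bZ^{\ell-1}$ frozen inside a closed-form Gaussian--Gaussian divergence, i.e., it replaces $\bZ^{\ell}$ by its backward-kernel moments $\BF{m}_{\bphi^\ell},\bS_{\bphi^\ell}$ even though, as you correctly observe, $\bZ^{\ell-1}$ is generated from $\bZ^{\ell}$ under \eqref{eq:top_down_markov_variational_approx} and is therefore not independent of $\bZ^{\ell}$ given $\bZ^{\ell+1}$; your scruple thus identifies the genuinely thin point of the argument itself rather than a gap specific to your write-up, and no "careful order of integration" in the paper resolves it.
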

\begin{proof}
    The prior distribution of $\bZ$ is denoted by $\prior{\bZ}(\bZ) = \prior{1}(\bZ^1) \prod_{\ell=1}^{L-1} \prior{\ell+1 | \ell}(\bZ^{\ell+1} | \bZ^\ell)$.
    By definition of the ELBO,
    \[
    \ELBO(\btheta, \bphi) = \EE{\varapprox{1:L}}{\log \posterior{1:L}(\bX|\bZ)} - \KL{\varapprox{1:L}}{\prior{\bZ}} \eqsp.
    \]
    Using the Markov tree structure of the observed counts \eqref{eq:plntree_joint_density} yields
    \begin{equation}
    \label{eq:decomposition_conditional_likelihood}
        \EE{\varapprox{1:L}}{\log \posterior{1:L}(\bX|\bZ)} = \EE{\varapprox{1:L}}{\log \posterior{1}(\bX^1 | \bZ^1)} + \sum_{\ell=1}^{L-1} \sum_{\kk{}=1}^{K_{\ell}} \EE{\varapprox{1:L}}{\log \posterior{k,\ell}(\child{X}{k}{\ell} | \child{Z}{k}{\ell}, \rmX_k^{\ell})} \eqsp.
    \end{equation}
    Since the first layer is modeled by a Poisson log-normal distribution, the emission distribution writes 
    $$
    \posterior{1}(\bX^1 \mid \bZ^1) = \prod_{k=1}^{K_1} \rme^{-\rme^{\rmZ^1_k}}\frac{\rme^{\rmZ^1_k \rmX_k^1}}{\rmX_k^1 !} \eqsp.
    $$
    Using the tower property, conditioning on $\bZ^{\ell+1}$ yields $\mathds{E}_{q_{\bphi,1:L}}[{\rmZ_j^{\ell}}] = \EE{q_{\bphi,1:L}}{\BF{m}_{\bphi^\ell,j}(\bZ^{\ell+1}, \bX^{1:\ell})}\eqsp$, the first term can then be expressed as
    \begin{multline*}
        \EE{\varapprox{1:L}}{\log \posterior{1}(\bX^1 | \bZ^1)} = \sum_{\pkk{1}=1}^{K_1} \rmX_{\pkk{1}}^1 \EE{\varapprox{1:L}}{\BF{m}_{\bphi^1, \pkk{1}}(\bZ^2, \bX^{1})} \\ - \EE{\varapprox{1:L}}{\exp\left(\frac{\bS_{\bphi^1, \pkk{1}}(\bZ^2, \bX^{1})}{2} + \BF{m}_{\bphi^\ell, \pkk{1}}(\bZ^2, \bX^{1})\right)} - \log (\rmX_{\pkk{1}}^1 !) \eqsp.
    \end{multline*}
    The propagation of the counts along the tree conditionally to the respective latent variables and the parent counts is given by a multinomial distribution (see Section \ref{sec:plntree}) such that
    $$
    \posterior{k,\ell}(\child{\bX}{k}{\ell} \mid \child{\bZ}{k}{\ell}, \rmX_k^\ell) = \frac{\rmX_k^\ell !}{\prod_{j \in \childindex{\ell}{k}} \rmX_j^{\ell+1}!} \prod_{j \in \childindex{\ell}{k}}\left( \frac{\rme^{\rmZ_j^{\ell+1}}}{\sum_{v\in \childindex{\ell}{k}} \rme^{\rmZ_v^{\ell+1}}} \right)^{\rmX_j^{\ell+1}} = \frac{\rmX_k^\ell !}{\prod_{j \in \childindex{\ell}{k}} \rmX_j^{\ell+1}!} \frac{\rme^{\sum_{j \in \childindex{\ell}{k}} \rmZ_j^{\ell+1} \rmX_j^{\ell+1}}}{\left(\sum_{j\in \childindex{\ell}{k}} \rme^{\rmZ_j^{\ell+1}} \right)^{\rmX_k^{\ell}}} \eqsp.
    $$
    Noticing that $\sum_{k=1}^{K_{\ell}} \sum_{j \in \childindex{\ell}{k}} \log (\rmX_j^{\ell+1} !) = \sum_{k=1}^{K_{\ell+1}} \log (\rmX_k^{\ell+1} !)$, the second term of \eqref{eq:decomposition_conditional_likelihood} can be explicated as
    \begin{multline*}
        \sum_{\ell=1}^{L-1} \sum_{\kk{}=1}^{K_{\ell}} \EE{\varapprox{1:L}}{\log \posterior{k,\ell}(\child{X}{k}{\ell} | \child{Z}{k}{\ell}, \rmX_k^{\ell})} = \sum_{\pkk{1}=1}^{K_1} \log (\rmX_{\pkk{1}}^1!) - \sum_{\pkk{L}=1}^{K_{L}} \log (\rmX_{\pkk{L}}^L!)  \\+ \sum_{\ell=1}^{L-1} \sum_{\kk{}=1}^{K_{\ell}} \left\{\sum_{\jj{+1} \in \childindex{\ell}{\kk{}}} \rmX_{\jj{+1}}^{\ell+1} \EE{\varapprox{1:L}}{\rmZ_{\jj{+1}}^{\ell+1}}  
        - \rmX_{\kk{}}^\ell \EE{\varapprox{1:L}}{\log \left(\sum_{\jj{+1}\in \childindex{\ell}{\kk{}}} \rme^{\rmZ_{\jj{+1}}^{\ell+1}}\right)}\right\} \eqsp.
    \end{multline*}
Combining the previous results provides the expected conditional log-likelihood as
    \begin{align*} 
        \EE{\varapprox{1:L}}{\log \posterior{1:L}(\bX|\bZ)} = \sum_{\ell=1}^L
        & \sum_{\kk{} = 1}^{K_{\ell}} \bigg( \rmX_{\kk{}}^{\ell} \left(\indicator{\ell < L}\EE{\varapprox{1:L}}{\BF{m}_{\bphi^\ell,\kk{}}(\bZ^{\ell+1}, \bX^{1:\ell})} + \indicator{\ell = L} \BF{m}_{\bphi^L,k}(\bX^{1:L})\right) \\ 
        &\quad  - \indicator{\ell=1} \EE{\varapprox{1:L}}{\exp\left(\frac{\bS_{\bphi^\ell, \kk{}}(\bZ^{\ell+1}, \bX^{1:\ell})}{2} + \BF{m}_{\bphi^\ell, \kk{}}(\bZ^{\ell+1}, \bX^{1:\ell})\right)} \bigg) \\
        &- \indicator{\ell > 1} \sum_{\kk{-1} = 1}^{K_{\ell-1}} \rmX_{\kk{-1}}^{\ell-1} \EE{\varapprox{1:L}}{\log \left(\sum_{\jj{}\in \childindex{\ell-1}{\kk{-1}}} \rme^{\rmZ_{\jj{}}^\ell}\right)} - \indicator{\ell=L} \sum_{\kk{}=1}^{K_\ell} \log (\rmX_{\kk{}}^\ell !) \eqsp.
    \end{align*}
    The divergence term can be expressed as
    \begin{align*}
        \KL{\varapprox{1:L}}{\prior{\bZ}} &
        \\&\hspace{-2cm}=\EE{\varapprox{1:L}}{\log \left( \frac{\varapprox{1\mid 2}(\bZ^1 \mid \bZ^2, \bX^{1:2})}{\prior{1}(\bZ^1)} \prod_{\ell=2}^{L-1} \frac{\varapprox{\ell \mid \ell+1}(\bZ^\ell \mid \bZ^{\ell+1}, \bX^{1:\ell})}{\prior{\ell \mid \ell-1}(\bZ^\ell \mid \bZ^{\ell-1})} \frac{\varapprox{L}(\bZ^L \mid \bX^{1:L})}{\prior{L \mid L-1}(\bZ^L \mid \bZ^{L-1})} \right)} \\
        &\hspace{-2cm}= \EE{\varapprox{1:L}}{\KL{\varapprox{1 \mid 2}}{\prior{1}}} + \sum_{\ell=2}^{L-1} \EE{\varapprox{1:L}}{\KL{\varapprox{\ell \mid \ell+1}}{\prior{\ell \mid \ell-1}}} \\
        &\hspace{6.5cm}+\EE{\varapprox{1:L}}{\KL{\varapprox{L}}{\prior{L \mid L-1}}}.
    \end{align*}
    For $1 < \ell < L$, the Kullback-Leibler divergence writes
    \begin{multline*}
        \KL{\varapprox{\ell \mid \ell+1}}{\prior{\ell \mid \ell - 1}} = - \frac12 \left[ \log|\bOmega_{\theta^{\ell}}(\bZ^{\ell-1})| + \log|\bS_{\bphi^\ell}(\bZ^{\ell+1}, \bX^{1:\ell})| + K_\ell\right]\\
        + \frac{1}{2}\trace\bigg(\Sigmahat_{\ell}\bOmega_{\btheta^\ell}(\bZ^{\ell-1})\bigg)\eqsp,
    \end{multline*}
    where $\Sigmahat_{\ell}$ is defined in \eqref{eq:def:sigmahat}.
    Following the same steps for the other terms yields
    \begin{multline*}
    \KL{\varapprox{1:L}}{\prior{\bZ}} \\ = -\frac{1}{2} \sum_{\ell=1}^{L} \EE{\varapprox{1:L}}{\log |\bOmega_{\btheta^\ell}(\bZ^{\ell-1})| + \log |\bS_{\bphi^\ell}(\bZ^{\ell+1}, \bX^{1:\ell})| - \trace\left(\Sigmahat_{\ell} \bOmega_{\btheta^\ell}(\bZ^{\ell-1})\right)} + K_\ell \eqsp,
    \end{multline*}
    which concludes the proof.
\end{proof}

\subsection{PLN-tree ELBO with offset modeling}
\label{sec:plntree_elbo_offset}
For a sample $i$, let $\offset_i \in \RR$ its offset, following H\ref{hypH:plntree_offset}:
\begin{hypH}
\label{hypH:plntree_offset}
\begin{itemize}
    \item The $(\offset_i, \bZ_i, \bX_i)_{1 \leq i \leq n}$ are i.i.d., and for $\ell \leq L-1$, conditionally on $\offset, \bZ, (\bX^v)_{1 \leq v \leq \ell}$, the random variables $(\child{X}{k}{\ell})_{1 \leq k \leq K_\ell}$ are independent and the conditional law of $\child{X}{k}{\ell}$ depends only on $\child{Z}{k}{\ell}$ and and $\rmX_{\kk{}}^\ell$.
    \item The distribution of the offset $\offset$ is a Gaussian mixture, and conditionally on $\bX$, the offset $\offset$ and the latent variables $\bZ$ are independent.
    \item The latent process $(\bZ^\ell)_{1\leq \ell \leq L}$ is a Markov chain with initial distribution $\bZ^1 \sim \gaussian{\bmu_1}{\bSigma_1}$ and such that for all $1\leq \ell \leq L-1$, the conditional distribution of $\bZ^{\ell + 1}$ given $\bZ^{\ell}$ is Gaussian with mean $\bmu_{\btheta_{\ell+1}}(\bZ^{\ell})$ and variance $\bSigma_{\btheta_{\ell+1}}(\bZ^{\ell})$.
    \item Conditionally on $\bZ^1$, $\bX^1$ has a Poisson distribution with parameter $\exp({\bZ^1 + \offset})$ and for all $1\leq \ell \leq L-1$, $1\leq k \leq K_\ell$, conditionally on $\rmX_{\kk{}}^\ell$ and $\child{Z}{k}{\ell}$, $\child{X}{k}{\ell}$ has a multinomial distribution with parameters $\softmax{\child{Z}{k}{\ell}}$ and $\rmX_{k}^\ell$.
\end{itemize}    
\end{hypH}
We define the following variational approximation to compute the unknown posterior:
\begin{hypH}
\label{hypH:offset_variational}
    \begin{itemize}
        \item Inheriting the property of the true posterior, under the variational approximation, $\offset$ and $\bZ$ are independent conditionally to $\bX$.
        \item The variational approximation $\varapproxoffset(\offset | \bX)$ is a Gaussian with mean $\rmm_o(\bX)$ and variance $\rms_o^2(\bX)$.
        \item The latent posterior $\varapproxZ{1:L}(\bZ | \bX)$ is a backward Markov chain as defined in \eqref{eq:top_down_markov_variational_approx}.
    \end{itemize}
\end{hypH}

\begin{proposition}
\label{prop:elbo:offset}
    Assume that H\ref{hypH:plntree_offset} and H\ref{hypH:offset_variational} hold. Denote by $\ELBO_{\mid \offset}(\btheta, \bphi)$ the ELBO of the generative model from Proposition \ref{proposition:elbo_pln_tree} with shifted latent means $\bmu_1 + \offset$ and $\BF{m}_{\bphi^1}(.) + \offset$, then the ELBO of the offset-modeled PLN-Tree is given by
    \begin{align*}
        \ELBO_{\mathrm{offset}}(\btheta, \bphi) &= \ELBO_{\mid \offset}(\btheta, \bphi) + 2 \EE{\varapproxoffset}{\log p_{\btheta}(\offset)} + \frac{1}{2} \log s^2_o(\bX) + \frac{1 + \log 2\pi}{2} \eqsp.
    \end{align*}
\end{proposition}
\begin{proof}
\label{proof:elbo_offset}
    By definition,
    \[
    \ELBO_{\mathrm{offset}}(\btheta, \bphi) = \EE{\varapprox{1:L}}{\log p_{\btheta}(\bX, \bZ, \offset)} - \KL{\varapprox{1:L}}{\distrib{(\offset, \bZ)}{}} \eqsp.
    \]
    Conditioning $(\bX, \bZ)$ by $\offset$ yields
    \[
    \ELBO_{\mathrm{offset}}(\btheta, \bphi) = \ELBO_{\mid \offset}(\btheta, \bphi) + \EE{\varapproxoffset}{\log p_{\btheta}(\offset)} + \KL{\varapproxoffset}{\prioroffset} \eqsp.
    \]
    Using the KL divergence definition
    \[
    \KL{\varapproxoffset}{\prioroffset} = -\entropy_{\varapproxoffset} - \EE{\varapproxoffset}{\log p_{\btheta}(\offset)} \eqsp,
    \]
    since $\varapproxoffset$ is Gaussian, its entropy is given by $\frac{1}{2}\log(2\pi\rme s_o^2(\bX))$, which concludes the proof.
\end{proof}

\section{Identifiability results}

\subsection{PLN identifiability}
\begin{lemma}
\label{lemma:poisson_composed_identifiability}
    Let $\rmZ$ and $\tilde\rmZ$ be supported on $\RR^*_+$, and $\rmX \sim \poisson{\rmZ}$ and $\tilde\rmX \sim \poisson{\rmZ}$. Then, if $\rmX$ and $\tilde\rmX$ have the same distribution, $\rmZ$ and $\tilde\rmZ$ have the same distribution.
\end{lemma}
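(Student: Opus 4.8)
The plan is to use the classical characterisation of a mixed Poisson law through the Laplace transform of its mixing variable. First I would write the marginal probability mass function of $\rmX$ by integrating the Poisson kernel against the law of $\rmZ$,
\[
\PP(\rmX = k) = \EE{}{\rme^{-\rmZ}\frac{\rmZ^k}{k!}}, \qquad k \in \NN,
\]
and then assemble the probability generating function. For $s\in[0,1]$ all terms are nonnegative, so Tonelli's theorem permits exchanging sum and expectation, and summing the Poisson series inside gives
\[
G_{\rmX}(s) := \EE{}{s^{\rmX}} = \EE{}{\rme^{(s-1)\rmZ}} = \mathcal{L}_{\rmZ}(1-s),
\]
where $\mathcal{L}_{\rmZ}(t) = \EE{}{\rme^{-t\rmZ}}$ is the Laplace transform of $\rmZ$. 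Since $\rmZ$ is supported on $\RR_+^*$, $\mathcal{L}_{\rmZ}$ is finite on $[0,\infty)$.

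Next I would transfer the hypothesis. If $\rmX \equallaw \tilde\rmX$ then $G_{\rmX} = G_{\tilde\rmX}$ on $[0,1]$, and the change of variable $t = 1-s$ yields $\mathcal{L}_{\rmZ}(t) = \mathcal{L}_{\tilde\rmZ}(t)$ for every $t\in[0,1]$. The Laplace transform of a probability measure on $[0,\infty)$ is real-analytic on $(0,\infty)$, so equality on the nondegenerate interval $[0,1]$ forces $\mathcal{L}_{\rmZ} = \mathcal{L}_{\tilde\rmZ}$ on all of $[0,\infty)$ by analytic continuation. Invoking the injectivity of the Laplace transform on probability measures on $[0,\infty)$ then gives $\rmZ \equallaw \tilde\rmZ$, which is the claim.

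The only genuinely delicate points are the two appeals to standard analysis: the justification of the sum–expectation interchange (handled by nonnegativity on $[0,1]$) and, more importantly, the invocation of the uniqueness theorem for Laplace transforms, which is what actually closes the argument. An alternative route avoids analytic continuation by reading off from the mass function that the tilted finite measures $\rme^{-z}\,\mathrm{d}\PP_{\rmZ}(z)$ and $\rme^{-z}\,\mathrm{d}\PP_{\tilde\rmZ}(z)$ share all their moments $\EE{}{\rmZ^k\rme^{-\rmZ}}$; since $z^k\rme^{-z}\le (k/\rme)^k$ these moments grow slowly enough to satisfy Carleman's condition, giving moment determinacy and hence the same conclusion. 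I expect the main obstacle to be purely expository — choosing which determinacy result to cite and verifying its hypotheses — rather than any computational difficulty, since the positivity of the support makes every transform used here well defined.
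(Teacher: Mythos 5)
Your proof is correct and follows the same core computation as the paper's: evaluating $\EE{}{t^{\rmX}}$ shows that the law of $\rmX$ determines the Laplace transform of $\rmZ$ on an interval of nonpositive arguments. The only difference is the closing step — you invoke analytic continuation and injectivity of the Laplace transform (or, alternatively, Carleman's condition), whereas the paper notes that the values $\mgf_{\rmZ}(-k)$ for $k\geq 0$ are exactly the moments of the compactly supported variable $\rme^{-\rmZ}$ and concludes by Stone--Weierstrass; both routes are valid and close the argument equally well.
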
 
\begin{proof}
    Let $h$ be a measurable function, then we have
    \begin{equation*}
        \EE{}{h(\rmX)} = \EE{}{\EE{}{h(\rmX) \mid \rmZ}} = \EE{}{\sum_{x \in \NN} \rme^{-\rmZ} \frac{\rmZ^x}{x!}h(x)} \eqsp.
    \end{equation*}
    For all $t\in\RR$, taking $h(x) = t^x$ yields
    \begin{equation*}
        \EE{}{h(\rmX)} = \EE{}{\rme^{-\rmZ} \sum_{x \in \NN} \frac{(\rmZ t)^x}{x!}} = \EE{}{\rme^{(t-1)\rmZ}} = \mgf_{\rmZ}(t-1)\eqsp.
    \end{equation*}
    Since $\rmX$ and $\tilde\rmX$ have the same law, then we have for all $u \leq 0, \mgf_\rmZ(u) = \mgf_{\tilde\rmZ}(u)$. Write $\rmY = \exp(-\rmZ)$ and $\tilde \rmY = \exp(-\tilde \rmZ)$. The random variables $\tilde \rmY$ and $\rmY$ are compactly supported so by the Stone-Weierstrass theorem their distribution is characterized by their moments $(\EE{}{\rmY^k})_{k \geq 0}$ and $(\mathbb{E}[\tilde \rmY^k])_{k \geq 0}$. Therefore  $\tilde \rmY$ and $\rmY$ have the same law, which concludes the proof.
\end{proof}

\begin{lemma}
\label{lemma:pln_identifiability}
    Let $\rmZ$ and $\tilde\rmZ$ be two real random variables, and $\rmX \sim \poisson{\rme^{\rmZ}}$ and $\tilde\rmX \sim \poisson{\rme^{\tilde\rmZ}}$. Then, if $\rmX$ and $\tilde\rmX$ have the same distribution, $\rmZ$ and $\tilde\rmZ$ have the same distribution.
\end{lemma}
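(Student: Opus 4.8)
The plan is to reduce this statement to the already-established Lemma~\ref{lemma:poisson_composed_identifiability} by a change of variable that turns the log-intensity into a positive intensity. Concretely, I would set $\rmW = \exp(\rmZ)$ and $\tilde\rmW = \exp(\tilde\rmZ)$. Since the exponential maps $\RR$ into $(0,\infty)$, both $\rmW$ and $\tilde\rmW$ are random variables supported on $\RR^*_+$, and by construction $\rmX \sim \poisson{\rmW}$ and $\tilde\rmX \sim \poisson{\tilde\rmW}$. The hypothesis that $\rmX$ and $\tilde\rmX$ share the same law therefore matches exactly the setting of Lemma~\ref{lemma:poisson_composed_identifiability}.

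First I would invoke that lemma to conclude $\rmW \equallaw \tilde\rmW$, i.e. $\exp(\rmZ)$ and $\exp(\tilde\rmZ)$ have the same distribution. Next, I would use that $z \mapsto \exp(z)$ is a continuous (hence Borel-measurable) bijection from $\RR$ onto $\RR^*_+$ with continuous inverse $w \mapsto \log w$. Applying this measurable inverse transports equality in distribution: for any bounded measurable $h$, $\EE{}{h(\rmZ)} = \EE{}{(h\circ\log)(\rmW)} = \EE{}{(h\circ\log)(\tilde\rmW)} = \EE{}{h(\tilde\rmZ)}$, whence $\rmZ \equallaw \tilde\rmZ$, which is the claim.

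The essential content — the injectivity of the map from the intensity law to the law of the Poisson mixture — has already been carried out in Lemma~\ref{lemma:poisson_composed_identifiability} through the moment-generating function and the moment-determinacy (Stone--Weierstrass) argument applied to the compactly supported variable $\exp(-\rmZ)$. Consequently, no genuine obstacle remains here; the only point to verify carefully is that passing to $\exp(\rmZ)$ loses no information, which holds precisely because the exponential is a bimeasurable bijection onto its range. I would make sure to note explicitly that $\rmW$ takes values in $\RR^*_+$, so the hypotheses of the previous lemma are met verbatim, and that equality in law is preserved under measurable bijections.
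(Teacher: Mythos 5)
Your proposal is correct and follows essentially the same route as the paper: reduce to Lemma~\ref{lemma:poisson_composed_identifiability} via $\rmW=\exp(\rmZ)$, conclude $\exp(\rmZ)\equallaw\exp(\tilde\rmZ)$, and recover $\rmZ\equallaw\tilde\rmZ$ because the exponential is a measurable bijection onto $\RR^*_+$. The paper states this last step more tersely, but your careful justification is exactly what it leaves implicit.
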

\begin{proof}
By Lemma~\ref{lemma:poisson_composed_identifiability}, $\rme^{\rmZ}$ and $\rme^{\tilde\rmZ}$ have the same distribution which is enough to conclude the proof.
\end{proof}

\subsubsection{Proof of Lemma \ref{lemma:poisson_markov_identifiability}}
\label{app:poisson_markov_identifiability}
    Let $h(\rmX_1, \dots, \rmX_K) = \prod_{k=1}^K h_k(\rmX_k)$ where $\{h_k\}_{1\leq k\leq K}$ are measurable functions. Then,
    \begin{align*}
        \EE{}{h(\rmX_1, \dots, \rmX_K)} = \EE{}{\EE{}{h(\rmX_k, \dots, \rmX_K) \mid \bZ}} 
        &= \EE{}{\prod_{k=1}^K \EE{}{h_{k}(\rmX_k) \mid \rmZ_k}} \\
        &= \EE{}{\prod_{k=1}^K \sum_{x \in \NN} \rme^{-\rmZ_k} \frac{({\rmZ_k})^x}{x!} h_k(x)} \eqsp.
    \end{align*}
    Choosing $h_k(x) =  t_k^{x}$, yields
    \begin{align*}
        \EE{}{h(\rmX_1, \dots, \rmX_K)} &= \EE{}{\prod_{k=1}^K \rme^{(t_k - 1)\rmZ_k}} \eqsp.
    \end{align*}
    By setting $\BF{u} = \{t_k - 1\}_{1 \leq k \leq K}$, we obtain
    \begin{align*}
        \EE{}{h(\rmX_1, \dots, \rmX_K)} = \EE{}{\rme^{\BF{u}^\top \bZ}} = \mgf_{\bZ}(\BF{u}) \eqsp.
    \end{align*}
    The proof is concluded by the same arguments as in Lemma \ref{lemma:poisson_composed_identifiability}.

\subsection{PLN-Tree identifiability}

\subsubsection{Identifiability of parent-children distributions at the first layer}
\begin{lemma}
\label{lemma:identifiability_anylaw_simplex}
    Let $\bZ = (\rmZ^1, \bZ^2)$ be random variables such that $\rmZ^1> 0$, $\bZ^2\in \simplex{K}$, where $\simplex{K}$ denotes the simplex in $\RR^K$. Assume that the observations $\bX = (\rmX^1, \bX^2)$  are such that conditionally on $\rmZ^1$, $\rmX^1 \sim \poisson{\rmZ^1}$ and conditionally on $(\rmX^1,\bZ^2)$, $\bX^2 \sim \multinomial{\rmX^1}{\bZ^2}$.  Then, the law of  $(\rmZ^1, \bZ^2)$ is identifiable from the law of $(\rmX^1,\bX^2)$.
\end{lemma}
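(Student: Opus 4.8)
The plan is to recognise the classical Poissonisation identity linking a Poisson total to a multinomial split, thereby reducing the statement to the identifiability of a vector of independent Poisson variables already established in Lemma~\ref{lemma:poisson_markov_identifiability}, and then to recover $(\rmZ^1,\bZ^2)$ through an explicit change of variables. Throughout I use that $\rmX^1=\sum_{k}\rmX^2_k$ almost surely, so that the law of $\bX^2$ already determines the law of $(\rmX^1,\bX^2)$.

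First I would show that, conditionally on $\bZ=(\rmZ^1,\bZ^2)$, the coordinates $(\rmX^2_1,\dots,\rmX^2_K)$ of $\bX^2$ are independent with $\rmX^2_k\sim\poisson{\rmZ^1\rmZ^2_k}$. This follows by summing over the almost surely unique admissible value $\rmX^1=s$ with $s=\sum_k x_k$:
\begin{align*}
\PP(\bX^2 = \BF{x}\mid\bZ)
&= \PP(\rmX^1 = s\mid\rmZ^1)\,\PP(\bX^2 = \BF{x}\mid \rmX^1 = s,\bZ^2) \\
&= \rme^{-\rmZ^1}\frac{(\rmZ^1)^{s}}{s!}\,\frac{s!}{\prod_{k} x_k!}\prod_{k=1}^{K}(\rmZ^2_k)^{x_k}
= \prod_{k=1}^{K}\rme^{-\rmZ^1\rmZ^2_k}\frac{(\rmZ^1\rmZ^2_k)^{x_k}}{x_k!}\eqsp,
\end{align*}
where the last equality uses $\sum_k \rmZ^2_k=1$. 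Introducing $\bY=(\rmY_1,\dots,\rmY_K)$ with $\rmY_k=\rmZ^1\rmZ^2_k$, this says precisely that, given $\bY$, the observation $\bX^2$ is a vector of independent Poisson variables with parameters $\bY$, which is exactly the conditional structure assumed in Lemma~\ref{lemma:poisson_markov_identifiability}.

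At this point the setting matches Lemma~\ref{lemma:poisson_markov_identifiability} with $L=K$: since $\rmZ^1>0$, the law of $\bY$ is identifiable from the law of $\bX^2$, hence from the law of $(\rmX^1,\bX^2)$. It then remains to recover $(\rmZ^1,\bZ^2)$ from $\bY$. The map $(\rmZ^1,\bZ^2)\mapsto \rmZ^1\bZ^2$ is a continuous bijection from $\{\rmZ^1>0\}\times\simplex{K}$ onto $\{\bY\in\RR^{K}_+:\sum_k\rmY_k>0\}$, with continuous inverse $\bY\mapsto\big(\sum_k\rmY_k,\ \bY/\sum_k\rmY_k\big)$; it is therefore a Borel isomorphism, so the law of $\bY$ determines that of $(\rmZ^1,\bZ^2)$ and conversely. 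Chaining these identifications concludes.

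The only genuinely load-bearing step is the Poissonisation identity; once the conditional independence and the reparametrisation $\rmY_k=\rmZ^1\rmZ^2_k$ are in place, the result is a direct corollary of Lemma~\ref{lemma:poisson_markov_identifiability}. The subtlety I would watch most closely is the behaviour on the boundary of the simplex, where coordinates $\rmZ^2_k=0$ produce degenerate Poisson factors and $\rmY$ may leave $(\RR^*_+)^K$; this is harmless because the moment argument of Lemma~\ref{lemma:poisson_markov_identifiability} only requires $\rme^{-\rmY_k}$ to be compactly supported in $[0,1]$, which still holds. The measurability and invertibility of the final change of variables are likewise handled by the explicit continuous inverse above.
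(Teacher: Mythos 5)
Your proposal is correct and follows essentially the same route as the paper's proof: establish via the Poissonisation identity that, conditionally on $\bZ$, the coordinates of $\bX^2$ are independent Poisson with parameters $\rmZ^1\rmZ^2_k$, apply Lemma~\ref{lemma:poisson_markov_identifiability} to identify the law of $\rmZ^1\bZ^2$, and recover $(\rmZ^1,\bZ^2)$ from the fact that $\rmZ^1=\sum_k\rmZ^1\rmZ^2_k$ since $\bZ^2$ lies in the simplex. Your explicit remark about boundary coordinates $\rmZ^2_k=0$ is a reasonable extra precaution that the paper's proof does not spell out.
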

\begin{proof}
\label{app:identifiability_anylaw_simplex}
    Let $h$ be a measurable function. For all $x_1\geq 1$, let $x^1 \simplex{K} = \{(x_1^2, \dots, x_K^2) \in \RR^K \mid \sum_{k=1}^K x_k^2 = x^1\}$, then
    \begin{align*}
        \EE{}{h(\rmX^1, \bX^2)} &= \EE{}{\EE{}{h(\rmX^1, \bX^2) \mid \bZ}} \\
        &= \EE{}{\sum_{x^1 \in \NN} \sum_{\BF{x}^2 \in x^1 \simplex{K}} \rme^{-\rmZ^1} \prod_{k=1}^K \frac{(\rmZ^1 \rmZ^2_k)^{x_k^2}}{x_k^2!} h(x^1, \BF{x}^2)} \eqsp.
    \end{align*}
    Using that $\bZ^2$ lies in the simplex yields
    \begin{align*}
        \EE{}{h(\rmX^1, \bX^2)} &= \EE{}{\sum_{\BF{x}^2 \in \NN^K} \prod_{k=1}^K \rme^{-\rmZ^1 \rmZ^2_k} \frac{(\rmZ^1 \rmZ^2_k)^{x_k^2}}{x_k^2!} h\left(\sum_k^{K} x_k^2, \BF{x}^2\right)}\eqsp.
    \end{align*}
    Therefore, $(\bX^2_1,\ldots,\bX^2_K)$ are conditionally independent with  Poisson distribution with parameters  $(\rmZ^1 \rmZ^2_k)_{1 \leq k \leq K}$. Hence, by Lemma~\ref{lemma:poisson_markov_identifiability}, the law of $\bU = (\rmZ^1 \rmZ^2_1, \dots, \rmZ^1 \rmZ^2_K)$ is identifiable. Since $\bZ^2$ lies in the simplex, conditionally on  $ \bU$, $\rmZ^1$ has a Dirac distribution with mass at $\sum_{k=1}^K \bU_k$.
    For any measurable function $f$,
    \begin{align*}
        \EE{}{f(\rmZ^1, \bZ^2)} = \EE{}{\EE{}{f(\rmZ^1, \bZ^2) \mid \bU}}
        &= \EE{}{\EE{}{f\left(\rmZ^1, \frac{\bU}{\rmZ^1}\right) \mid \bU}}\\
        &= \EE{}{f\left(\sum_{k=1}^K \bU_k, \frac{\bU}{\sum_{k=1}^K \bU_k}\right) }\eqsp.
    \end{align*}
    Then, as the law of $\bU$ is identifiable, the law of $(\rmZ^1, \bZ^2)$ is identifiable.
\end{proof}
\subsubsection{Identifiability through softmax transform}
\label{proof:identifiability_projected_softmax}

\begin{lemma}
\label{lemma:identifiability_projected_softmax}
    Let $\bZ$, $\tilde\bZ$ be two random variables in $\RR^d$. Define $\bP = \identity_d -  \BS{1}_{d\times d}/d$ the projector on $\vect{\BS{1}_d}^\perp$. Then, if $\softmax{\bZ}$ and $\softmax{\tilde\bZ}$ have the same distribution, $\bP \bZ$ and $\bP \tilde\bZ$ have the same distribution and conversely.
\end{lemma}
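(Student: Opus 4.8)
The plan is to exploit two elementary but decisive properties of the softmax map $\softmax{\cdot}\colon \RR^d \to \simplex{d}$: its invariance along the direction $\BS{1}_d$, and its invertibility once restricted to the hyperplane $\vect{\BS{1}_d}^\perp$. First I would record the translation invariance $\softmax{\bZ + c\BS{1}_d} = \softmax{\bZ}$ for every scalar $c$, which is immediate from the ratio form of the softmax (the factor $\rme^c$ cancels between numerator and denominator). Since $\bP\bZ = \bZ - (\BS{1}_d^\top \bZ / d)\,\BS{1}_d$ differs from $\bZ$ only by a multiple of $\BS{1}_d$, this yields the factorization $\softmax{\bZ} = \softmax{\bP\bZ}$, valid for every $\bZ \in \RR^d$.

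The key step is to show that $\softmax{\cdot}$ restricted to $H := \vect{\BS{1}_d}^\perp$ is a bijection onto the open simplex, with explicit continuous inverse the centered log-ratio map $g(p) = \bP\log p$ (logarithm taken coordinatewise). I would verify the two compositions directly: plugging $g(p)$ into the softmax, the geometric-mean normalization exactly cancels against $\sum_k p_k = 1$, giving $\softmax{g(p)} = p$; conversely, writing $\log \softmax{\bZ}_i = \rmZ_i - \log \sum_j \rme^{\rmZ_j}$ and applying $\bP$ (which annihilates $\BS{1}_d$) gives $g(\softmax{\bZ}) = \bP\bZ$ for \emph{all} $\bZ$, hence $g$ is a left inverse of $\softmax{\cdot}$ up to the projector and a genuine inverse on $H$. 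Because $\softmax{\bZ}$ lands almost surely in the open simplex, $g$ is defined along the whole range of $\softmax{\cdot}$ and is Borel measurable.

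With the identities $\softmax{\bZ} = \softmax{\bP\bZ}$ and $g(\softmax{\bZ}) = \bP\bZ$ in hand, both implications reduce to pushing a distributional equality through a fixed measurable map. If $\softmax{\bZ} \equallaw \softmax{\tilde\bZ}$, then applying $g$ gives $\bP\bZ = g(\softmax{\bZ}) \equallaw g(\softmax{\tilde\bZ}) = \bP\tilde\bZ$. Conversely, if $\bP\bZ \equallaw \bP\tilde\bZ$, applying $\softmax{\cdot}$ and using the factorization yields $\softmax{\bZ} = \softmax{\bP\bZ} \equallaw \softmax{\bP\tilde\bZ} = \softmax{\tilde\bZ}$, which closes the equivalence.

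The main obstacle is the middle paragraph: establishing rigorously that softmax is a bijection between $H$ and the open simplex with measurable inverse $g$, and in particular checking the algebraic cancellations $\softmax{g(p)} = p$ and $g(\softmax{\bZ}) = \bP\bZ$. Everything else is bookkeeping: the translation invariance is a one-line computation, and the transfer of $\equallaw$ through $g$ and $\softmax{\cdot}$ is the standard fact that equality in law is preserved under composition with a fixed Borel map.
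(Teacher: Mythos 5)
Your proof is correct. It rests on the same underlying observation as the paper's — that the softmax is constant along the $\BS{1}_d$ direction and injective transversally to it — but the execution is genuinely different. The paper argues at the level of events: for a Borel set $B$ of the simplex with preimage $C$, it shows $\{\softmax{\bZ}\in\softmax{C}\}=\{\bP\bZ\in\bP C\}$ by writing $\bZ=\BS{c}+K(\BS{c},\bZ)\BS{1}_d$ and applying $\bP$, then matches probabilities of these events. You instead exhibit the explicit inverse $g(p)=\bP\log p$ (the centered log-ratio), verify the two identities $\softmax{\bZ}=\softmax{\bP\bZ}$ and $g(\softmax{\bZ})=\bP\bZ$, and reduce both implications to the preservation of equality in law under composition with a fixed Borel map. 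Your route buys a cleaner measure-theoretic footing: the paper's argument implicitly requires that the sets $\bP C$ be measurable in $\vect{\BS{1}_d}^\perp$ and rich enough to characterize the law of $\bP\bZ$ there, a point it does not spell out, whereas pushing $\equallaw$ through the continuous maps $g$ and $\softmax{\cdot}$ needs no such discussion. The computations you flag as the main obstacle all check out: $\log\softmax{\bZ}=\bZ-(\log\sum_j\rme^{\rmZ_j})\BS{1}_d$ so $\bP$ kills the normalizing term, and in $\softmax{g(p)}$ the geometric mean cancels against $\sum_k p_k=1$. (Note that $\softmax{g(p)}=p$ is not even needed for the two implications; the one-sided identities suffice.)
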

\begin{proof}
We start with the direct sense of the equivalence. Let $B \in \mathcal{B}(\simplex{d})$, since $\softmax{\cdot}$ is surjective on $\simplex{d}$ there exists $C \subseteq \RR^d$ such that $\softmax{C} = B$. Then, assuming $\softmax{\bZ}$ has the same law as $\softmax{\tilde\bZ}$,
    \begin{equation*}
        \PP(\softmax{\bZ} \in B) = \PP(\softmax{\tilde\bZ} \in B)\eqsp,
    \end{equation*}
    so that
    \begin{equation*}
        \PP(\softmax{\bZ} \in \softmax{C}) = \PP(\softmax{\tilde\bZ} \in \softmax{C}) \eqsp.
    \end{equation*}
    On the event $\{\softmax{\bZ} \in \softmax{C}\}$, there exists $\BS{c} \in C$ such that $\softmax{\bZ} = \softmax{\BS{c}}$, which yields
    \[
    \bZ = \BS{c} + K(\BS{c}, \bZ) \BS{1}_d \eqsp, 
    \]
    with $K(\BS{c}, \bZ) = \log(\sum_{k=1}^d \rme^{\bZ_{k}} \big/ \sum_{k=1}^d \rme^{\BS{c}_{k}} )$.
    Since $\bP$ is the projector on $\vect{\BS{1}_d}^\perp$, we have $\bP \BS{1}_d = 0$, which yields $\bP \bZ = \bP \BS{c} \in \bP C$, the projection of $C$ on $\vect{\BS{1}_d}^\perp$ and therefore $\{\softmax{\bZ} \in \softmax{C}\}\subseteq \{\bP \bZ \in \bP C\}$. Additionally, since softmax is invariant by constant translation, we have $ \{\bP \bZ \in \bP C\}\subseteq \{\softmax{\bZ} \in \softmax{C}\}$, and thus 
    \[
    \PP(\bP \bZ \in \bP C) = \PP(\softmax{\bZ} \in \softmax{C}) = \PP(\softmax{\tilde\bZ} \in \softmax{C}) = \PP(\bP \tilde\bZ \in \bP C) \eqsp, 
    \]
    which concludes the direct sense of the equivalence. The converse statement is obtained similarly.
\end{proof}

\subsubsection{Proof of Corollary \ref{corollary:identifiability_plntree_projected}}
\label{proof:identifiability_plntree_projected}
Since $(\bZ^1, \softmax{\bZ^2})$ and $(\tilde\bZ^1, \softmax{\tilde\bZ^2})$ have the same distribution, Lemma \ref{lemma:identifiability_projected_softmax} yields that $(\bZ^1, \bP{\bZ^2})$ and $(\tilde\bZ^1, \bP{\tilde\bZ^2})$ have the same distribution. Furthermore, since conditionally to $\bZ^1$ (resp. $\tilde \bZ^1$), $\bZ^2$ (resp. $\tilde \bZ^2$) is Gaussian, observing that $\bP = \bP^\top$, the law of $\bP \bZ^2$ (resp. $\bP\tilde\bZ^2$) is given by $\mathcal{N}(\bP\bmu(\bZ^1),\bP\bSigma(\bZ^1)\bP)$ (resp. $\mathcal{N}(\bP\tilde\bmu(\tilde\bZ^1),\bP\tilde\bSigma(\tilde\bZ^1)\bP)$), which concludes the proof.

\subsubsection{Proof of Theorem \ref{theorem:plntree_identifiability}}
\label{proof:plntree_identifiability}
For $\ell \in \{1, 2\}$, we denote the events $\{\rmX_k^\ell = \sum_{j \in \childindex{\ell}{k}} \rmX_j^{\ell+1}\}_{k \leq K_{\ell}}$ by $\{\bX^\ell = \parent{\bX}{}{\ell+1}\}$ and $\NN^{\BF{K}} = \NN^{1} \times \NN^{K_2} \times \NN^{K_3}$. Then, for all measurable functions $h$ we have
\begin{align*}
    \EE{}{h(\rmX^1, \bX^2, \bX^3)} &= \EE{}{\EE{}{h(\rmX^1, \bX^2, \bX^3) \mid \rmZ^1, \bZ^2, \bZ^3}} \\
    &= \EE{}{\sum_{\BF{x} \in \NN^{\BF{K}}} h(\mathrm{x}^1,\BF{x}^2,\BF{x}^3) \;.\; \rme^{-\rmZ^1}\frac{(\rmZ^1)^{\mathrm{x}^1}}{\mathrm{x}^1!} \;.\; \indicator{\BF{x}^1 = \parent{\BF{x}}{}{2}, \BF{x}^2 = \parent{\BF{x}}{}{3}} \mathrm{x}^1 ! \prod_{k=1}^{K_2} \frac{{(\rmZ_k^2)}^{\mathrm{x}_k^2}}{\mathrm{x}_k^2!}  \;.\;  \prod_{k=1}^{K_2} \mathrm{x}_k^2! \prod_{j \in \childindex{2}{k}} \frac{{(\rmZ_j^3)}^{\mathrm{x}_j^3}}{\mathrm{x}^3_j!}} \\
    &= \EE{}{\sum_{\BF{x} \in \NN^{\BF{K}}} h(\mathrm{x}^1,\BF{x}^2,\BF{x}^3) \;.\; \rme^{-\rmZ^1}\frac{(\rmZ^1)^{\mathrm{x}^1}}{\mathrm{x}^1!} \;.\; \indicator{\BF{x}^1 = \parent{\BF{x}}{}{2}, \BF{x}^2 = \parent{\BF{x}}{}{3}} \mathrm{x}^1 ! \prod_{k=1}^{K_2} {(\rmZ_k^2)}^{\mathrm{x}_k^2} \prod_{j \in \childindex{2}{k}} \frac{{(\rmZ_j^3)}^{\mathrm{x}_j^3}}{\mathrm{x}^3_j!}} \\
    &= \EE{}{\sum_{\BF{x} \in \NN^{\BF{K}}} h(\mathrm{x}^1,\BF{x}^2,\BF{x}^3) \;.\; \rme^{-\rmZ^1}\frac{(\rmZ^1)^{\mathrm{x}^1}}{\mathrm{x}^1!} \;.\; \indicator{\BF{x}^1 = \parent{\BF{x}}{}{2}, \BF{x}^2 = \parent{\BF{x}}{}{3}} \mathrm{x}^1 ! \prod_{k=1}^{K_2} \prod_{j \in \childindex{2}{k}} \frac{{(\rmZ_k^2 \rmZ_j^3)}^{\mathrm{x}_j^3}}{\mathrm{x}^3_j!}} \eqsp.
\end{align*}
Using that $\{\childindex{2}{k}\}_{k \leq K_2}$ is a partition of $\{1, \dots, K_3\}$ and denoting by $\parent{\rmZ}{k}{\ell}$ the parent of $\rmZ_k^\ell$ yields
\begin{align*}
    \EE{}{h(\rmX^1, \bX^2, \bX^3)} &= \EE{}{\sum_{\BF{x} \in \RR^{\BF{K}}} h(\mathrm{x}^1,\BF{x}^2,\BF{x}^3) \;.\; \rme^{-\rmZ^1}\frac{(\rmZ^1)^{\mathrm{x}^1}}{\mathrm{x}^1!} \;.\; \indicator{\BF{x}^1 = \parent{\BF{x}}{}{2}, \BF{x}^2 = \parent{\BF{x}}{}{3}} \mathrm{x}^1 ! \prod_{k=1}^{K_3} \frac{{(\parent{\rmZ}{k}{3} \rmZ_k^3)}^{\mathrm{x}_k^3}}{\mathrm{x}^3_k!}} \eqsp.
\end{align*}
Remarking that $\sum_{k=1}^{K_3} \parent{\rmZ}{k}{3} \rmZ_k^3 = \sum_{k=1}^{K_2} \rmZ_k^2 \sum_{j \in \childindex{2}{k}} \rmZ_j^3$, since $\bZ^2 \in \simplex{K_2}$ and for all $k \leq K_2, \child{\bZ}{k}{2} \in \simplex{\# \childindex{2}{k}}$, then $\sum_{k=1}^{K_3} \parent{\rmZ}{k}{3} \rmZ_k^3 = 1$.
Consequently, conditionally on $\rmZ^1$, $\rmX^1\sim \poisson{\rmZ^1}$, and conditionally on $(\bX^1, (\parent{\rmZ}{k}{3} \rmZ_k^3)_{k \leq K_3})$, $\bX^3$ has a multinomial distribution with total count $\bX^1$ and probabilities $(\parent{\rmZ}{k}{3} \rmZ_k^3)_{k \leq K_3}$. Then applying Lemma \ref{lemma:identifiability_anylaw_simplex} provides the identifiability of the law of $(\rmZ^1, (\parent{\rmZ}{k}{3} \rmZ_k^3)_{k \leq K_3}) = (\rmZ^1, (\rmZ_k^2 \child{\rmZ}{k}{2})_{k \leq K_2})$. Writing for all $k \leq K_2, \bU_k = \rmZ_k^2 \child{\bZ}{k}{2}$ and $\bU = (\bU_1, \dots, \bU_k)_{k \leq K_2}$, and following the last steps of Lemma \ref{lemma:identifiability_anylaw_simplex} yields for all measurable functions $f$,
\begin{align*}
    \EE{}{f(\rmZ^1, \bZ^2, \bZ^3)} = \EE{}{f\left(\rmZ^1, \left(\sum_{j=1}^{\#\childindex{2}{k}} \bU_{kj}\right)_{k \leq K_2}, \left(\frac{\bU_k}{\sum_{j=1}^{\#\childindex{2}{k}} \bU_{kj}}\right)_{k \leq K_2}\right)} \eqsp.
\end{align*}
Then, as the law of $(\rmZ^1, \bU)$ is identifiable, the law of $(\rmZ^1, \bZ^2, \bZ^3)$ is identifiable.

\section{Experimental setup}
\label{app:experiments_setup}
\paragraph{Latent prior architectures} The latent prior is a Markov chain with Gaussian transition kernels parameterized by neural networks, such that at layer $1 < \ell \leq L$, the mean $\bmu_{\btheta^\ell}(.) \in \RR^{K_{\ell}}$ and precision matrix $\bOmega_{\btheta^\ell}(.) \in \RR^{K_{\ell} \times K_\ell}$ use $\bZ^{\ell-1} \in \RR^{K_{\ell-1}}$ as input. In our experiments, the mean and precision of the latent dynamic are both composed of two modules. The first module consists of a fully connected neural network, such that at layer $1 < \ell \leq L$ of the tree, we fix the number of neurons in the hidden layers to $K_{\ell-1}$ for the mean, and $K_{\ell-1} (K_{\ell} + 1) / 2$ for the precision, and only tune the number of hidden layers. Then, for the mean, we add a module to compute the projector associated with the layer $\ell$ of the tree to ensure the identifiability of the mean parameter (see Section \ref{sec:identifiability}). Similarly, for the precision matrix, we attach a module that turns the output of the first module into a lower triangular matrix $\cholesky_{\btheta^\ell}(.)$ with positive diagonal terms using softplus, thus obtaining the Cholesky decomposition of a positive definite matrix. To prevent computational issues, we add a perturbation term of amplitude $\lambda = 10^{-4}$, ensuring the numerical invertibility of the covariance matrix which is given by $\bSigma_{\btheta^\ell}(.) = \cholesky_{\btheta^\ell}(.) \cholesky_{\btheta^\ell}(.)^\top + \lambda \identity_{K_\ell}$. We then proceed to taking its inverse to obtain the precision matrix.

Finally, we initialize the parameters of the first layers based on PLN initialization such that for all $k \leq K_1$,
$$\bmu_{1,k} = \frac{1}{n} \sum_{i=1}^n \log \rmX_{ik}^1
$$ 
and 
$$
\bSigma_1 = \frac{1}{n-1} (\log \bX^1 - \BS{1}_{n \times K_1} \bmu_1)^\top (\log \bX^1 - \BS{1}_{n \times K_1} \bmu_1)\eqsp,
$$
the other parameters are initialized at random.

\paragraph{Mean-field architectures} In the mean-field approximation, the parametrization of the Gaussian kernels at layer $\ell \leq L$ is made of two neural networks with inputs $\bX$. In our experiments, the input of the networks at layer $\ell$ is limited to $\bX^\ell$ (see \cite{blei2017variational}). At layer $\ell$, the mean $\BS{m}_{\bphi^\ell}(\bX^\ell)$ and the diagonal covariance matrix $\bS_{\bphi^\ell}(\bX^\ell)$ have the same network architecture but consists of two different fully connected neural networks with output of dimension $K_\ell$. In our experiments, the architecture of the networks is solely parameterized by the number of hidden layers, while the number of neurons at each hidden layer is fixed to $K_\ell$ at depth $\ell$ of the tree.

\paragraph{Backward Markov architectures} The backward variational approximation is a backward Markov chain with Gaussian transition kernels, such that at layer $L$ the mean and diagonal covariance matrix use $\bX^{1:L}$ as inputs, and for layer $\ell < L$, the mean and diagonal covariance matrix use $(\bX^{1:\ell}, \bZ^{\ell+1})$ as inputs (see \eqref{eq:top_down_markov_variational_approx}). Due to the computational burdens of the chain $\bX^{1:\ell}$, \cite{chagneux2024additive} suggest performing amortized inference by encoding the chain using a recurrent neural network architecture into $\bE^{\ell}$. Consequently, the backward architecture consists of an embedding block common to all layers, and for each layer $1 \leq \ell < L $ a fully connected network for each parameter of the Gaussian taking as input $\bE^{\ell}$, $\bZ^{\ell+1}$, and $\bX^\ell$ through a residual connection as illustrated on Figure \ref{fig:res_backward_architecture}. In our experiments, we define the embedder using a GRU or LSTM from the PyTorch library \citep{paszke2019pytorch}, and we design the fully connected network at each layers by their number of hidden layers solely, fixing the intermediate hidden neurons to the input size.

\paragraph{Model optimization and numerical considerations} The computation of the ELBO presents several numerical challenges that arise due to the need for exponentiation of parameters and inversion of the precision matrix. To mitigate issues related to numerical overflow, we impose constraints on the variational parameters. Specifically, we restrict the means coordinates to the interval $[-100, 25]$ and the variance terms to $[10^{-8}, 10]$ using tempered sigmoid activation functions $B$ defined as
\begin{equation*}
    \forall x \in \RR, \quad B(x) = m + (M - m) \sigma\left(s \times \left(x - \frac{m + M}{2}\right)\right) \eqsp,
\end{equation*}
where $\sigma(x) = \exp(x) / (1 + \exp(x)), s > 0, m \in \RR, M \in \RR$, $B(x) \in [m, M]$. Additionally, to ensure the invertibility of the considered matrices, we introduce a bias of $\lambda = 10^{-4}$ to the diagonal. Subsequently, we opt to employ the Adam optimizer \citep{kingma2014adam} for training our neural networks with learning rate $10^{-3}$ using PyTorch implementation \citep{paszke2019pytorch}. This choice is motivated by its demonstrated stability and efficacy, surpassing alternative optimization techniques in our experiments.

\paragraph{Computational efficiency} 
Training PLN-Tree is more computationally intensive than classical PLN alternatives, particularly as the depth of the taxonomy increases, the dimensionality of the layers grows (excluding only-child nodes that do not require parameterization), and the dataset size expands. In our experiments, conducted on a CPU with $\mathrm{i}5-1335\mathrm{U}\times12$ configuration, training a single-layer PLN model using the \texttt{pyPLNmodels} v0.0.69 \citep{batardiere2024pyplnmodels} package leads to an average iteration time of $0.01$s. In contrast, training the entire PLN-Tree hierarchy has an average iteration time of $0.36$s (batch size set to $512$). This indicates that while PLN-Tree convergence is achieved, the hierarchical nature of the model and its neural network parameterization significantly slow down the process compared to PLN. It should be noted that both \texttt{pyPLNmodels} and our PLN-Tree implementation support GPU acceleration through \texttt{CUDA}, though we did not benchmark GPU performance for this study.

Despite the slower training times observed in CPU-based experiments, PLN-Tree models are inherently scalable to larger datasets contrary to PLN. The critical difference lies in the parameterization of the variational distributions. Indeed, traditional PLN models use a per-individual parameterization of the variational parameters, allowing for fast optimization through closed-form updates at each iteration \citep{chiquet_pln}. However, this variational method requires as many parameters as the number of data points, which can become a bottleneck for very large datasets. In contrast, PLN-Tree models employs a backward variational approximation that can not be parameterized per individual due to the structured dependencies imposed by the backward Markov chain, which is then parameterized using neural networks. While this makes optimization more challenging and slower at each iteration, the number of parameters in PLN-Tree remains a fixed hyperparameter, regardless of the size of the dataset. This property is crucial for scalability, as it allows PLN-Tree to handle large datasets efficiently. 

\subsection{PLN-Tree generated data experiments}
\label{app:experiments_setup_synthetic_plntree}
\subsubsection{Model selection experiments}

In this experiment, the latent prior optimal architecture is already known from the original model. Consequently, we only  optimize the hyperparameters of the variational approximation. The training dataset consists of $2000$ samples from a PLN-Tree model. For each model, we sample $3000$ samples $5$ times and select the model with the best overall performances regarding the $\alpha$-diversity criteria.

\paragraph{Mean-field architectures} We try $3$ architectures of mean-field variational approximations, where the amount of hidden layers in the variation approximation spans in $\{1, 2, 3\}$. The results indicate the optimal architecture is given for $1$ hidden layer.

\paragraph{Backward Markov architectures} The tested architectures are summarized in Table~\ref{tab:synthetic_plntree_backward_selection_params}. The performances of each architecture orientate the choice of the optimal architecture towards the Model 4.
\begin{table}
\centering
\begin{tabular}{lccccc}
\midrule
\textbf{Parameter} & \textbf{Model 1} & \textbf{Model 2} & \textbf{Model 3} & \textbf{Model 4} \\
\midrule
Embedder type & GRU & GRU & GRU & GRU \\
Hidden layers size & 32 & 32 & 32 & 32 \\
Number of hidden layers & 2 & 2 & 3 & 3 \\
Embedding size & 64 & 64 & 64 & 120 \\
\midrule
Number of layers (Gaussian parameters) & 1 & 2 & 1 & 2 \\
\midrule
\end{tabular}
\caption{Tested backward variational architectures in PLN-Tree synthetic data experiments.}
\label{tab:synthetic_plntree_backward_selection_params}
\end{table}

\subsubsection{Performance benchmark}
\label{app:performance_benchmark_plntree_synthetic}
For each selected model, we perform multiple training runs and present the resulting objective values in Figure~\ref{fig:ELBO_repeat}. We observe that the mean-field approximation does not converge to the same value of the ELBO value across different runs (see Figure \ref{fig:ELBO_repeat_MF}), indicating variability in performance. Conversely, our method consistently converges to the same ELBO values (see Figure \ref{fig:ELBO_repeat_backward}), demonstrating stable performance and consistently outperforming the mean-field approach. Thus, in all our experiments, we do not explore the training variability of the mean-field model, and only account for the sampling variability. 
\begin{figure}[htbp]
    \centering
    \begin{subfigure}[b]{0.48\linewidth}
        \centering
        \includegraphics[width=\linewidth]{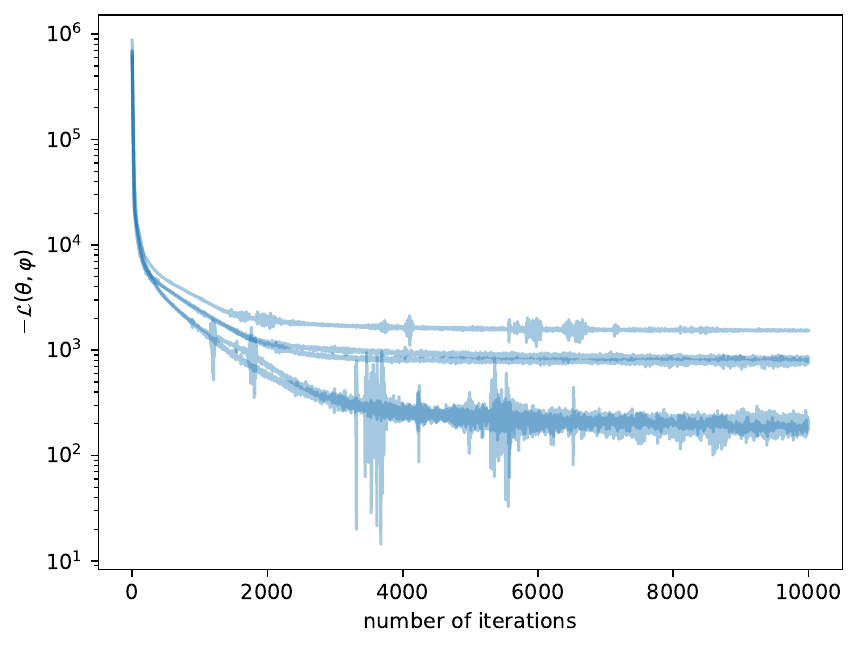}
        \caption{Mean-field approximation}
        \label{fig:ELBO_repeat_MF}
    \end{subfigure}
    \hfill
    \begin{subfigure}[b]{0.48\linewidth}
        \centering
        \includegraphics[width=\linewidth]{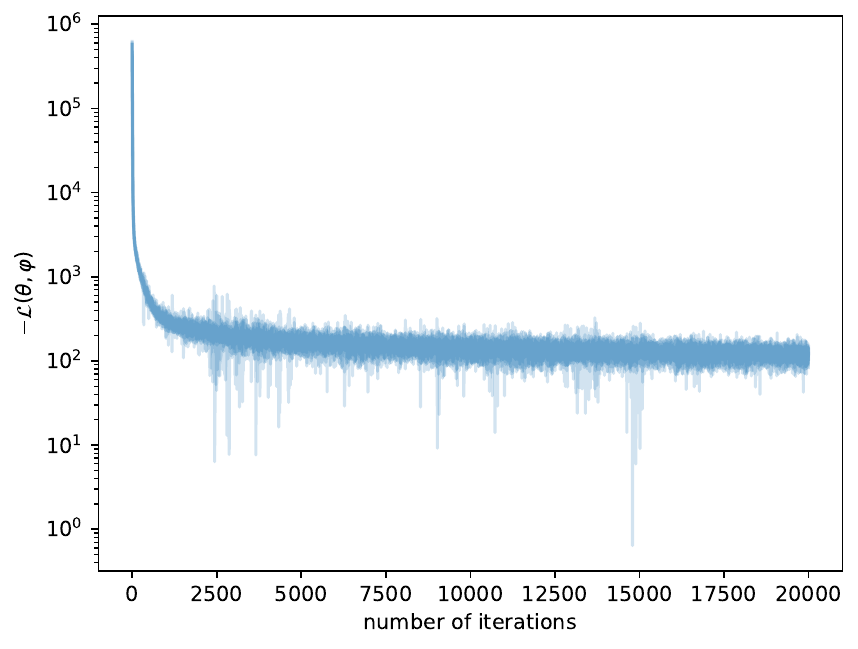}
        \caption{Residual amortized backward approximation}
        \label{fig:ELBO_repeat_backward}
    \end{subfigure}
    \caption{ELBO convergence over iterations for PLN-Tree models on the PLN-Tree generated dataset, repeated $5$ times, performed for mean-field and residual amortized backward variational approximations. Negative values are eluded in log scale.}
    \label{fig:ELBO_repeat}
\end{figure}

For the performance benchmark of the selected models, we sample $2000$ samples $25$ times for each model and show the average result with standard deviation between brackets.
\begin{table}
\centering
\begin{tabular}{@{}lcccc@{}}
\toprule
\multirow{1}{*}{\textbf{Alpha diversity}} & \textbf{PLN-Tree} & \textbf{PLN-Tree (MF)} & \textbf{PLN} & \textbf{SPiEC-Easi} \\
\midrule
    \multicolumn{5}{c}{\textbf{Wasserstein Distance} ($\times 10^{2}$)} \\ \midrule
    Shannon $\ell=1$ & \textbf{1.57} (0.50) & 11.23 (0.73) & 14.64 (1.15) & 46.72 (1.63) \\
    Shannon $\ell=2$ & \textbf{3.67} (1.33) & 5.14 (1.20) & 32.04 (1.62) & 89.62 (2.31) \\
    Shannon $\ell=3$ & \textbf{5.82} (1.51) & 7.86 (1.47) & 35.03 (1.68) & 98.49 (2.31) \\
    Simpson $\ell=1$ & \textbf{0.62} (0.21) & 2.69 (0.27) & 4.91 (0.41) & 15.91 (0.65) \\
    Simpson $\ell=2$ & \textbf{0.71} (0.24) & 1.40 (0.31) & 7.35 (0.41) & 22.13 (0.72) \\
    Simpson $\ell=3$ & \textbf{0.85} (0.24) & 1.55 (0.34) & 7.21 (0.41) & 22.05 (0.70) \\
\midrule
\multicolumn{5}{c}{\textbf{Kolmogorov Smirnov ($\times 10^{-2}$)}} \\ \midrule
Shannon $\ell=1$ & \textbf{2.60} (0.70) & 14.69 (1.06) & 11.0 (0.99) & 32.91 (1.22) \\
Shannon $\ell=2$ & 4.63 (1.29) & \textbf{4.42} (1.00) & 20.68 (0.87) & 47.25 (1.10) \\
Shannon $\ell=3$ & \textbf{5.34} (1.08) & 5.54 (0.99) & 20.2 (1.15) & 45.84 (1.11) \\
Simpson $\ell=1$ & \textbf{2.65} (0.69) & 11.14 (0.93) & 9.87 (0.78) & 28.59 (1.21) \\
Simpson $\ell=2$ & 4.37 (1.24) & \textbf{4.18} (0.89) & 19.14 (0.97) & 42.17 (1.08) \\
Simpson $\ell=3$ & 4.99 (0.92) & \textbf{4.58} (0.79) & 18.92 (0.98) & 42.29 (1.20) \\ \midrule

\multicolumn{5}{c}{\textbf{Total variation ($\times 10^{-2}$)}} \\ \midrule
Shannon $\ell=1$ & \textbf{1.14} (0.29) & 5.67 (0.41) & 4.41 (0.39) & 13.12 (0.49) \\
Shannon $\ell=2$ & \textbf{1.21} (0.32) & 1.24 (0.21) & 5.34 (0.24) & 12.27 (0.29) \\
Shannon $\ell=3$ & \textbf{1.19} (0.22) & 1.31 (0.17) & 4.32 (0.23) & 9.97 (0.22) \\
Simpson $\ell=1$ & \textbf{3.13} (0.71) & 10.84 (0.94) & 11.38 (0.91) & 31.82 (1.42) \\
Simpson $\ell=2$& \textbf{4.29} (0.94) & 4.66 (0.97) & 19.53 (0.91) & 43.60 (1.05) \\
Simpson $\ell=3$ & 4.92 (0.73) & \textbf{4.48} (0.74) & 18.63 (0.94) & 42.23 (1.03) \\
\midrule

\multicolumn{5}{c}{\textbf{Kullback-Leibler Divergence ($\times 10^{-2}$)}} \\ \midrule
Shannon $\ell=1$ & \textbf{0.24} (0.11) & 4.83 (0.50) & 14.17 (1.22) & 30.09 (1.81) \\
Shannon $\ell=2$ & \textbf{0.57} (0.26) & 0.80 (0.23) & 14.39 (1.15) & 71.78 (3.24) \\
Shannon $\ell=3$ & \textbf{1.07} (0.36) & 1.63 (0.45) & 4.56 (0.50) & 87.93 (4.22) \\
Simpson $\ell=1$ & \textbf{0.23} (0.11) & 2.40 (0.31) & 4.56 (0.50) & 23.70 (1.65) \\
Simpson $\ell=2$ & \textbf{0.47} (0.19) & 0.80 (0.28) & 10.64 (0.91) & 51.78 (2.25) \\
Simpson $\ell=3$ & \textbf{0.68} (0.20) & 0.84 (0.26) & 10.57 (0.87) & 52.62 (2.13) \\
\bottomrule
\end{tabular}
\caption{Distribution metrics on $\alpha$-diversities computed between synthetic data sampled under the original PLN-Tree model and simulated data under each modeled trained, averaged over the trainings, with standard deviation.}
\label{tab:synthetic_plntree_benchmark_alpha_diversity}
\end{table}

\subsection{Synthetic data with Markov Dirichlet experiments}
\label{app:experiments_setup_synthetic_markovdirichlet}
\subsubsection{Model selection experiments}
\paragraph{Dataset description and selection procedure} The training dataset consists of $2000$ samples from a Markov Dirichlet model. For each model, when compared to this dataset, we draw $3000$ samples $5$ times and select the model with the best overall performances regarding $\alpha$-diversity criteria.

\paragraph{Mean-field architectures} In this experiment, the number of hidden layers in the latent priors spans in $\{1, 2, 3\}$, while the number of hidden layers in the mean-field approximations spans in $\{1, 2\}$. Trying all combinations, we obtain that the best-performing architecture in our experiment has $2$ hidden layers in the latent prior, and $1$ hidden layer in the variational approximation parameters.

\paragraph{Backward Markov architectures} For the backward architectures, the number of layers tested in the latent priors spans in $\{1, 2\}$. The various tested architectures for the embedders are summarized in Table \ref{tab:synthetic_markovdirichlet_backward_selection_params}. The architecture of the best-performing model is yielded for 1 layers in the latent prior with the embedding architecture E8.

\begin{table}
\centering
\begin{tabular}{lccccc}
\midrule
\textbf{Name} & \textbf{Embedding size} & \textbf{Hidden layers} & \textbf{Nb neurons} \\
\midrule
\textbf{E1} & 16 & 2 & 32 \\
\textbf{E2} & 32 & 2 & 32 \\
\textbf{E3} & 32 & 3 & 32 \\
\textbf{E4} & 32 & 2 & 64 \\
\textbf{E5} & 32 & 3 & 64 \\
\textbf{E6} & 60 & 2 & 64 \\
\textbf{E7} & 60 & 3 & 64 \\
\textbf{E8} & 60 & 3 & 120 \\
\midrule
\end{tabular}
\caption{Tested backward variational architectures in the Embedder in the Markov Dirichlet synthetic experiments. All embedders are GRU, stacked with a $2$ layers neural network to model the parameters.}
\label{tab:synthetic_markovdirichlet_backward_selection_params}
\end{table}

\subsubsection{Performance benchmark}
For the performance benchmark of the selected models, we draw $2000$ samples $25$ times for each model and show the average result with standard deviation between brackets.
\begin{table}
\centering
\begin{tabular}{@{}lcccc@{}}
\toprule
\multirow{1}{*}{\textbf{Alpha diversity}} & \textbf{PLN-Tree} & \textbf{PLN-Tree (MF)} & \textbf{PLN} & \textbf{SPiEC-Easi} \\
\midrule
\multicolumn{5}{c}{\textbf{Wasserstein Distance} ($\times 10^{2}$)} \\ \midrule
Shannon $\ell=1$ & \textbf{17.70} (0.47) & 21.42 (0.59) & 72.27 (1.70) & 125.10 (1.25) \\
Shannon $\ell=2$ & \textbf{22.23} (0.94) & 29.10 (1.06) & 111.53 (1.81) & 177.18 (1.50) \\
Shannon $\ell=3$ & \textbf{24.32} (0.83) & 37.72 (1.14) & 142.28 (1.99) & 224.07 (1.62) \\
Simpson $\ell=1$ & \textbf{5.69} (0.16) & 5.84 (0.16) & 21.74 (0.60) & 39.01 (0.46) \\
Simpson $\ell=2$ & \textbf{5.21} (0.17) & 5.90 (0.19) & 26.70 (0.59) & 46.26 (0.54) \\
Simpson $\ell=3$ & \textbf{3.91} (0.11) & 5.16 (0.16) & 28.55 (0.59) & 50.12 (0.54) \\
\midrule
\multicolumn{5}{c}{\textbf{Kolmogorov Smirnov} ($\times 10^{2}$)} \\ \midrule
Shannon $\ell=1$ & \textbf{16.81} (0.93) & 24.28 (0.9) & 45.12 (1.02) & 66.09 (0.69) \\
Shannon $\ell=2$ & \textbf{19.29} (1.06) & 25.94 (0.97) & 58.83 (0.88) & 76.04 (0.56) \\
Shannon $\ell=3$ & \textbf{20.80} (0.75) & 30.50 (0.98) & 66.62 (0.71) & 83.14 (0.31) \\
Simpson $\ell=1$ & \textbf{13.95} (0.94) & 20.93 (0.9) & 39.77 (1.10) & 61.65 (0.73) \\
Simpson $\ell=2$ & \textbf{18.35} (1.03) & 23.47 (0.97) & 55.42 (0.87) & 70.24 (0.69) \\
Simpson $\ell=3$ & \textbf{22.00} (0.87) & 30.43 (0.82) & 62.10 (0.71) & 77.63 (0.32) \\
\midrule
\multicolumn{5}{c}{\textbf{Total variation} ($\times 10^{2}$)} \\ \midrule
Shannon $\ell=1$ & \textbf{7.75} (0.29) & 9.78 (0.35) & 14.87 (0.33) & 21.50 (0.29) \\
Shannon $\ell=2$ & \textbf{6.67} (0.30) & 8.08 (0.28) & 15.59 (0.21) & 19.54 (0.18) \\
Shannon $\ell=3$ & \textbf{5.60} (0.16) & 7.47 (0.24) & 14.93 (0.14) & 18.23 (0.08) \\
Simpson $\ell=1$ & \textbf{19.33} (0.68) & 23.72 (1.02) & 38.32 (1.02) & 58.01 (0.86) \\
Simpson $\ell=2$ & \textbf{20.11} (0.89) & 24.60 (1.01) & 50.64 (0.79) & 64.60 (0.75) \\
Simpson $\ell=3$ & \textbf{19.21} (0.64) & 26.42 (0.96) & 56.78 (0.66) & 71.10 (0.47) \\
\midrule
\multicolumn{5}{c}{\textbf{Kullback-Leibler divergence} ($\times 10^{2}$)} \\ \midrule
Shannon $\ell=1$ & \textbf{20.72} (2.42) & 23.72 (1.70) & 60.51 (3.14) & 1.8236 (7.55) \\
Shannon $\ell=2$ & \textbf{28.77} (5.75) & 33.04 (3.33) & 153.73 (8.38) & 423.20 (57.72) \\
Shannon $\ell=3$ & \textbf{25.02} (4.03) & 40.96 (3.73) & 226.13 (11.96) & 784.49 (160.02) \\
Simpson $\ell=1$ & \textbf{15.21} (1.71) & 15.75 (1.38) & 39.04 (2.18) & 119.83 (4.30) \\
Simpson $\ell=2$ & \textbf{26.26} (8.32) & 26.84 (5.95) & 81.47 (3.49) & 198.69 (6.90) \\
Simpson $\ell=3$ & \textbf{21.68} (7.61) & 29.71 (7.99) & 106.28 (3.40) & 265.42 (7.48) \\
\bottomrule
\end{tabular}
\caption{Distribution metrics on $\alpha$-diversities computed between synthetic data sampled under the Markov Dirichlet model and simulated data under each modeled trained, averaged over the trainings, with standard deviation.}
\label{tab:synthetic_markovdirichlet_benchmark_alpha_diversity}
\end{table}

\subsection{Metagenomics dataset experiments}
\label{app:experiments_setup_metagenomics}
\subsubsection{Model selection experiments}
\paragraph{Selection procedure} For each model, when compared to the metagenomics dataset, we draw $3000$ samples $5$ times and select the model with the best overall performances regarding the $\alpha$-diversity criteria.

\paragraph{Mean-field architectures} We try all combinations of the number of hidden layers for the latent prior and the variational approximation taking values in $\{1, 2, 3\}$. The best-performing architecture in our experiment has $1$ hidden layers in the latent prior, and $2$ hidden layers in the variational approximation parameters.

\paragraph{Backward Markov architectures} We decide on a grid of embedders summarized in Table \ref{tab:synthetic_metagenomics_backward_selection_params}, which we combine with latent prior architecture with a number of hidden layers in $\{1, 2, 3\}$. In our experiment, the best architecture is yielded by the embedding architecture E4.

\begin{table}
\centering
\begin{tabular}{lccccc}
\midrule
\textbf{Name} & \textbf{Embedding size} & \textbf{Hidden layers} & \textbf{Nb neurons} & \textbf{Parameters layers} \\
\midrule
\textbf{E1} & 16 & 2 & 32 & 2 \\
\textbf{E2} & 32 & 2 & 32 & 2 \\
\textbf{E3} & 32 & 3 & 32 & 2 \\
\textbf{E4} & 32 & 2 & 64 & 2 \\
\textbf{E5} & 32 & 3 & 64 & 2 \\
\textbf{E6} & 32 & 3 & 64 & 3 \\
\textbf{E7} & 60 & 2 & 64 & 2 \\
\textbf{E8} & 60 & 3 & 64 & 2 \\
\textbf{E9} & 60 & 3 & 64 & 3 \\
\textbf{E10} & 60 & 3 & 120 & 2 \\
\textbf{E11} & 60 & 3 & 120 & 3 \\
\midrule
\end{tabular}
\caption{Tested backward variational architectures in the Embedder in the metagenomics experiments. All embedders are GRU.}
\label{tab:synthetic_metagenomics_backward_selection_params}
\end{table}

\begin{table}
\centering
\begin{tabular}{@{}lcccc@{}}
\toprule
\multirow{1}{*}{\textbf{Alpha diversity}} & \textbf{PLN-Tree} & \textbf{PLN-Tree (MF)} & \textbf{PLN} & \textbf{SPiEC-Easi} \\
\midrule
\multicolumn{5}{c}{\textbf{Wasserstein distance} ($\times 10^{2}$)} \\ \midrule
Shannon $\ell=1$ & \textbf{1.73} (0.44) & 3.00 (0.44) & 16.49 (1.14) & 43.12 (1.57) \\
Shannon $\ell=2$ & \textbf{2.22} (0.73) & 5.70 (0.97) & 23.21 (1.64) & 57.73 (2.02) \\
Shannon $\ell=3$ & \textbf{2.29} (0.63) & 6.58 (1.02) & 23.96 (1.67) & 59.16 (2.00) \\
Shannon $\ell=4$ & \textbf{2.08} (0.62) & 20.39 (1.08) & 55.32 (2.38) & 127.11 (3.03) \\
Simpson $\ell=1$ & 0.84 (0.14) & \textbf{0.71} (0.12) & 7.18 (0.48) & 17.99 (0.71) \\
Simpson $\ell=2$ & 0.92 (0.24) & \textbf{0.73} (0.19) & 7.49 (0.57) & 19.59 (0.81) \\
Simpson $\ell=3$ & 0.91 (0.23) & \textbf{0.72} (0.19) & 7.46 (0.57) & 19.50 (0.80) \\
Simpson $\ell=4$ & \textbf{0.53} (0.13) & 2.41 (0.21) & 12.91 (0.67) & 31.62 (0.99) \\
\midrule
\multicolumn{5}{c}{\textbf{Kolmogorov Smirnov} ($\times 10^{2}$)} \\ \midrule
Shannon $\ell=1$ & \textbf{4.71} (1.44) & 8.4 (1.35) & 23.17 (1.26) & 45.26 (1.53) \\
Shannon $\ell=2$ & \textbf{3.42} (0.99) & 10.3 (1.25) & 22.14 (1.58) & 45.65 (1.63) \\
Shannon $\ell=3$ & \textbf{3.48} (0.68) & 10.66 (1.32) & 22.07 (1.47) & 45.57 (1.58) \\
Shannon $\ell=4$ & \textbf{3.64} (1.06) & 22.66 (1.3) & 36.65 (1.50) & 65.15 (1.32) \\
Simpson $\ell=1$ & 4.8 (0.93) & \textbf{4.17} (0.58) & 21.25 (1.47) & 43.30 (1.86) \\
Simpson $\ell=2$ & \textbf{4.46} (1.06) & 5.6 (1.46) & 19.64 (1.45) & 41.76 (1.94) \\
Simpson $\ell=3$ & \textbf{4.17} (1.03) & 5.7 (1.53) & 19.53 (1.42) & 41.53 (1.91) \\
Simpson $\ell=4$ & \textbf{4.09} (1.06) & 12.26 (1.46) & 32.07 (1.69) & 58.34 (1.49) \\
\midrule
\multicolumn{5}{c}{\textbf{Total variation} ($\times 10^{2}$)} \\ \midrule
Shannon $\ell=1$ & \textbf{2.34} (0.63) & 4.51 (0.75) & 10.00 (0.63) & 19.54 (0.75) \\
Shannon $\ell=2$ & \textbf{1.42} (0.34) & 3.87 (0.52) & 7.47 (0.63) & 15.38 (0.63) \\
Shannon $\ell=3$ & \textbf{1.36} (0.24) & 3.81 (0.48) & 7.19 (0.59) & 14.88 (0.62) \\
Shannon $\ell=4$ & \textbf{0.82} (0.27) & 6.3 (0.42) & 8.94 (0.38) & 15.69 (0.30) \\
Simpson $\ell=1$ & \textbf{6.71} (1.18) & 8.17 (1.8) & 27.59 (1.80) & 51.63 (2.13) \\
Simpson $\ell=2$ & \textbf{5.51} (0.89) & 7.41 (1.61) & 22.23 (1.60) & 45.53 (2.12) \\
Simpson $\ell=3$ & \textbf{5.63} (0.93) & 7.35 (1.64) & 21.83 (1.56) & 44.71 (2.05) \\
Simpson $\ell=4$ & \textbf{3.75} (1.15) & 14.88 (1.72) & 34.56 (1.58) & 59.51 (1.28) \\
\midrule
\multicolumn{5}{c}{\textbf{Kullback-Leibler divergence} ($\times 10^{2}$)} \\ \midrule
Shannon $\ell=1$ & \textbf{0.88} (0.32) & 2.32 (0.82) & 15.98 (1.33) & 48.32 (2.94) \\
Shannon $\ell=2$ & \textbf{0.86} (0.32) & 2.68 (0.71) & 15.33 (1.59) & 48.99 (3.24) \\
Shannon $\ell=3$ & \textbf{0.71} (0.29) & 2.87 (0.75) & 15.30 (1.57) & 49.10 (3.13) \\
Shannon $\ell=4$ & \textbf{0.57} (0.22) & 18.02 (4.71) & 35.01 (2.64) & 116.63 (3.97) \\
Simpson $\ell=1$ & \textbf{1.04} (0.33) & 1.54 (0.68) & 15.57 (1.40) & 45.21 (3.10) \\
Simpson $\ell=2$ & \textbf{0.96} (0.28) & 1.14 (0.50) & 13.61 (1.42) & 42.20 (3.24) \\
Simpson $\ell=3$ & \textbf{0.97} (0.32) & 1.11 (0.51) & 13.58 (1.41) & 41.81 (3.18) \\
Simpson $\ell=4$ & \textbf{0.81} (0.30) & 9.70 (3.48) & 29.66 (2.38) & 90.79 (3.80) \\
\bottomrule
\end{tabular}
\caption{Distribution metrics on $\alpha$-diversities computed between metagenomics data and simulated data under each modeled trained, averaged over the trainings, with standard deviation.}
\label{tab:metagenomics_benchmark_alpha_diversity}
\end{table}
\begin{figure}
    \centering
    \includegraphics[width=\linewidth]{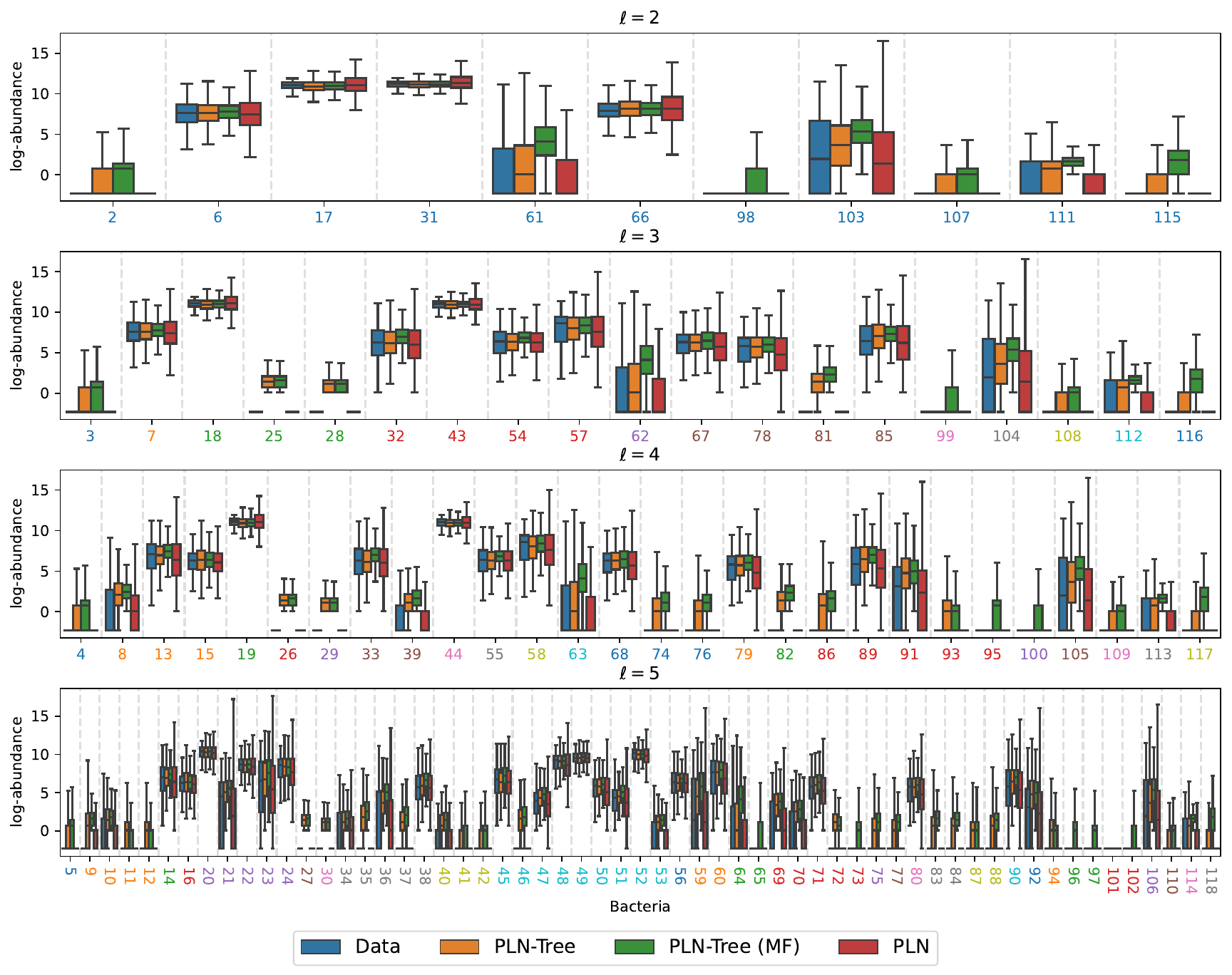}
    \caption{Boxplot of log abundances of the metagenomics dataset and generated data from several PLN-based models learned on this dataset, with $20000$ points per model. Zero abundances are artificially shifted to $10^{-1}$ to represent them in log scale. The bacteria are denoted by a unique integer on the x-axis, with colors indicating the brotherhoods in the taxonomic tree at a given depth.}
    \label{fig:metagenomics_abundances_boxplot}
\end{figure}

\subsubsection{Classification using PLN-based LP-CLR preprocessing}
\label{app:classif_appendix_preprocessing}
For the classification benchmark on the metagenomics dataset, we consider four different types of inputs for various classifiers: the raw data, the CLR transformed counts, the Proj-PLN features, the backward PLN-Tree LTP-CLR latent variables, and the corresponding mean-field variant. Using the same taxa-abundance data as in the previous experiment, we adopt the PLN-Tree architectures selected from our prior model selection on the metagenomics dataset. We then proceed to the training of each model on the entire dataset, and encode the taxa-abundance data into respective latent variables. We then select various classifiers for which the some decisive hyperparameters are tuned within each training fold using a random grid search over Table \ref{tab:classifiers_metagenomics_preprocessing_desc}, with a cross-validation performed on $20\%$ of the training set available. The unspecified hyperparameters are selected from default Scikit-Learn proposals \citep{pedregosa2011scikit}. In this experiment, we only consider the deepest layer of the input data.
\begin{table}
    \centering
    \begin{tabular}{lr}
    \midrule
    \textbf{Model} & \textbf{Parameters} \\
    \midrule
    Logistic Regression & class weight: [balanced, None], \\ & C:$[0.01, 0.05, 0.1, 0.3, 0.6, 1, 1.5, 2]$ \\ & \\
    SVC & probability: true, kernel: linear, \\& class weight: [balanced, None]  \\ & C:$[0.01, 0.05, 0.1, 0.3, 0.5, 0.8, 1.1, 1.5]$ \\ & \\
    MLP & hidden layers sizes: $[(256, 256, 256, 124), (256, 256, 256),$  \\ & $(256, 256, 124), (124, 124, 64),$ \\ &  $(64, 64, 32), (32, 32, 32), (32, 32, 16)]$ \\ & \\
    Random Forests & number of estimators: $[40, 100, 150]$,  class weight: [balanced, None] \\ & max depth: [None, $3, 5, 10, 30$], min leaf: $[1, 3]$ \\ & criterion: [gini, entropy, log loss], min split: $[2, 4]$ \\
    \bottomrule
    \end{tabular}
    \caption{Considered classifiers in the metagenomics preprocessing experiment, with hyperparameters grid based on Scikit-Learn implementation.}
    \label{tab:classifiers_metagenomics_preprocessing_desc}
\end{table}

To further illustrate the impact of the preprocessing on the classifiers' performances, we study the IBD-vs-all problem in addition to the T2D-vs-all presented in the article. The results are presented in Table \ref{tab:ibd_classif_performances}, demonstrating similar interpretations to what is observed in the T2D-vs-all problem.
\begin{table}
    \centering
    \resizebox{0.99\textwidth}{!}{
    \begin{tabular}{@{}lc|cc|ccc@{}}
    \toprule
    & \textbf{Proportions} & \textbf{CLR} & \textbf{PLN} & \textbf{Proj-PLN} & \textbf{LTP-CLR (MF)} & \textbf{LTP-CLR} \\
    \midrule
    \multicolumn{7}{c}{\textbf{Logistic Regression}} \\ \midrule
    \textbf{Balanced accuracy} & 0.502 (0.005) & 0.694 (0.04) & 0.692 (0.041) & \textbf{0.698} (0.047) & 0.636 (0.042) & 0.688 (0.05) \\
    \textbf{Precision} & 0.643 (0.071) & 0.809 (0.03) & 0.809 (0.026) & \textbf{0.821} (0.03) & 0.802 (0.025) & 0.816 (0.026) \\
    \textbf{Recall} & 0.786 (0.003) & 0.817 (0.03) & 0.812 (0.033) & \textbf{0.831} (0.028) & 0.812 (0.028) & 0.824 (0.028) \\
    \textbf{F1} & 0.692 (0.005) & 0.809 (0.028) & 0.805 (0.027) & \textbf{0.819} (0.029) & 0.786 (0.025) & 0.811 (0.028) \\
    \textbf{ROC AUC} & 0.718 (0.042) & 0.823 (0.038) & 0.813 (0.045) & 0.833 (0.038) & 0.795 (0.038) & \textbf{0.842} (0.035) \\
    \textbf{PR AUC} & 0.397 (0.059) & 0.591 (0.082) & 0.591 (0.079) & \textbf{0.631} (0.08) & 0.556 (0.055) & 0.611 (0.075) \\
    \midrule
    \multicolumn{7}{c}{\textbf{Linear SVM}} \\ \midrule
    \textbf{Balanced accuracy} & 0.502 (0.008) & 0.713 (0.037) & 0.707 (0.044) & \textbf{0.73} (0.044) & 0.589 (0.059) & 0.686 (0.058) \\
    \textbf{Precision} & 0.626 (0.034) & 0.818 (0.024) & 0.818 (0.03) & \textbf{0.84} (0.029) & 0.772 (0.073) & 0.824 (0.03) \\
    \textbf{Recall} & 0.785 (0.004) & 0.821 (0.03) & 0.825 (0.028) & \textbf{0.846} (0.03) & 0.807 (0.024) & 0.832 (0.024) \\
    \textbf{F1} & 0.692 (0.008) & 0.816 (0.027) & 0.817 (0.028) & \textbf{0.838} (0.03) & 0.759 (0.043) & 0.815 (0.032) \\
    \textbf{ROC AUC} & 0.658 (0.165) & 0.823 (0.035) & 0.826 (0.038) & \textbf{0.844 (0.032)} & 0.784 (0.037) & \textbf{0.844} (0.032)\\
    \textbf{PR AUC} & 0.371 (0.114) & 0.588 (0.072) & 0.606 (0.079) & \textbf{0.655} (0.064) & 0.551 (0.071) & 0.632 (0.069) \\
    \midrule
    \multicolumn{7}{c}{\textbf{Neural Network}} \\ \midrule
    \textbf{Balanced accuracy} & 0.695 (0.052) & 0.732 (0.041) & 0.727 (0.042) & \textbf{0.738} (0.046) & 0.664 (0.04) & 0.725 (0.038) \\
    \textbf{Precision} & 0.809 (0.038) & 0.833 (0.026) & 0.828 (0.028) & \textbf{0.839} (0.028) & 0.801 (0.028) & 0.824 (0.025) \\
    \textbf{Recall} & 0.822 (0.024) & 0.84 (0.024) & 0.835 (0.026) & \textbf{0.846} (0.024) & 0.812 (0.027) & 0.829 (0.026) \\
    \textbf{F1} & 0.812 (0.031) & 0.834 (0.025) & 0.83 (0.027) & \textbf{0.839} (0.027) & 0.797 (0.025) & 0.825 (0.025) \\
    \textbf{ROC AUC} & 0.813 (0.038) & 0.84 (0.041) & 0.83 (0.039) & \textbf{0.854} (0.036) & 0.776 (0.042) & 0.837 (0.038) \\
    \textbf{PR AUC} & 0.578 (0.078) & 0.649 (0.079) & 0.621 (0.076) & \textbf{0.664} (0.074) & 0.536 (0.077) & 0.636 (0.067) \\
    \midrule
    \multicolumn{7}{c}{\textbf{Random Forest}} \\ \midrule
    \textbf{Balanced accuracy} & \textbf{0.662} (0.057) & 0.627 (0.073) & 0.625 (0.055) & 0.615 (0.055) & 0.579 (0.053) & 0.602 (0.057) \\
    \textbf{Precision} & \textbf{0.857} (0.022) & 0.826 (0.06) & 0.83 (0.041) & 0.815 (0.064) & 0.804 (0.052) & 0.811 (0.057) \\
    \textbf{Recall} & \textbf{0.847} (0.022) & 0.832 (0.028) & 0.83 (0.02) & 0.826 (0.021) & 0.81 (0.017) & 0.82 (0.021) \\
    \textbf{F1} & \textbf{0.816} (0.036) & 0.79 (0.052) & 0.79 (0.036) & 0.784 (0.04) & 0.755 (0.037) & 0.773 (0.042) \\
    \textbf{ROC AUC} & \textbf{0.904} (0.03) & 0.876 (0.034) & 0.865 (0.041) & 0.86 (0.041) & 0.791 (0.03) & 0.848 (0.038) \\
    \textbf{PR AUC} & \textbf{0.771} (0.062) & 0.695 (0.083) & 0.688 (0.071) & 0.681 (0.063) & 0.56 (0.061) & 0.634 (0.071) \\
    \bottomrule
    \end{tabular}
    }
\caption{Classification IBD-vs-all performances for several classifiers on the metagenomics dataset using different preprocessing strategies, averaged over training, with standard deviation. We perform $50$ stratified K-folds using $80\%$ of the dataset with $50$ nested random grid search loops for hyperparameters tuning in each fold using $20\%$ of the training set for cross-validation. The dataset is restricted to the \textit{family} level of the taxonomy.}
\label{tab:ibd_classif_performances}
\end{table}

\section{Additional experiments visualisations}
\begin{figure}
    \centering
    \includegraphics[width=\linewidth]{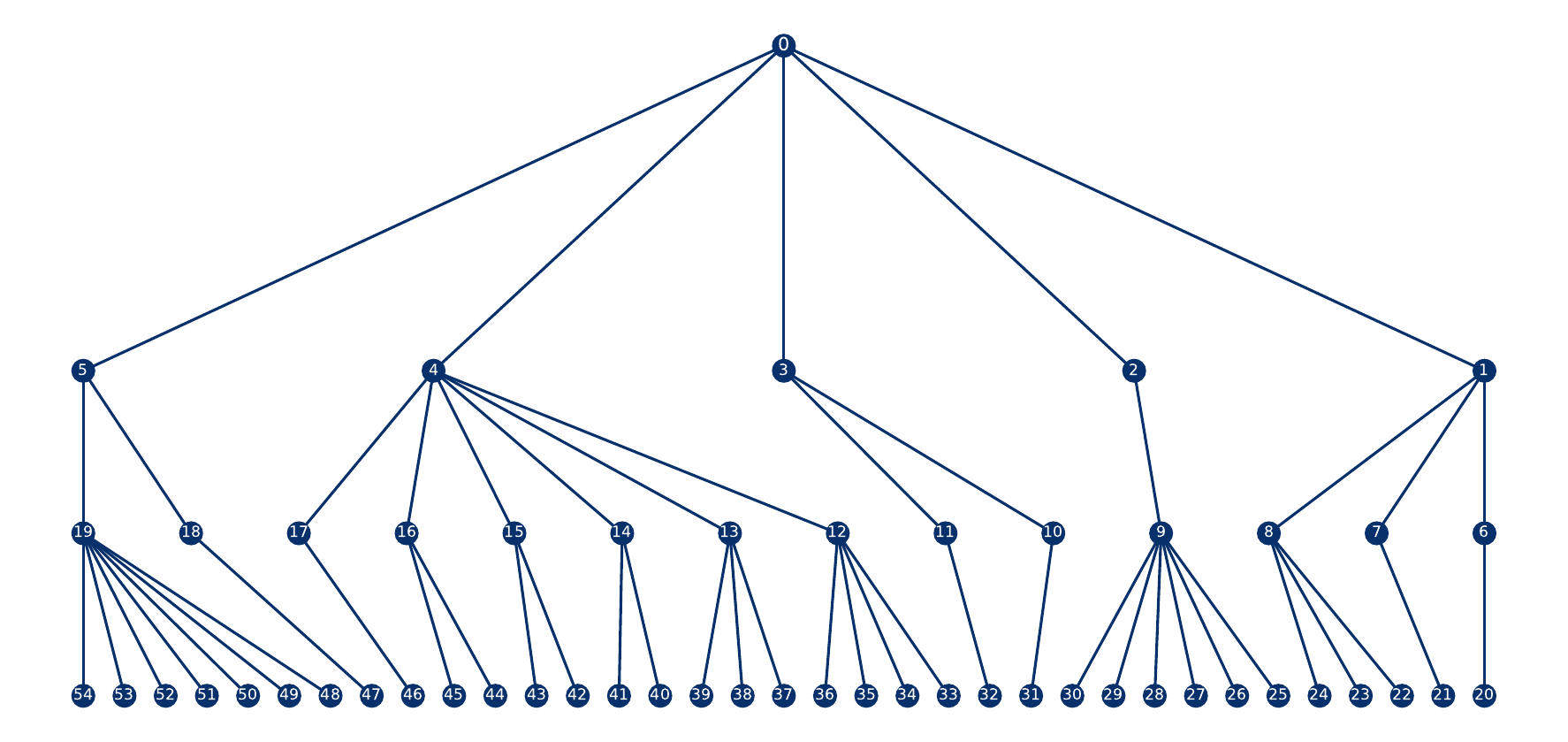}
    \caption{Graph of the tree considered in the PLN-Tree synthetic experiments.}
    \label{fig:synthetic_tree_plntree}
\end{figure}
\begin{figure}
    \centering
    \includegraphics[width=\linewidth]{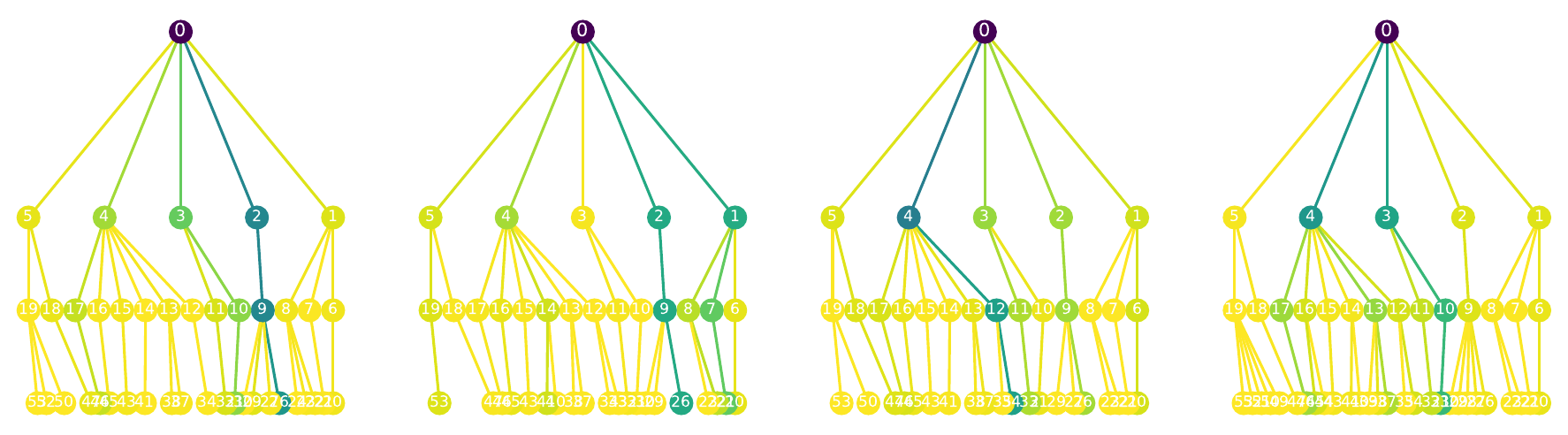}
    \caption{Synthetic hierarchical samples from the artificial dataset $(\bX, \bZ)$.}
    \label{fig:abundance_samples_synthetic_plntree}
\end{figure}
\begin{figure}
    \centering
    \includegraphics[width=\linewidth]{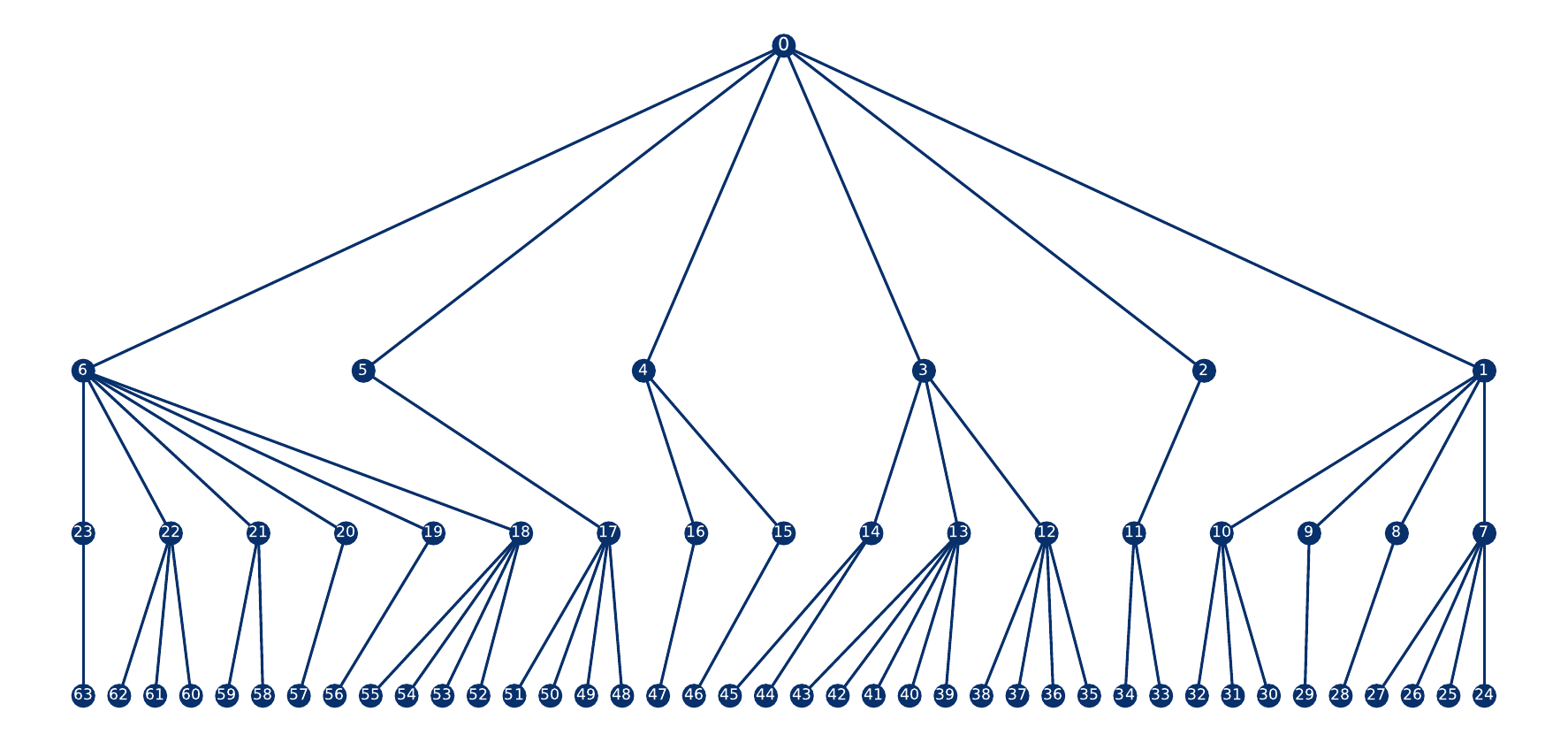}
    \caption{Graph of the tree considered in the Markov Dirichlet synthetic experiments.}
    \label{fig:synthetic_tree_markov_dirichlet}
\end{figure}
\begin{figure}
    \centering
    \includegraphics[width=\linewidth]{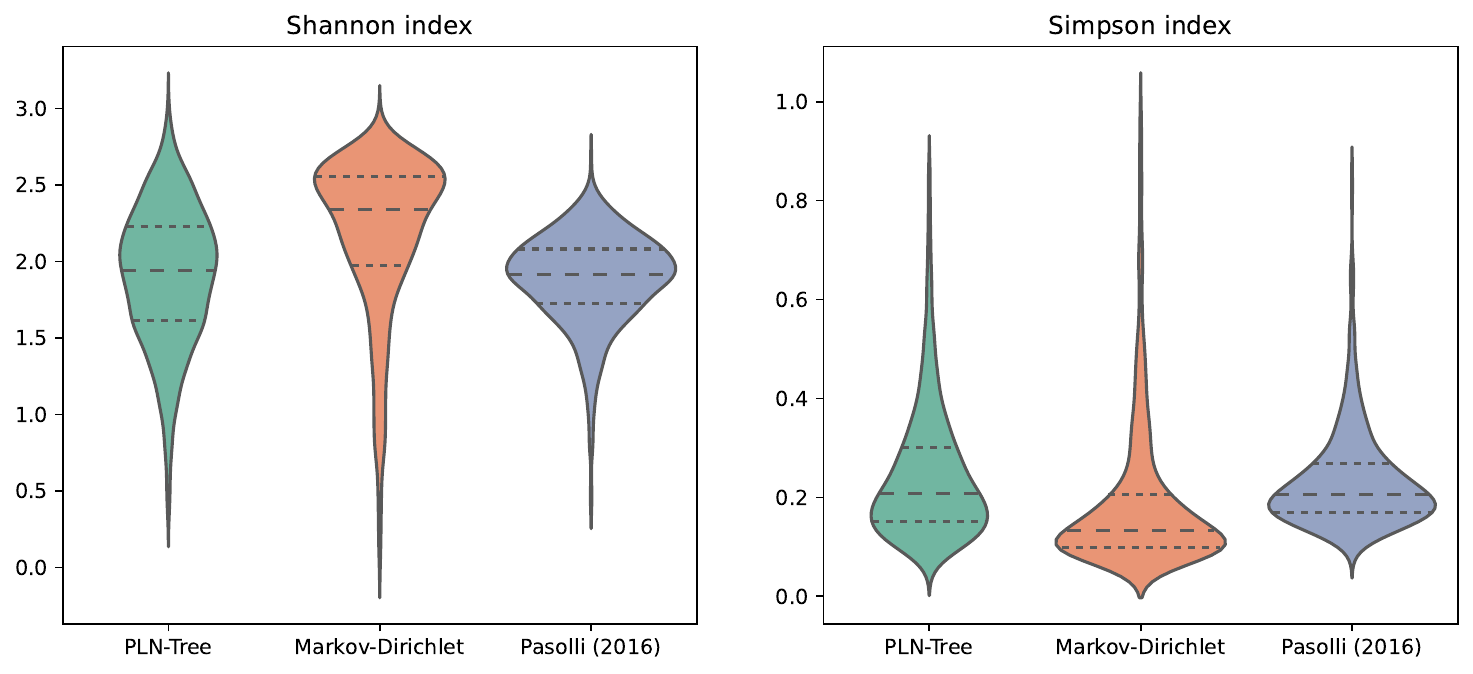}
    \caption{Alpha diversity comparison between the datasets used in each experiment. The $\alpha$-diversities are computed at the deepest level of the hierarchy considered in the respective problems. The tag Pasolli (2016) refers to the metagenomics dataset from \cite{pasolli2016machine}, PLN-Tree refers to the synthetic dataset generated using a PLN-Tree model, Markov-Dirichlet refers to the dataset generated using a Markov-Dirichlet dynamic.}
    \label{fig:comparison_alpha_diversity_datasets}
\end{figure}




\end{appendices}



\end{document}